\newcommand{\xs}{x^*}
\newcommand{\bx}{\bar{x}}
\newcommand{\bP}{\bar{\Psi}}
\newcommand{\A}{\ensuremath{\mathbb{A}}\xspace}
\newcommand{\B}{\ensuremath{\mathbb{B}}\xspace}
\newcommand{\R}{\mathbb{R}}
\newcommand{\rd}{\mathrm{d}}
\newcommand{\bo}[1]{\mathrm{O}\left(#1\right)}
\newcommand{\lo}[1]{\mathrm{o}\left(#1\right)}
\newcommand{\F}{\mathcal{F}}
\newcommand{\G}{\mathcal{G}}
\newcommand{\C}{\mathcal{C}}
\newcommand{\I}{\mathcal{I}}
\newcommand{\bydef}{:=}
\newcommand{\eps}{\varepsilon}
\newcommand{\e}{\mathrm{e}}
\newcommand{\N}{\mathcal{N}}
\newcommand{\RR}{\mathcal{R}}
\newcommand{\HH}{\mathcal{H}}
\newcommand{\ord}{\mathrm{ord}}
\newcommand{\QQ}{\mathcal{Q}}
\newcommand{\D}{\mathcal{D}}
\newcommand{\tK}{\tilde{K}}
\newcommand{\tx}{\tilde{x}}
\newcommand{\st}{s^*}
\newcommand{\hs}{\hat{s}}
\newcommand{\bs}{\bar{s}}
\newcommand{\tC}{\tilde{C}}
\newcommand{\HHH}{\mathfrak{H}}
\newcommand{\RRR}{\mathfrak{R}}
\newtheorem{remark}{Remark}
\newtheorem{definition}{Definition}
\newtheorem{theorem}{Theorem}
\newtheorem{proposition}{Proposition}
\newtheorem*{lemma}{Lemma}
\begin{document}

\title{Fixation in large populations: a continuous view of a discrete problem}

\date{\today}

%
%

\author{Fabio A. C. C. Chalub}

\address{Departamento de Matem\'atica and Centro de Matem\'atica e Aplica\c c\~oes, Universidade Nova de Lisboa, Quinta da Torre, 2829-516, Caparica, Portugal.}

\author{Max O. Souza}
\address{Departamento de Matem\'atica Aplicada, Universidade Federal Fluminense, R. M\'ario Santos Braga, s/n, 22240-920, Niter\'oi, RJ, Brasil.}
\email{msouza@mat.uff.br}

\thanks{
FACCC was partially supported by the FCT/Portugal through the project PEst-OE/MAT/UI0297/2014 and by an Investigador FCT grant. MOS was partially supported by CNPq under grants \#~309616/2009-3 and  \#~308113/2012-8.
The authors thank two anonymous reviewers and the associate editor for several remarks that greatly improved the paper.
}

\subjclass[2010]{92D15; 60J20; 65D30; 41A60}

\keywords{
Fixation probability; Birth-Death Processes; Evolutionary Dynamics; Asymptotic Approximations
}

%
%

\begin{abstract}
We study fixation in large, but finite, populations with two types, and dynamics governed  by birth-death processes. By considering a restricted class of  such processes, which includes many of the   evolutionary processes usually discussed in the literature,  we derive a continuous approximation for the probability of fixation that is valid beyond the weak-selection (WS)  limit. Indeed, in the derivation three regimes naturally appear: selection-driven, balanced, and quasi-neutral --- the latter two require WS, while the former can appear with or without WS. From the continuous approximations, we then   obtain asymptotic approximations for evolutionary dynamics with at most one equilibrium, in the selection-driven regime, that does not preclude a weak-selection regime. As an application, we study the fixation pattern when the infinite population limit has an interior Evolutionary Stable Strategy (ESS): (i) we show that the fixation pattern for the Hawk and Dove game satisfies what we term  the one-half law: if the Evolutionary Stable Strategy (ESS) is outside a small interval around $\sfrac{1}{2}$, the fixation is of dominance type; (ii) we also show that, outside of the weak-selection regime, the long-term dynamics of large populations can have very little resemblance to the infinite population case; in addition, we also present results for the case of two equilibria, and show that even when there is weak-selection the long-term dynamics can be dramatically different from the one predicted by the Replicator Dynamics. Finally, we present continuous restatements  valid for large populations of two classical concepts naturally defined in the discrete case: (i) the definition of an $\textsc{ESS}_N$ strategy; (ii) the definition of a risk-dominant strategy. We then present two applications of these restatements: (i) we obtain an asymptotic definition valid in the quasi-neutral regime that recovers both the one-third law under linear fitness and the generalised one-third law for $d$-player games; (ii) we extend  the ideas behind the (generalised) one-third law  outside the quasi-neutral regime and, as a generalisation, we introduce the concept of critical-frequency; (iii) we recover the classification of risk-dominant strategies for $d$-player games.
\end{abstract}

\maketitle

\section{Introduction}

\subsection{Background}

One of the most natural questions addressed in the study of evolution is the long term behaviour of types distribution.  In an number of settings, it is known that only one type will be present at sufficiently long times. This is known as fixation, and its likelihood  is known as the fixation probability of a given type. This  has been studied since the works by  \citet{Wright_1931}, \citet{Fisher_1930}, \citet{Moran} and \citet{Kimura} for the case of neutral evolution, and frequency independent fitness,  both in discrete and continuous settings~\citep{Nowak:06,Ewens_2004}.

An alternative approach was also taken by game-theorists who, in some appropriate sense, were developing a mathematical description of the Darwinian theory of evolution~\citep{Smith_1982}. Among the models studied, possibly the most popular is the Replicator Dynamics, which assumes an  infinite and well-mixed population~\citep{TaylorJonker_1978}.  In this model, unlike in the finite population case, it is possible to find mixed stable populations, and this leads to the development of what is currently known as an Evolutionary Stable Strategy (ESS)~\citep{HofbauerSigmund1998}. 

Historically, the blending of these approaches began with the use of diffusion approximations to obtain continuous frequency-independent models that are  valid in the large population limit~\citep{Kimura,Feller1951}. See~\citet{Gillespie_Feldman} for a critical view on the use of diffusion approximations. Later on, we have the formulation of evolutionary dynamics in finite populations with frequency dependent fitness~\citep{TaylorFudenbergSasakiNowak,NowakSasakiTaylorFudenberg}; see also~\citet{Nowak:06}. A notable exception of this dichotomy is \citet{EthierKurtz}, where the formulation of a frequency dependent version of the Wright-Fisher process, together with its continuous limit under what is presently known as the weak-selection regime, cf.~~\citet{Ewens_2004},  are already discussed.

 More recently, a number of different studies attempted to blend the infinite population ideas of evolutionary game-theory to the randomness of stochastic finite population models using different approaches, both in terms of the modelling assumptions  and of mathematical rigour~\citep{ChalubSouza09b,ChalubSouza_JMB,Traulsen_etal_2005,Lessard:Ladret:2007,Lessard_2005, McKaneWaxman07, Waxman2011,ChampagnatFerriereMeleard_TPB2006,ChampagnatFerriereMeleard_SM2008,Traulsenetal2006,Fournier_Meleard_AAP2004,TraulsenClaussenHauert_PRE2012}.
 
Fixation probability  in finite  large populations, under the assumption of a mean-field approximation has been studied by  \citet{Kimura} and by  \citet{Gillespie:1981}. More recently, without such assumption, it has also    been studied   by  
 \citet{AntalScheuring_2006} with a focus on the invasion coefficients. It was also studied in \citet{Traulsenetal2006,Traulsenetal2006b} as part of more general studies on stochastic invasion and fixation, and evolutionary stability and also under the weak-selection regime by \cite{AltrockTraulsen2009a}, who studied the Fermi process and the frequency dependent Moran process,  but with a focus on fixation time rather than fixation probabilities. Further studies are \citet{AssafMobilia2010}  which tackled the study of fixation for coordination and co-existence situations using large-deviation theory, although for the latter it had more attention to approximation of quasi-stationary probability distribution, and  a  study in on random fluctuations about the meta-stable state  in    \citet{MobiliaAssaf2010}. All these studies tackle the problem directly from the discrete description. For a variation on the Fermi process that might  become deterministic at finite population size, see \citet{AltrockTraulsen2009b}.
 
The aim of this work is to further contribute to these studies by obtaining results regarding the fixation probability  that can make use of continuous approximations that are valid for large populations, but that do not require an infinite population limit to be valid. In particular, no weak-selection assumption will be necessary for deriving these approximations. This procedure leads to results that highlight how large, but finite, populations  can have fixation patterns that are very peculiar to this regime, while still having some connection with both finite and infinite frameworks.

\begin{figure}[htbp]
\begin{center}
	\includegraphics[scale=1]{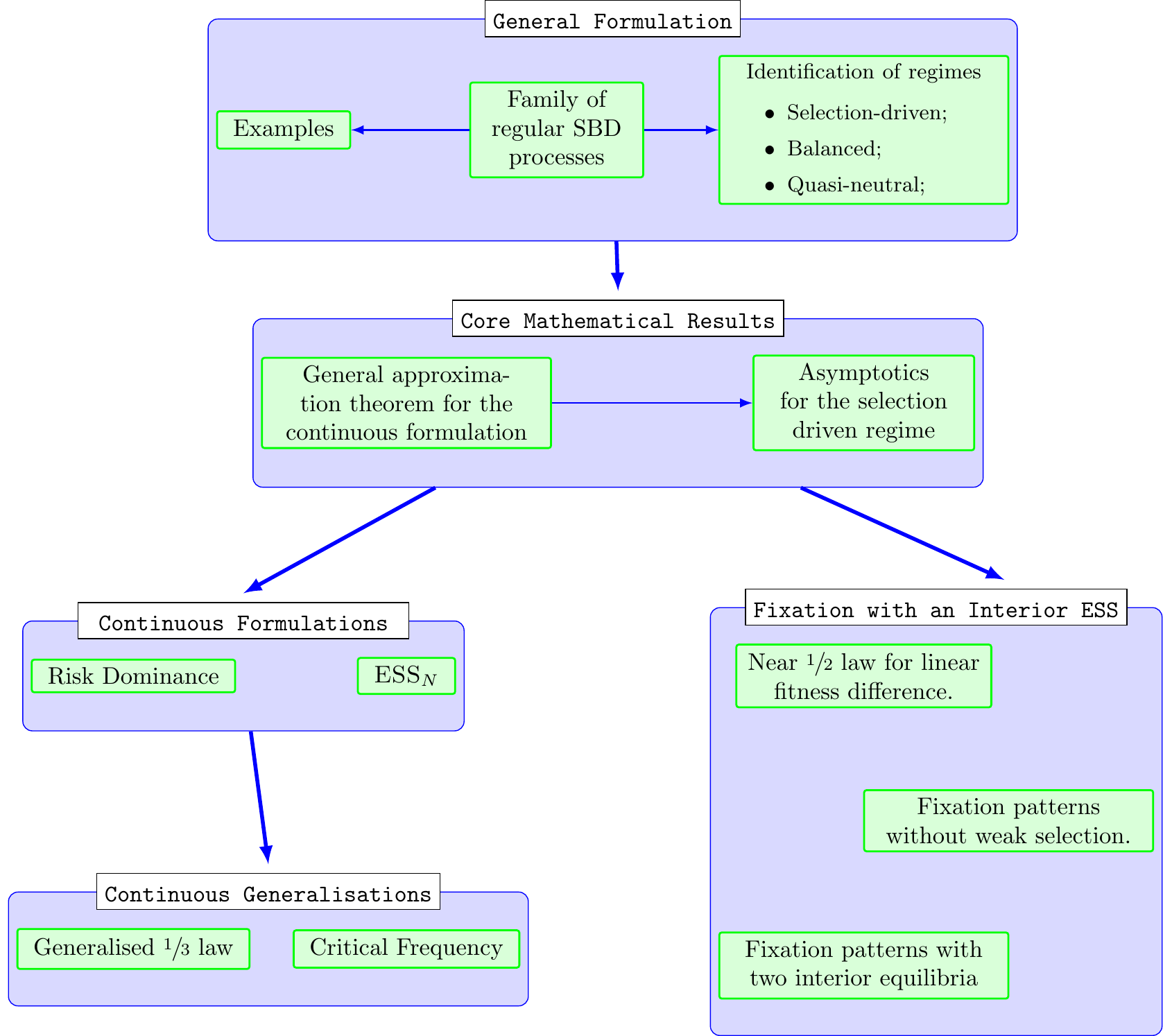}
\end{center}
\caption{A graphical roadmap of the results presented here. They are divided in four categories with the logical connection also displayed.}
\label{fig:rdmp}
\end{figure}

\subsection{Outline}

In Section~\ref{sec:gen_for}, we provide the general formulation for the processes we shall address in this work. We term such processes Suitable Birth Death (SBD) processes. In Section~\ref{sec:setup}, we  define what we call a regular family of SBD processes. For such processes, we justify the use of a diffusive approximation, valid for large,  but finite, size $N$. Such an approximation does not require the weak-selection limit to be derived, but it relies instead on the smoothness of the transition probabilities which, in turn, depends on the corresponding fitness functions. The qualitative nature of the fixation probabilities will depend not only on the functional form of fitness functions themselves, but also on what we denote the intensity of selection. Depending on the behaviour of the intensity of selection as $N$ goes to infinity, we identify three natural regimes: selection-driven, balanced, and  quasi-neutral; the last two can only appear in the so-called weak-selection regime. 

We then proceed to study the fixation probabilities in the selection-driven regime in Section~\ref{sec:sdr_asymp}, with the assumption that there is at most one neutral point (a point in which both fitness are equal), and then we necessarily have either dominance, coexistence or coordination. For the dominance case we recover a number of formulae in the literature in a unified framework. For coexistence we obtain a novel formula. For coordination, we obtain a generalisation, in terms of range of validity, of previously obtained formulae \citet{MobiliaAssaf2010,AssafMobilia2010}.

In section~\ref{sec:fde}, we show using the coexistence formulae derived in the previous section, that the fixation in the presence of a deterministic ESS can be quite distinguished from what would be expected on grounds of a deterministic approach. In the case of weak-selection, and linear log-differences at leading order, we obtain what we call the near one-half law: if the deterministic ESS is located exactly at one-half, then the fixation probability  is essentially constant at one-half. However, if the ESS equilibrium is outside of a small region --- in a precise sense --- around one-half then one has dominance by one of the types. Moreover, in the case of absence of weak-selection, we show examples of how equivalent games can lead to completely different fixation patterns. In particular, one can have a  frequency possibly close to one of a particular type at the deterministic ESS, and nevertheless one might also have  almost certain extinction of this type. In addition, we also briefly discuss fixation in the presence of two interior equilibria. The results here are also new and show that, in the presence of an interior ESS, the long-term dynamics of large, but finite population, can be markedly distinct  from the infinite population case, when differential equations are used. 

In section~\ref{sec:ess_lp}, we discuss how to use the continuous approximation derived in section~\ref{sec:setup} to obtain a continuous formulations valid for large populations of the celebrated $\mathrm{ESS}_N$ concept.  We then derive an asymptotic approximation in the quasi-neutral regime, that can be used to obtain a simplified condition for the existence of an $\mathrm{ESS}_N$ in a finite, large population. As special cases of such condition, we obtain the celebrated one-third law when the leading-order log-differences of the fitness is linear~\citep{NowakSasakiTaylorFudenberg} and the generalised one-third law for $d$-player games \citep{Gokhale:Traulsen:2010,Lessard:2011}. We also obtain a continuous formulation of a risk-dominant strategy and recover a result of \citet{Kurokawa:Ihara:2009}. We then study the existence of $\textsc{ESS}_N$ in more general contexts, and obtain what we call the critical frequency curves for the coordination case. In the case of selection driven regimes, we show how the asymptotic formulae can be used to simplify the computation of such curves.

Finally, a discussion of the results is presented in section~\ref{sec:discuss}. A graphical roadmap of the results is displayed in Figure~\ref{fig:rdmp}.

\section{General formulation}

\label{sec:gen_for}

We shall consider birth-death processes in a population of size $N$ with two types $\A$ and $\B$.
For $a,b$ such that $b-a\in N^{-1}\mathbb{N}\cup\{0\}$, we define
\[
 [a,b]_N\bydef\left\{a,a+\frac{1}{N},a+\frac{2}{N},\dots,b\right\}\ .
\]
In particular, we denote the fraction of type $\A$ individuals in the population by 
\[
x \in [0,1]_N=\left\{0,\frac{1}{N},\frac{2}{N},\ldots,1\right\}.
\]
The transition probabilities $1\ge T^+_N(x),T^0_N(x),T^-_N(x)\ge0$ indicate the probability that in a population with $xN$ type \A individuals, the next generation has $xN+1$ ($xN$, $xN-1$, respectively) type \A individuals and are given by
\begin{align*}
& T_N^\pm(x)=x(1-x)\Delta^\pm\left(\Psi_N^\A,\Psi_N^\B\right).\\
& T_N^0(x)=1-T^+_N(x)-T^-_N(x)\ .
\end{align*}
The factors $\Delta^\pm:\R^+\times\R^+\to\R^+$ model the natural selection.
Fitnesses are given by $\Psi_N^\A,\Psi_N^\B:[0,1]\to\R^+.$ By extension, we also write $\Delta^\pm_N(x)=\Delta^\pm\left(\Psi_N^\A(x),\Psi_N^\B(x)\right)$, where the $+$ ($-$) symbol indicates an increase (decrease, respectively) of the quantity of the \A individuals.

A birth-death process with these properties will be denoted a Suitable Birth-Death process or a SBD process for short. A similar class of processes has been studied by \citet{AssafMobilia2010,MobiliaAssaf2010}.

Among the many models contemplated in this set-up, the most common ones  are:
\begin{description}
\item[Frequency dependent Moran process~\citep{NowakSasakiTaylorFudenberg}]
\[
\Delta^+\left(\Psi_N^\A,\Psi_N^\B\right)=\frac{\Psi_N^\A}{\bP_{N}},\quad \Delta^-\left(\Psi_N^\A,\Psi_N^\B\right)=\frac{\Psi_N^\B}{\bP_{N}},\quad \bP_{N}(x)=x\Psi_N^\A(x)+(1-x)\Psi_N^\B(x).
\]
\item[Linear Moran process~\citep{TraulsenClaussenHauert_PRE2006}]
\[
\Delta^+\left(\Psi_N^\A,\Psi_N^\B\right)=\frac{1}{2}\left[1+\Psi_N^\A-\bP_{N}\right],\quad \Delta^-\left(\Psi_N^\A,\Psi_N^\B\right)=\frac{1}{2}\left[1+\Psi_N^\B-\bP_{N}\right]
\]
\item[Local update rule~\citep{TraulsenClaussenHauert_PRE2006}]
\[
\Delta^+\left(\Psi_N^\A,\Psi_N^\B\right)=\frac{1}{2}\left[1+\Psi_N^\A-\Psi_N^\B\right],\quad \Delta^-\left(\Psi_N^\A,\Psi_N^\B\right)=\frac{1}{2}\left[1+\Psi_N^\B-\Psi_N^\A\right]
\]
\item[Fermi process~\citep{Szabo_Hauert,AltrockTraulsen2009a}]
\[
\Delta^+\left(\Psi_N^\A,\Psi_N^\B\right)=\left(1+\exp\left(\Psi_N^\B-\Psi_N^\A\right)\right)^{-1},\quad 
\Delta^+\left(\Psi_N^\A,\Psi_N^\B\right)=\left(1+\exp\left(\Psi_N^\A-\Psi_N^\B\right)\right)^{-1}
\]
\end{description}

We also will make extensively use  of the $\mathcal{O}$ and $o$ notations. Recall that
\[
f(x)=\bo{g_1(x),\ldots,g_n(x)},\quad x\to x_0,
\]
if there are positive numbers $C$ and $\delta$ such that
\[
|f(x)|\leq C\max_{i=1,\ldots,n}\{|g_1(x)|,\ldots,|g_n(x)|\},\quad |x-x_0|<\delta.
\]
We also say that
\[
f(x)=o(g(x)), \quad x\to x_0,
\]
if
\[
\lim_{x\to x_0}\frac{f(x)}{g(x)}=0,
\]
provided that $g(x)\not=0$, when $x$ is close to $x_0$, but $x\not=x_0$.

Finally, we will say that
\[
f(x)=\ord(g(x)),\quad x\to x_0
\]
if there are constants $C_1,C_2>0$ such that
\[
C_1|g(x)|\leq f(x)\leq C_2|g(x)|,\quad x\to x_0.
\]
\section{Approximations for the fixation probabilities in large populations}

\label{sec:setup}

\subsection{A general continuous approximation}

We consider SBD processes, with fitness functions given by $\Psi^{\A}_N,\Psi^{\B}_N:[0,1]\to\R^+$.  The fixation probability is then  given by~\citep{AntalScheuring_2006}
\begin{equation}
\Phi_N(x)\bydef c^{-1}_N\sum_{s\in[\sfrac{1}{N},x]_N}\prod_{r\in[\sfrac{1}{N},s-\sfrac{1}{N}]_N}\frac{\Delta^{-}_N(r)}{\Delta^+_N(r)},
\label{eqn:fix_disc}
\end{equation}
with $c_N$ chosen such that $\Phi_N(1)=1$.

As observed above, a number of approximations for large $N$ to equation \eqref{eqn:fix_disc} have been obtained previously, in different regimes.  It turns out that we can obtain such an approximation, that is valid for  a number of different regimes, at the expense of requiring some extra regularity in the log-difference of the fitness part of the process. In order to this, we introduce a number of definitions. 
\begin{definition}[Generalised log relative fitness]
We define the \textit{generalised log difference of fitness} as
\[
\Theta_N(x)\bydef\log\left(\frac{\Delta^+_N(x)}{\Delta^-_N(x)}\right).
\]
Assume that
\[
\lim_{N\to\infty}\|\Theta_N\|_\infty=\zeta.
\]
If $\zeta=0$, we shall say that the evolutionary dynamics satisfies weak-selection, and if $0<\zeta\ll1$, we shall say that it satisfies moderate selection.

\end{definition}
\begin{remark}
Weak selection means that both types perform nearly as well in large population. Nevertheless, a precise quantification of ``nearly'' is needed in order to determine the resulting dynamics. In particular, different scalings for such decay can lead to very different dynamics. 
For example, if $\Psi^{(\A,\B)}(x)=1+w(N)\varphi^{(\A,\B)}_N(x)$, with $w(N)\ll1$ and $0<\varepsilon<\|\varphi^{(\A)}_N-\varphi^{(\B)}_N\|_\infty<M$ then $\lim_{N\to\infty}\|\Theta_N\|_\infty=0$ if and only if $\lim_{N\to\infty}w(N)=0$. This is consistent with the classical weak-selection assumption; see~\citet{Nowak:06}. Note that the use of the supremum norm indicates that \A and \B perform nearly as well in all possible scenarios.
See also \citet{ChalubSouza09b,ChalubSouza_JMB} for discussion on the different possible scalings.
\end{remark}

\begin{definition}[Formal infinite population limit]
\label{def:taf}
We say that a family, indexed by population size, of frequency dependent Suitable Birth-Death processes with fitness functions $\Psi^{\A,\B}_N$ has a formal infinite population limit, if
\begin{enumerate}
\item There is $M>0$, such that $0<\|\Theta_N\|_\infty<M$;
\item\label{def:frm_b} There exists $\theta\in C^0([0,1])$,    with $\|\theta\|_\infty=1$ such that
\[
\lim_{N\to\infty} \epsilon_N=0,\quad \epsilon_N=\left\|\frac{\Theta_N}{\|\Theta_N\|_\infty}-\theta\right\|_\infty;
\]
\item $\theta$ has finitely many zeros.
\end{enumerate}

In this case, by extension, we also say that $\theta$ is the formal infinite population limit of $\Theta_N$, and we define the fitness potential as 
\[
\F(s)=-\int_0^s\theta(r)\,\rd r.
\]
We shall also say  that $\F$ is an interior potential if its global maximum is only attained at the interior; otherwise, we will say that it is a boundary potential.
\end{definition}

\begin{definition}[Family of regular SBD]
\label{def:frm}
Consider a family, indexed by population size, of frequency dependent Suitable Birth-Death processes with fitness functions $\Psi^{\A,\B}_N$.
We shall say that such a family  is regular, if 
\begin{enumerate}
\item $\Theta_N$ is $C^1$and it  has a formal infinite population limit $\theta\in C^2([0,1])$.
\item Let $\epsilon_N$ be as in definition~\ref{def:taf}, and 
\[
\kappa_N^{-1}=N\|\Theta_N\|_\infty.
\]
If
\[
\lim_{N\to\infty}\kappa^{-1}_N=\infty,
\]
then we also require that
\[
\lim_{N\to\infty}\kappa^{-1}_N\epsilon_N=0.
\]
\end{enumerate}
 \end{definition}
\begin{remark}
Definition~\ref{def:frm} includes weak-selection, size-independent fitness functions among many other settings, provided that the underlying fitness functions are sufficient regular, which turns out to be satisfied by most applications and examples in the literature. Notice also that this contrasts with less stringent assumptions used to justify a diffusion limit---cf.~\citet{ChalubSouza09b,ChalubSouza_JMB}
\end{remark}

For a family of regular SBD processes, we can approximate equation \eqref{eqn:fix_disc} as follows:
\begin{theorem}
\label{thm:cont_approx}
Assume that we have a regular family of SBD processes, and assume further that the formal infinite population limit, $\theta$, does not vanish at the boundaries. 

Then, for sufficient large $N$,  the fixation probability can be approximated as follows:
\begin{equation}
\Phi_N(x)=\phi_N(x)+\bo{\kappa^{-1}_N\epsilon_N,\kappa_N\xi_N^2,\kappa_N^{1-b}\xi_N^2},
\label{eqn:fit_is}
\end{equation}
where $\xi_N=\|\Theta_N\|_\infty$, $b=1$ if $\F$ is a boundary potential, and $b=0$ otherwise, and
\begin{equation}
\label{eqn:asymp_frm}
 \phi_N(x)=d^{-1}_N\int_0^x\exp\left(\kappa_N^{-1}\F(s)\right)\, \rd s,\quad d_N=\int_0^1\exp\left(\kappa_N^{-1}\F(s)\right)\, \rd s\ . 
\end{equation}
Furthermore, the left hand side in Equation~\eqref{eqn:fit_is} is exponentially small if, and only if, both terms in the right hand side of \eqref{eqn:fit_is} are exponentially small.  In addition, if $\kappa_N^{-1}$ has a limit when $N\to\infty$, then the approximation can be made uniform:
\begin{equation}
\label{eqn:fit_is_un}
\Phi_N(x)=\phi_N(x)\left[1+\bo{\kappa^{-1}_N\epsilon_N,\kappa_N\xi_N^2,N^{-1}}\right],\quad x\in [\sfrac{1}{N},1]_N.
\end{equation}
Finally, let $\bx\in [\sfrac{1}{N},1]_N$ be the smallest frequency such that $\phi_N(x)\geq\sfrac{1}{N}$. Then, provided that either $\F$ is an interior potential, or that $\F$ is a boundary potential, and $\kappa_N^{-1}=\bo{N^\alpha}$, with $\alpha<\sfrac{1}{2}$, we have the uniform approximation 
\begin{equation}
\label{eqn:fit_is_nb}
\Phi_N(x)=\phi_N(x)\left[1+\bo{\kappa^{-1}_N\epsilon_N,\kappa_N\xi_N^2,\kappa_N^{1-b}\xi_N}\right],\quad x\in [\bx,1]_N.
\end{equation}
\end{theorem}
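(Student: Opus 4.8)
The plan is to work directly with the exact summation formula \eqref{eqn:fix_disc}, writing $\Phi_N(x)=\left(\sum_{s\in[1/N,1]_N}P_N(s)\right)^{-1}\sum_{s\in[1/N,x]_N}P_N(s)$, where $P_N(s)=\prod_{r\in[1/N,s-1/N]_N}\Delta^{-}_{N}(r)/\Delta^{+}_{N}(r)=\exp(-S_N(s))$ and $S_N(s)=\sum_{r\in[1/N,s-1/N]_N}\Theta_N(r)$. Since $\phi_N(x)$ in \eqref{eqn:asymp_frm} has exactly the same normalized-primitive structure, with the continuous weight $\psi_N(s)=\exp(\kappa_N^{-1}\F(s))$ in place of $P_N(s)$ and an integral in place of the sum, the whole theorem reduces to two comparisons: (i) the discrete weight $P_N(s)$ against $\psi_N(s)$, and (ii) the Riemann sum of $\psi_N$ against its integral. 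I would carry these out separately and then recombine them in the normalized ratio.

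For step (i) I first split $\Theta_N=\xi_N\theta+\xi_N(\Theta_N/\xi_N-\theta)$. Summing the second piece over the $O(Ns)$ grid points contributes at most $\xi_N\cdot N\epsilon_N=\kappa_N^{-1}\epsilon_N$ to $S_N(s)$, which is the first error term. For the first piece I would exploit that $\theta\in C^2$ (guaranteed by regularity) to apply Euler--Maclaurin: $\xi_N\sum_r\theta(r)=N\xi_N\int_0^s\theta+(\text{boundary terms of size }O(\xi_N))+O(\xi_N/N)$. Recalling $\kappa_N^{-1}=N\xi_N$ and $\F(s)=-\int_0^s\theta$, the leading term reproduces $\kappa_N^{-1}\F(s)$, while the trapezoidal remainder is $O(\xi_N/N)=O(\kappa_N\xi_N^2)$, the second error term. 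The $s$-independent boundary contribution cancels in the normalized ratio, whereas the $s$-dependent half-endpoint term $\tfrac{1}{2}\xi_N\theta(s)$ must be carried along; this is the point at which the distinction between interior and boundary potentials enters.

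Step (ii), the sum-to-integral conversion together with the control of the leftover factor $\exp(\tfrac{1}{2}\xi_N\theta(s))$, I would handle by Laplace's method, splitting into the two regimes. When $\kappa_N^{-1}$ is bounded there is no concentration and a direct Riemann-sum estimate yields the uniform statement \eqref{eqn:fit_is_un}. When $\kappa_N^{-1}\to\infty$ (selection-driven) the weight $\psi_N$ concentrates near the global maximum of $\F$. If $\F$ is an interior potential the maximizer $s^*$ satisfies $\theta(s^*)=\F'(s^*)=0$, so both the first-order Euler--Maclaurin boundary term and the residual factor $\exp(\tfrac{1}{2}\xi_N\theta(s))$ are negligible in the dominant region, and the error is governed by the curvature term $\kappa_N^{-1}|\F''(s^*)|N^{-2}=O(\kappa_N\xi_N^2)$; this gives $b=0$. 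If $\F$ is a boundary potential the mass sits in a one-sided layer at an endpoint where $\theta\neq 0$, the first-order terms survive, and careful bookkeeping of how the product-discretization half-endpoint term combines with the sum-discretization half-endpoint term yields the residual $O(\xi_N^2)=O(\kappa_N^{1-b}\xi_N^2)$ with $b=1$. The ``exponentially small if and only if'' statement, and the upgrade from the additive bound \eqref{eqn:fit_is} to the multiplicative bounds \eqref{eqn:fit_is_un} and \eqref{eqn:fit_is_nb}, then follow by dividing through by $\phi_N(x)$ once one lower-bounds $\phi_N$: this is immediate where $\phi_N$ is not exponentially small, which for interior potentials is everywhere, and for boundary potentials is exactly the range $x\geq\bx$ where $\phi_N(x)\geq 1/N$ --- hence the restriction in \eqref{eqn:fit_is_nb}.

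The main obstacle is precisely this uniform control in the selection-driven regime: near the repelling boundary $\phi_N(x)$ is exponentially small, so an additive bound cannot be turned into a multiplicative one without showing that the true $\Phi_N$ decays at the same exponential rate as $\phi_N$. Establishing this requires a two-sided Laplace estimate that tracks the exponential rate uniformly, together with a comparison of the discrete and continuous concentration widths --- the condition $\alpha<\tfrac{1}{2}$ in the boundary case is exactly what guarantees that enough grid points of spacing $1/N$ fall inside the boundary layer of width $O(\kappa_N)$ for the discrete sum to resolve the continuous integral to the stated precision. Keeping all boundary terms from the two independent Euler--Maclaurin steps mutually consistent, so that their $O(\xi_N)$ contributions genuinely cancel down to $O(\xi_N^2)$ rather than merely $O(\xi_N)$, is the most delicate part of the bookkeeping.
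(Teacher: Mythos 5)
Your proposal is correct and follows essentially the same route as the paper's proof: the same decomposition of $\sum_r\Theta_N(r)$ into the $\kappa_N^{-1}\epsilon_N$ piece plus a quadrature approximation of $\kappa_N^{-1}\F$, the same second sum-to-integral step controlled by Laplace asymptotics split by interior versus boundary potential, and the same passage to multiplicative bounds by lower-bounding $\phi_N$ (whence the restriction $x\geq\bx$ and the condition $\kappa_N^{-1}=\bo{N^\alpha}$, $\alpha<\sfrac{1}{2}$). The only cosmetic difference is that you track the $\bo{\xi_N}$ half-endpoint corrections via Euler--Maclaurin, while the paper absorbs them by interpreting the sums as midpoint-rule Riemann sums over intervals shifted by $\delta_N=\sfrac{1}{2N}$, which is the same second-order quadrature device; the ``main obstacle'' you identify is exactly the content of the paper's Lemma bounding $R_N(x)$.
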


The proof of this result is given in Appendix~\ref{ap:deriv:fit_is}.

\begin{remark}
We point out that the approximation \eqref{eqn:asymp_frm} agrees with the exact solution on the frequencies for which the fixation probability is negligible. Finally, if we have an interior potential or if $\kappa_N^{-1}$ does not grow too fast in the case of a boundary potential, then \eqref{eqn:asymp_frm} can correctly identify fixation probabilities that are close to neutral.

Observe also that the bound in \eqref{eqn:fit_is_un} is a relative one rather than an absolute one. Namely,  we can rewrite \eqref{eqn:fit_is_un}  as
\[
\left|\frac{\Phi_N(x)-\phi_N(x)}{\phi_N(x)}\right|=\bo{\kappa^{-1}_N\epsilon_N,\frac{\kappa^{-1}_N}{N^k},\frac{1}{N}}.
\]
 Hence, the approximation is uniformly accurate, in the number of correct digits,  for the entire range of fixation probabilities. This will be of importance in Section~\ref{sec:ess_lp}, where we shall use \eqref{eqn:asymp_frm} to compare possibly very small fixation probability values. 
\end{remark}

	\begin{remark}
The additional  assumption that $\theta$ does not vanish at the boundaries is not essential, but it simplifies the presentation. For instance, if $\F$ is a boundary potential, if $\theta$ vanishes at the boundary (or boundaries) where the maxima of $\F$ is attained, and if $\theta'$ does not vanish at these boundaries, then we can adapt the  calculations for the case of an interior potential, and obtain similar asymptotics.
	\end{remark}

\subsection{Examples}

We now give some examples  

\begin{description}
\item[2 person games with weak-selection] Assume a pay-off matrix specified by 
\begin{center}
\begin{tabular}{l|cc}
&$\A$&$\B$\\\hline
$\A$&$a$&$b$\\
$\B$&$c$&$d$
\end{tabular}
\end{center}
with $a,b,c,d>0$ and fitnesses functions given by $\Psi^{(\A,\B)}=1+\omega(N)\varphi^{(\A,\B)}$, with 
\[
\varphi^{(\A)}(x)=ax+b(1-x)\quad\text{and}\quad\varphi^{(\B)}(x)=cx+d(1-x).
\]
Hence
\begin{align*}
\Theta_N(x)&=\omega(N)\left(\varphi^{(\A)}(x)-\varphi^{(\B)}(x)\right)+\bo{\omega(N)^2},\\
\|\Theta_N\|_\infty&=\omega(N)\|\varphi^{(\A)}-\varphi^{(\B)}\|_\infty+\bo{\omega(N)^2}.
\end{align*}
In this case,
\[
\theta(x)=\gamma((a-b-c+d)x+b-d),\quad \F(x)=-\gamma\left((a-b-c+d)\frac{x^2}{2}+(b-d)x\right),
\]
where $\gamma>0$ is a normalising constant so that $\|\theta\|_\infty=1$.
Recall that if $b-d>0$ ($<0$) and $a-c>0$ ($<0$, respect.) then $\theta>0$ ($<0$, respect.) throughout $[0,1]$, and we have what is usually termed dominance by $\A$ (by \B, respect.). If $b-d<0$ and $a-c>0$, then there is an interior equilibrium $\xs=\sfrac{(d-b)}{(a-b-c+d)}$, with $\theta'>0$, and this is known as a  coordination case. On the other hand, if $b-d>0$ and $a-c<0$, then $\xs$ as given above is again an equilibrium, and in this case $\theta'<0$; this case is known as a coexistence case. 
\item[2 person games, direct identification and no weak-selection] 
In this case, we have
\[
\Theta_N(x)=\Theta(x)=\log\left[\frac{ax+b(1-x)}{cx+d(1-x)}\right]. 
\]
In this case, we have $\theta=\gamma\Theta(x)$, where $\gamma$ is again a normalising constant so that $\|\theta\|_\infty=1$.
Notice that we will have moderate selection, if
\[
\left|\frac{a}{c}-1\right|\ll 1 \quad\text{and}\quad \left|\frac{c}{d}-1\right|\ll 1,
\]
which is attainable, for instance, if $a,b,c,d$ are large---but notice that there is no requirement of grow with $N$.
\item[$d$-player games, with weak-selection] 
In~\citet{Gokhale:Traulsen:2010}, two person games are extended to $d$ person games for Moran process and~\citet{Lessard:2011} further extended they to exchangeable process in the domain of Kingman's coalescence.
Pay-offs are given by
\[
\pi^{(\A)}(j)=\sum_{k=0}^{d-1}\frac{\binom{j-1}{k}\binom{N-j}{d-1-k}}{\binom{N-1}{d-1}}a_k,\qquad 
\pi^{(\B)}(j)=\sum_{k=0}^{d-1}\frac{\binom{j-1}{k}\binom{N-j}{d-1-k}}{\binom{N-1}{d-1}}b_k,
\]
where, as before, $j$ is the number of type $\A$ individuals in a population of size $N$. Furthermore, $a_k$ ($b_k$) is the payoff of a player of type $\A$ ($\B$, respect.) that is playing a game with a group with $k$ players of type $\A$ and $d-k-1$ players of type $\B$.
This formulation contains the first example, if we set $d=2$, and identify $a_1=a$, $a_0=b$, $b_1=c$ and $b_0=d$.

For large $N$, pay-offs are approximated by
%
\begin{align*}
\varphi^{(\A)}(x)&=\sum_{k=0}^{d-1}\binom{d-1}{k}x^k(1-x)^{d-1-k}a_k\ ,\\
\varphi^{(\B)}(x)&=\sum_{k=0}^{d-1}\binom{d-1}{k}x^k(1-x)^{d-1-k}b_k.
\end{align*}
Therefore,
\begin{equation}\label{eq:dgames}
\theta(x)=\gamma\left(\varphi^{(\A)}(x)-\varphi^{(\B)}(x)\right)=\gamma\sum_{k=0}^{d-1}\binom{d-1}{k}x^k(1-x)^{d-1-k}(a_k-b_k),
\end{equation}
where, again, $\gamma$ is a proper normalisation constant.
\end{description}

\subsection{The different evolutionary regimes}

The nature of the approximation \eqref{eqn:fit_is}, and consequently the nature of the fixation patterns, will depend on the behaviour of $\kappa_N$, as $N\to\infty$. In order to make analytical progress, we  will assume that $\kappa_N^{-1}$  has a limit---where we will abuse language and allow it to be infinity---when $N\to\infty$.
Let 
\[
 \kappa^{-1}_\infty \bydef\lim_{N\to\infty}\kappa_N^{-1}.
\]
If $\kappa^{-1}_\infty=\infty$, then for large $N$ we have $\kappa_N^{-1}$ also very large. Analogously, if $\kappa^{-1}_\infty=0$, we have $\kappa_N^{-1}$ very small for large $N$. In the former case, we shall say that evolution is in the  \textbf{selection-driven} regime, while in the latter we shall say that it is in the \textbf{quasi-neutral} regime. Notice that we can have a selection-driven regime, within weak-selection. A possible  classification scheme is given in Table~\ref{tab:evol_reg}.

\begin{table}[htbp]
\begin{tabular}{l||p{2cm}|p{2cm}|p{3.0cm}|p{4cm}}
$\kappa^{-1}_\infty$&Infinite population&Large finite population&Infinite population dynamics&Notes\\\hline\hline
$\infty$&Deterministic&Selection-driven&for certain scales with weak-selection: the replicator dynamics &Possible with or without weak-selection. For a derivation of the infinite population limit with appropriate  scalings see \citet{ChalubSouza09b,ChalubSouza_JMB} .\\\hline
$\ord(1)$&Balanced&Balanced&Replicator-diffusion&Need weak-selection. See \citet{ChalubSouza09b,ChalubSouza_JMB,ChampagnatFerriereMeleard_TPB2006}.\\\hline
0&Neutral&Quasi-neutral&Pure diffusion&See \citet{ChalubSouza09b,ChalubSouza_JMB}.
\end{tabular}
\caption{Different regimes for both finite and infinite populations. }
\label{tab:evol_reg}
\end{table}

In the cases that we dot not have $\kappa_N=\ord(1)$ for large $N$, we shall say that the corresponding regime is not balanced. It turns out that we can describe the non-balanced regimes in a very complete way, and we proceed to do so as follows.

\section{Selection-driven  regime asymptotics}
\label{sec:sdr_asymp}

We shall now assume that we have a selection driven evolutionary dynamics, and choose $N$ to be fixed and large so that $\kappa_N^{-1}\gg1$. For such a fixed $N$, we will write  $\kappa=\kappa_N$ and, from now on, we will  indicate the dependence on $\kappa$ rather than on $N$. 

Throughout this Section we will assume that $\theta$  has at most one zero in $[0,1]$. Furthermore, if we assume that $\theta$ is monotonic, then all error terms can be taken be exponentially small.
Derivations of the asymptotic expressions presented in this Section are provided in Appendix~\ref{ap:asymp_proof}

\subsection{Dominance}
For dominance of one of the types, we have either that $\theta(x)>0$,  $\A$ is dominant, or $\theta(x)<0$, $\B$ is dominant. In the former case, we obtain:
\begin{equation}
\phi_\kappa(x)=1-\exp(-\theta(0)x/\kappa)+\bo{\kappa}.
\label{eqn:dombya}
\end{equation}
In the latter case , we find:
\begin{equation}
 \phi_{\kappa}(x)=\exp(\theta(1)(1-x)/\kappa)+\bo{\kappa}.
\label{eqn:dombyb}
\end{equation}
The dominance case has been studied in a number of regimes by~\citet{Kimura}, and also  by \citet{Gillespie:1981}. The approximations presented in the following recover the previous results when in the appropriate regimes, but also inherit the larger validity of \eqref{eqn:fit_is}.
\begin{remark}
In the context of the derivation of \eqref{eqn:fit_is}, $\kappa^{-1}_{N}$ provides an alternative definition of effective population size. In the nonneutral case, effective population size is not equal to the population size, and this has some implication in the context of invasions. As an example, we consider dominance by $\A$. In this case, from  equation~\eqref{eqn:dombya}, we then have
\[
\phi_\kappa(1/N)=1-\exp\left(-\|\Theta_N\|_\infty\theta(0)\right)+\bo{\kappa}.
\]
In general, $\|\Theta_N\|_\infty$ can have a strong dependence on $N$. Thus, it can  be either very large  or small and not just order one as it  would be  expected from the non-frequency dependent case.
\end{remark}

\subsection{Coexistence}

In this case, $\theta$ has a unique zero $\xs$, $\theta'(\xs)<0$. It turns out that  the crucial quantity  for understanding the fixation in this case is the value of the fitness potential when \A is fixed in the population, $\F(1)$. Indeed, we shall have three possible cases as follows: 

\begin{description}
 \item[$\F(1)\ll-\kappa$] the asymptotic approximation is given by \eqref{eqn:dombya};

\item[$\F(1)\gg\kappa$] the asymptotic approximation is given by \eqref{eqn:dombyb};

\item[$|\F(1)| \sim \kappa$] In this case, let
\[
C=\exp(\F(1)/\kappa)
\text{, and }
\gamma=\frac{|\theta(1)|}{\theta(0)}
\]
 Then the asymptotic approximation is given by
\begin{equation}
\phi_\kappa(x)=\frac{C}{C+\gamma}\exp(\theta(1)(1-x)/\kappa)+\frac{\gamma}{C+\gamma}\left(1-\exp(-\theta(0)x/\kappa)\right) +\bo{\kappa},
 \label{eqn:fp_asymp_cxs}
\end{equation}
with $\theta(0)>0>\theta(1)$. 
\end{description}
\begin{remark}
\label{rmk:cc}
Equation~\eqref{eqn:fp_asymp_cxs} can be seen as a convex combination of the equations \eqref{eqn:dombya} and \eqref{eqn:dombyb}, and hence in a sense the coexistence case interpolates between the two dominance cases. Furthermore, we shall see in Theorem~\ref{thm:rd} that $C=\sfrac{\rho_\B}{\rho_\A}$, where $\rho_{\A,\B}$ are the invasion probabilities of types \A  (\B).
\end{remark}

\begin{remark}
Note that if $\F(1)\to-\infty$ ($\F(1)\to\infty$), then $C\to 0$ ($C\to\infty$, respectively) and therefore equation (\ref{eqn:fp_asymp_cxs}) reduces to \eqref{eqn:dombya} (to \eqref{eqn:dombyb}, respectively).
Notice also that the graph of \eqref{eqn:fp_asymp_cxs}, for small $\kappa$, and apart from boundary layers of order $\kappa$ at the endpoints, is  essentially a horizontal line at the level $\gamma/(C+\gamma)$. We shall denote this level by the Fixation Probability Plateau (FPP). 
\end{remark}

\subsection{Coordination}

In the coordination case, $\theta$ also has a unique root $x^*$, with $\theta'(\xs)>0$. We have:
\begin{equation}
\phi_{\kappa}(x)=\frac{\N\left(\sqrt{\frac{\theta'(\xs)}{\kappa}}(x-\xs)\right)-\N\left(-\sqrt{\frac{\theta'(\xs)}{\kappa}}\xs\right)}{\N\left(\sqrt{\frac{\theta'(\xs)}{\kappa}}(1-\xs)\right)-\N\left(-\sqrt{\frac{\theta'(\xs)}{\kappa}}\xs\right)}+\bo{\sqrt{\kappa}},
 \label{eqn:fp_asymp:cd}
\end{equation}
where $\N(x)=\frac{1}{\sqrt{2\pi}}\int_{-\infty}^x\e^{-y^2/2}\rd y$ is the normal cumulative distribution.
If, in addition, we have that $\xs\gg\sqrt{\kappa}$, and $1-\xs\gg\sqrt{k}$ then \eqref{eqn:fp_asymp:cd} can be simplified to
\begin{equation}
 \phi_{\kappa}(x)=\N\left(\sqrt{\frac{\theta'(\xs)}{\kappa}}(x-\xs)\right)+\bo{\sqrt{\kappa}}.
\end{equation}
Thus, for $\xs$ far from the endpoints we have the interesting result that
\[
 \phi_{\kappa}(\xs)=\frac{1}{2}+\bo{\sqrt{\kappa}}.
\]
On the other hand, if $\xs=\sqrt{\kappa}\bx$ with $\bx=\bo{1}$, we then have that
\[
 \phi_{\kappa}(\xs)=\frac{\frac{1}{2}-\N\left(-\sqrt{\theta'(\xs)}\bx\right)}{1-\N\left(-\sqrt{\theta'(\xs)}\bx\right)}+\bo{\sqrt{\kappa}}.
\]
where as, if $\xs=\kappa\bx$ instead, we find that
\[
 \phi_{\kappa}(\xs)=\frac{\bx\sqrt{\sfrac{2\kappa\theta'(\xs)}{\pi}}}{1+ \bx\sqrt{\sfrac{2\kappa\theta'(\xs)}{\pi}}} + \bo{\kappa}.
\]

In this formula, we used that, in this case,  the correction term is of order $\kappa$.

\subsection{Remarks on the formulae}

Let $f(x,t)$ be the probability that type \A reaches fixation at time $t$ or before given its initial presence $x$. Then, the so called Kimura equation~\citep{Kimura} is given by
\begin{equation}\label{eq:Kimura}
 \partial_t f=kx(1-x)\partial_x^2f+cx(1-x)\partial_xf\ ,
 \end{equation}
with $f(0,t)=0$ and $f(1,t)=1$. Formulas \eqref{eqn:dombya} and \eqref{eqn:dombyb} are stationary  solutions of equation~(\ref{eq:Kimura}) for constant fitness differences, i.e., $\Theta_N(x)=c$, with $c>0$ for dominance by $\A$, and $c<0$ for dominance by $\B$. Equation \eqref{eqn:fp_asymp:cd} is the exact solution for the case of linear fitness difference, and weak-selection. In this sense, these results show that an arbitrary pay-off difference, with at most one root in the unit interval, is equivalent to a linear fitness difference with the same signal pattern across the unit interval. The case of coexistence is different altogether. Nevertheless, we would like to point out a kind of duality between coordination and coexistence regarding $\phi_{\kappa}(x)$ for different $x$, and $\phi_{\kappa}(\xs)$ for different $\xs$ as shown in Figure~\ref{fig:dual}.

\begin{figure}[htbp]
\begin{center}
\subfloat[Fixation for an observed frequency  $x$. \label{fig_dual_a}]{\includegraphics[scale=0.35]{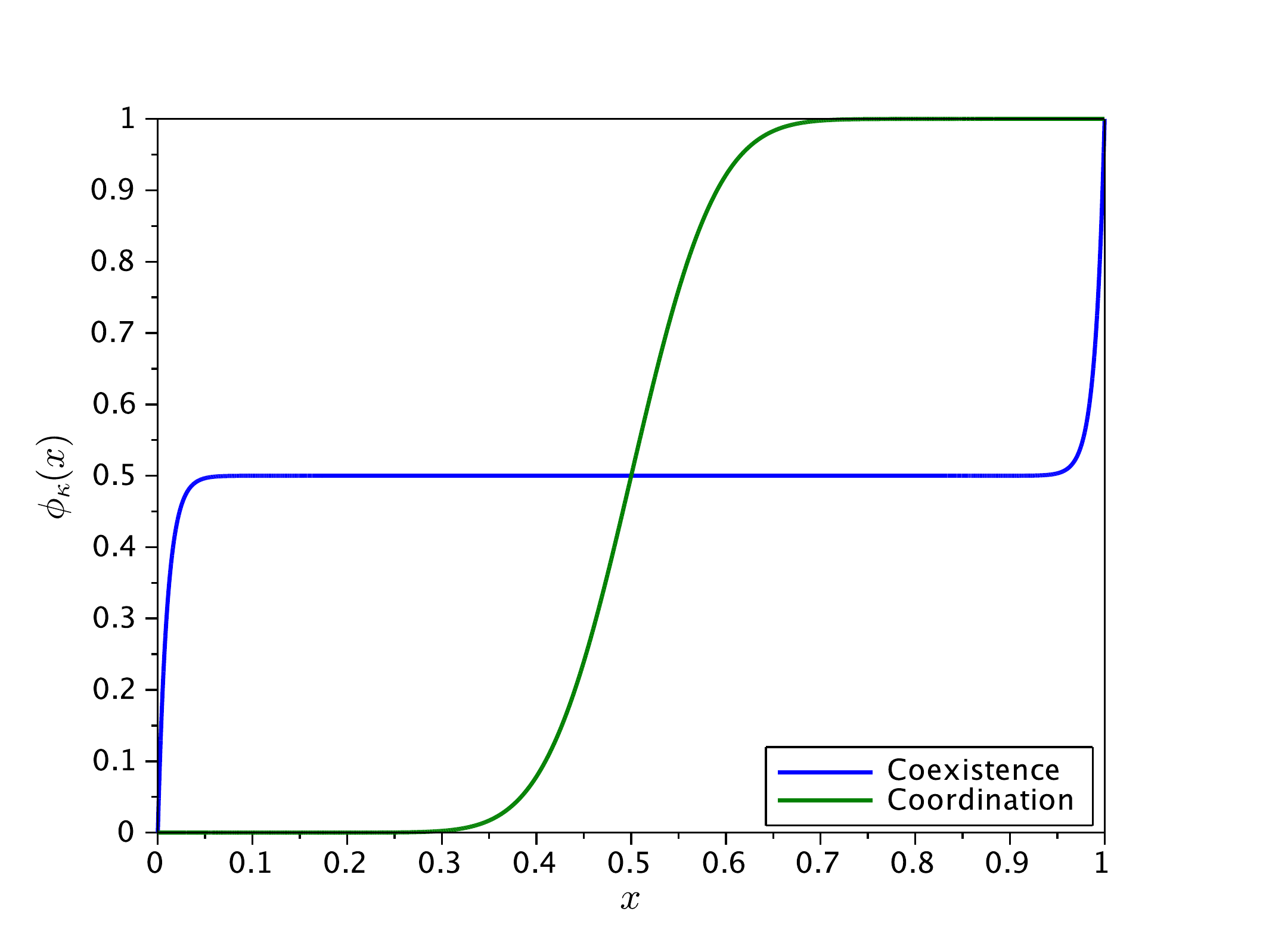}}
\subfloat[Fixation at the equilibrium point, $\xs$, as this equilibrium runs throughout the unit interval.  \label{fig:dual_b}]{\includegraphics[scale=0.35]{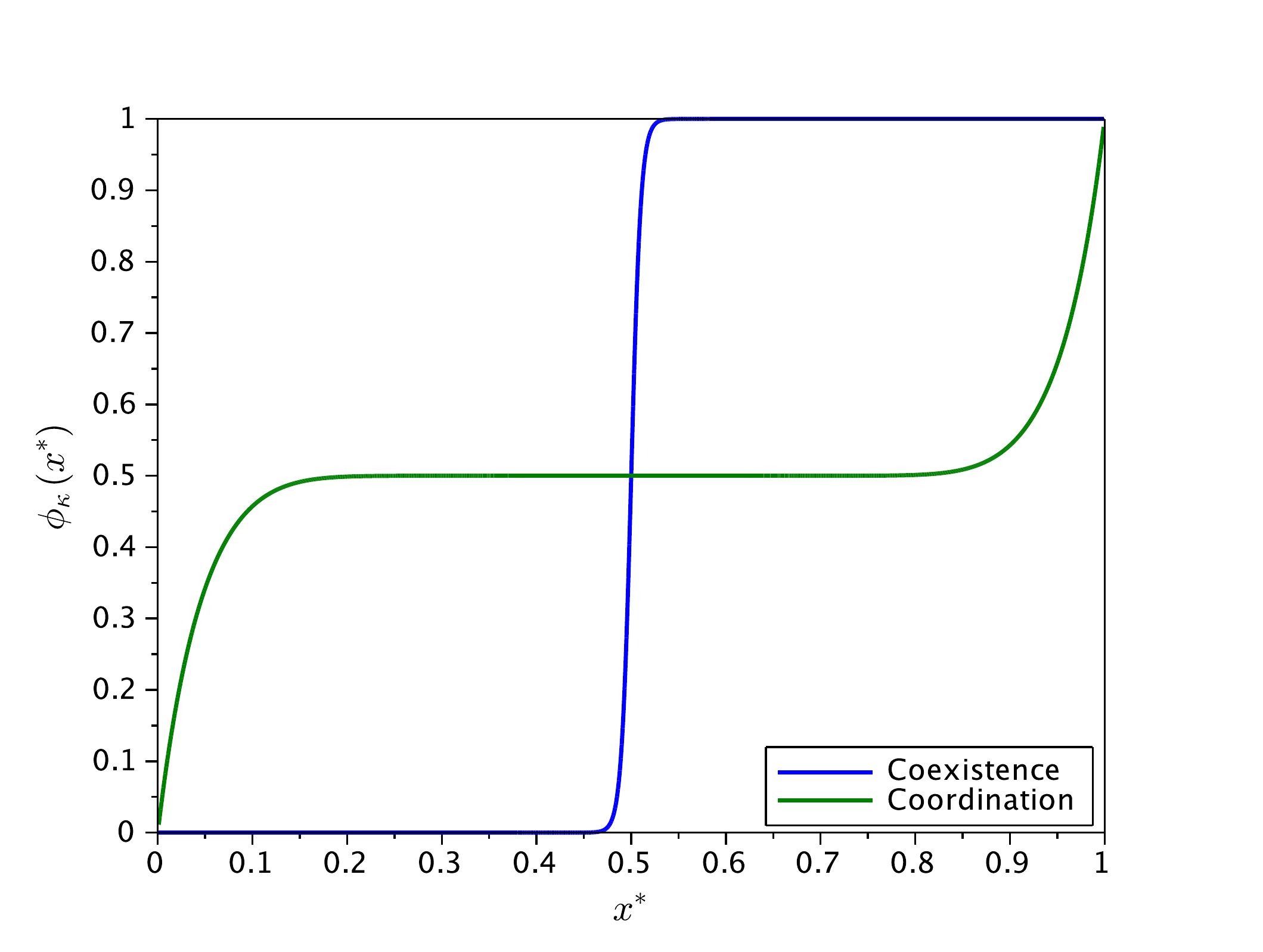}}
\end{center}
\caption{Duality between coexistence and coordination fixation patterns in the linear fitness case. The coexistence fixation when $\xs=\sfrac{1}{2}$ is similar to fixation at the equilibrium point in the case of coordination, as this equilibrium runs the unit interval. Analogously, the coordination fixation, also when $\xs=\sfrac{1}{2}$, is similar to the fixation at equilibrium for the coexistence case, for different values of $\xs$ in [0,1]. \label{fig:dual}}
\end{figure}

\section{Fixation in the presence of a deterministic interior ESS}

\label{sec:fde}

We say that two types \emph{coexist} if there is a stable state where both types are present. In the deterministic approach --- namely, in the replicator dynamics case~\citep{HofbauerSigmund1998} --- this happens only if there is a point $x^*\in(0,1)$ such that $\Psi^{(\A)}(x^*)=\Psi^{(\B)}(x^*)$. However, this condition says nothing about the stability of this point under small perturbations. We say that a point $x^*$ is an asymptotically stable equilibrium, if it is an equilibrium and sufficiently close states will be attracted to $x^*$. It can be shown  that an equilibrium $x^*\in(0,1)$ is asymptotically stable if 
there exists $\eps_0>0$, such that $\pm\mathop{\mathrm{sgn}}\left[(\Psi^{(\A)}(x^{*}\pm\eps)-\Psi^{(\B)}(x^{*}\pm\eps))\right]<0$ for $0<\eps<\eps_0$~\citep{HofbauerSigmund1998}.
This is the origin of the concept of \emph{evolutionary stable strategy}, or ESS~\citep{MaynardSmith1988}.

However, in the finite population case, there is no non-trivial stable state~\citep{Karlin_Taylor_first}. Therefore, the coexistence case is where both modelling paradigms --- finite population (stochastic) and infinite population (deterministic) --- differ more markedly. The solution to this apparent contradiction is the fact that, in this case, the finite population model has a quasi-stationary distribution that leads to the existence of a metastable dynamics which lasts for a time that increases exponentially with the population size~\citep{ChalubSouza_JMB,AntalScheuring_2006,MeleardVillemonais2011}. In the one-dimensional setting with a coexistence equilibrium that is far from the boundaries, this dynamics can be approximately described by an Ornstein-Ullenbeck process with a long term mean around the coexistence equilibrium \citep{VanKampen}.

Intuitively, one might expect, since the dynamics is most likely to develop in the vicinity of the deterministic equilibrium, that the fixation is largely independent of the state that one is observing in a certain moment, and more related to the equilibrium level.

As we shall now see, while the former intuition is correct, the latter is not.

\subsection{The near one-half law}

We shall now want to study the dynamics in the case of coexistence, and weak-selection, but selection-driven regime. We consider the following fitness differences:
\[
\Theta_N(x)=\log\frac{1+N^{-\alpha}\left[ax+b(1-x)\right]}{1+N^{-\alpha}\left[cx+d(1-x)\right]},
\]
with $\alpha>0$ and $a,b,c,d>0$ associated to a  pay-off matrix $\left(\begin{smallmatrix}a&b\\c&d\end{smallmatrix}\right)$ that yields a coexistence equilibrium, i.e., $c>a$, $b>d$. This is equivalent to specify
\[
\Theta_N(x)=N^{-\alpha}\gamma(\xs-x)+\bo{N^{-2\alpha}} ,
\]
with  $\gamma>0$ and $\xs\in (0,1)$.
We have the following result:
\begin{theorem}
\label{thm:nohl}
Assume the we are in the coexistence case, weak-selection but selection-driven regime. Assume also that we have linear formal limit fitness differences, i.e.,
\[
\theta(x)=\bar\gamma(\xs-x),\quad \xs\in(0,1).
\]
where $\bar\gamma\bydef\frac{1}{\max\{x^*,1-x^*\}}$.

Then, we have the following scenarios, with $c$ being a constant $\bo{1}$:
There are values $0<x_1<y_1<\sfrac{1}{2}<y_2<x_2<1$, such that $x_1$ is near zero, $x_2$ is near one,  both $y_1$ and $y_2$ are near $\sfrac{1}{2}$ and  with the property that if
\begin{description}
\item[$\xs<y_1$] Then, for all $x<x_2$, the fixation probability of \B is near unity.
\item[$\xs=\sfrac{1}{2}$] Then, for all $x\in(x_1,x_2)$, we have near $\sfrac{1}{2}$ probability of fixation for both types.
\item[$\xs>y_2$] Then, for all $x>x_1$, we have that the fixation probability of $\A$ is near unity.
\end{description}
Here, the statement $X$ \emph{is near} $Y$ means that $|X-Y|=\bo{\kappa}$.
\end{theorem}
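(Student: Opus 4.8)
The plan is to specialise the coexistence formula \eqref{eqn:fp_asymp_cxs} to the linear limit $\theta(x)=\bar\gamma(\xs-x)$ and to read off the three regimes from the size of the constant $C=\exp(\F(1)/\kappa)$. First I would record the relevant quantities explicitly. Since $\theta(0)=\bar\gamma\xs>0$, $\theta(1)=-\bar\gamma(1-\xs)<0$, and $\F(s)=-\int_0^s\theta(r)\,\rd r=\bar\gamma\bigl(\sfrac{s^2}{2}-\xs s\bigr)$, one finds
\[
\gamma=\frac{|\theta(1)|}{\theta(0)}=\frac{1-\xs}{\xs},\qquad C=\exp\!\left(\frac{\bar\gamma(\sfrac{1}{2}-\xs)}{\kappa}\right).
\]
Thus the single quantity $\sfrac{1}{2}-\xs$ governs everything: $C$ is exponentially large when $\xs<\sfrac{1}{2}$, exponentially small when $\xs>\sfrac{1}{2}$, and exactly $1$ when $\xs=\sfrac{1}{2}$ (where also $\bar\gamma=2$ and $\gamma=1$). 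It is also convenient to note that $\bar\gamma\max\{\xs,1-\xs\}=1$ always, so the surviving boundary exponential will have unit rate in each dominance regime.

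For the central case $\xs=\sfrac{1}{2}$, substituting $C=\gamma=1$, $\theta(0)=1$, $\theta(1)=-1$ into \eqref{eqn:fp_asymp_cxs} gives
\[
\phi_\kappa(x)=\tfrac{1}{2}\e^{-(1-x)/\kappa}+\tfrac{1}{2}\bigl(1-\e^{-x/\kappa}\bigr)+\bo{\kappa}.
\]
Both exponentials are $\bo{\kappa}$ as soon as $x$ and $1-x$ exceed a multiple of $\kappa\log\kappa^{-1}$, so on $(x_1,x_2)$ with $x_1\sim\kappa\log\kappa^{-1}$ and $x_2\sim 1-\kappa\log\kappa^{-1}$ we obtain $\phi_\kappa(x)=\sfrac{1}{2}+\bo{\kappa}$, that is, near-$\sfrac{1}{2}$ fixation for both types. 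For $\xs<\sfrac{1}{2}$ the remark following \eqref{eqn:fp_asymp_cxs} reduces the formula to the dominance-by-$\B$ expression \eqref{eqn:dombyb} up to the plateau remainder $\sfrac{\gamma}{C+\gamma}$; since $|\theta(1)|=1$ here, the surviving exponential $\exp(\theta(1)(1-x)/\kappa)=\e^{-(1-x)/\kappa}$ is $\bo{\kappa}$ whenever $x<x_2$, whence $\phi_\kappa$ is near zero and $\B$ fixes with probability near one. The case $\xs>\sfrac{1}{2}$ is symmetric, reducing to dominance by $\A$, \eqref{eqn:dombya}, with surviving term $1-\e^{-x/\kappa}$ near one for $x>x_1$.

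The delicate point---and the step I expect to require the most care---is the precise placement of the thresholds $y_1,y_2$, which must simultaneously tend to $\sfrac{1}{2}$ and force the residual plateau term to be genuinely $\bo{\kappa}$ rather than merely $o(1)$. That remainder is controlled by $\sfrac{\gamma}{C+\gamma}$, and demanding $\sfrac{\gamma}{C+\gamma}=\bo{\kappa}$ requires $C\gtrsim\gamma\kappa^{-1}$, i.e. $\bar\gamma(\sfrac{1}{2}-\xs)\gtrsim\kappa\log\kappa^{-1}$; I would therefore set $y_1=\sfrac{1}{2}-c\,\kappa\log\kappa^{-1}$ and, symmetrically, $y_2=\sfrac{1}{2}+c\,\kappa\log\kappa^{-1}$ for a constant $c=\bo{1}$ large enough that $\gamma/C=\bo{\kappa}$ uniformly, together with $x_1,1-x_2$ of the same order. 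With this choice the ordering $x_1<y_1<\sfrac{1}{2}<y_2<x_2$ holds for small $\kappa$ and all four points converge to their claimed limits, so that each is ``near'' its target in the stated sense. Collecting these boundary-layer estimates with the $\bo{\kappa}$ error already present in \eqref{eqn:fp_asymp_cxs}---indeed exponentially small, since the linear $\theta$ is monotone---then yields the three conclusions.
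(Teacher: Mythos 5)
Your proposal is correct and follows essentially the same route as the paper: the paper's entire proof consists of the computation $\F(1)=-\bar\gamma(\xs-\sfrac{1}{2})$ (equivalently, your $C=\exp\left(\bar\gamma(\sfrac{1}{2}-\xs)/\kappa\right)$) followed by an appeal to the three regimes of \eqref{eqn:fp_asymp_cxs}, which is exactly your skeleton. The boundary-layer work you add is detail the paper leaves implicit, and your placement of the thresholds at distance $\ord(\kappa\log\kappa^{-1})$ from $0$, $1$ and $\sfrac{1}{2}$ is in fact a necessary correction to the statement rather than a blemish in your argument: at $x=\ord(\kappa)$ the term $\e^{-x/\kappa}$ is a nonvanishing constant, so the blanket convention ``near $=\bo{\kappa}$'' can hold for the fixation probabilities only if $x_1,x_2,y_1,y_2$ are allowed the logarithmic factor, a point the paper's two-line proof glosses over entirely.
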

\begin{proof}
This follows from the direct calculation that
%
$\F(1)=-\bar\gamma(\xs-\sfrac{1}{2})$.
%
The results then follow from equation \eqref{eqn:fp_asymp_cxs}.
\end{proof}

As an illustration of this behaviour, we show in Figure~\ref{fig:fixp_vxs} the fixation probability for the $\Theta_N$ considered in Theorem~\ref{thm:nohl}. The variation with $\xs$ observed in this figure suggest that the fixation pattern changes very fast in the vicinity of the $\xs=\sfrac{1}{2}$ from dominance of $\B$ to dominance of $\A$.  Thus a \textit{coexistence layer}, in the sense that both types have a significant probability of fixation, exists only when $\xs$ is close to one-half. This behaviour is further illustrated  Figure~\ref{fig:coex_layer}, where  we plot $\Phi_{N}(x^*)$ for $x^*\in(0,1)$.

\begin{figure}[htbp]
\begin{center}
\includegraphics[scale=0.55]{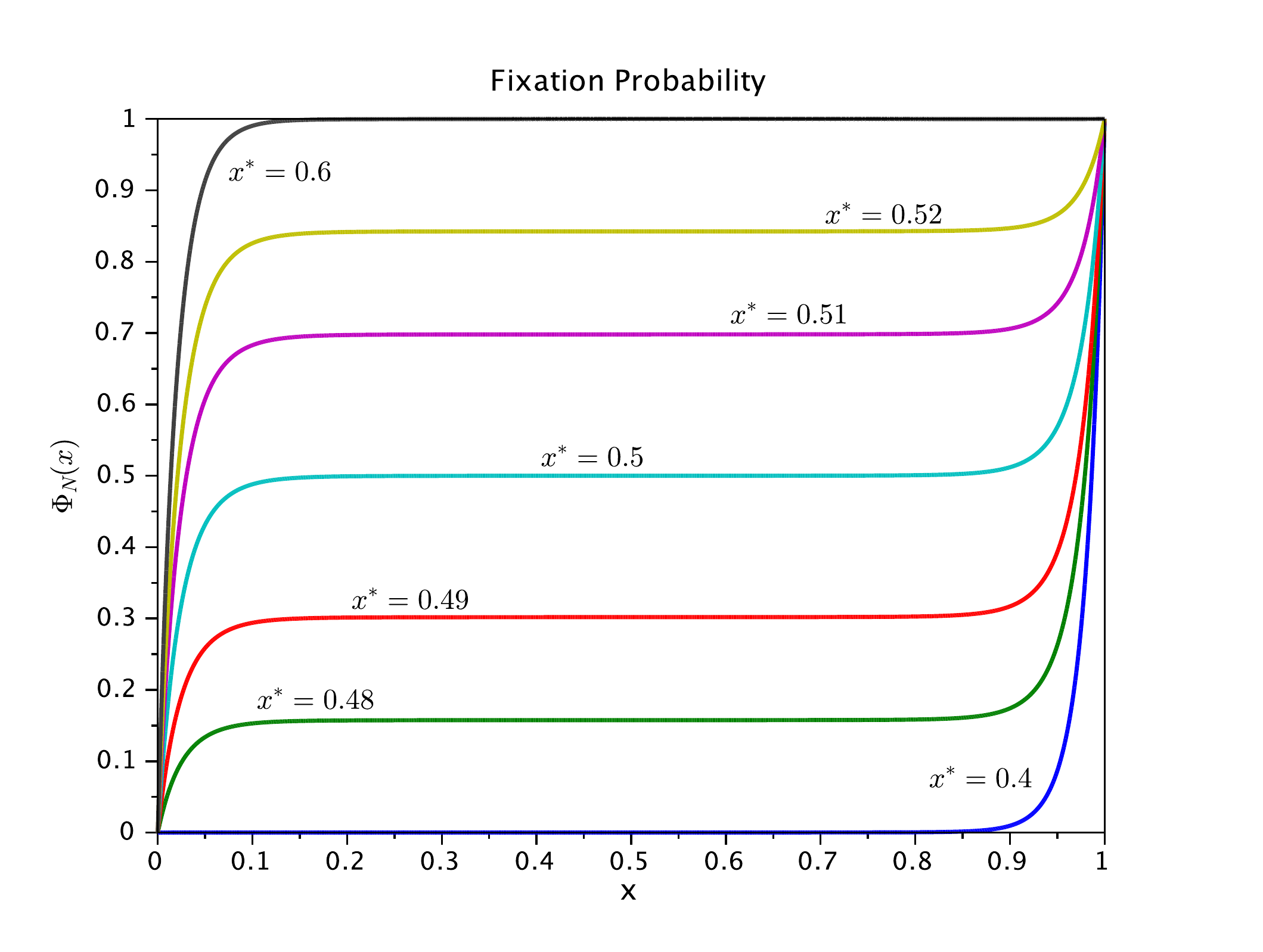}
\caption[]{Fixation probability  computed from~\eqref{eqn:fix_disc} using  payoff matrix given by
{\centering
\begin{tabular}{r|cc}
&\A&\B\\\hline
\A&1&$1+\xs$\\
\B&$2-\xs$&1
\end{tabular}
\par}
with $N=1000$, and $\alpha=\sfrac{1}{3}$. The corresponding generalised log-fitness are given by  $\Theta_N(x)=N^{-1/3}(\xs-x)+\bo{N^{-2/3}}$.}
\label{fig:fixp_vxs}
\end{center}
\end{figure}

\begin{figure}[htbp]
\begin{center}
\includegraphics[scale=0.55]{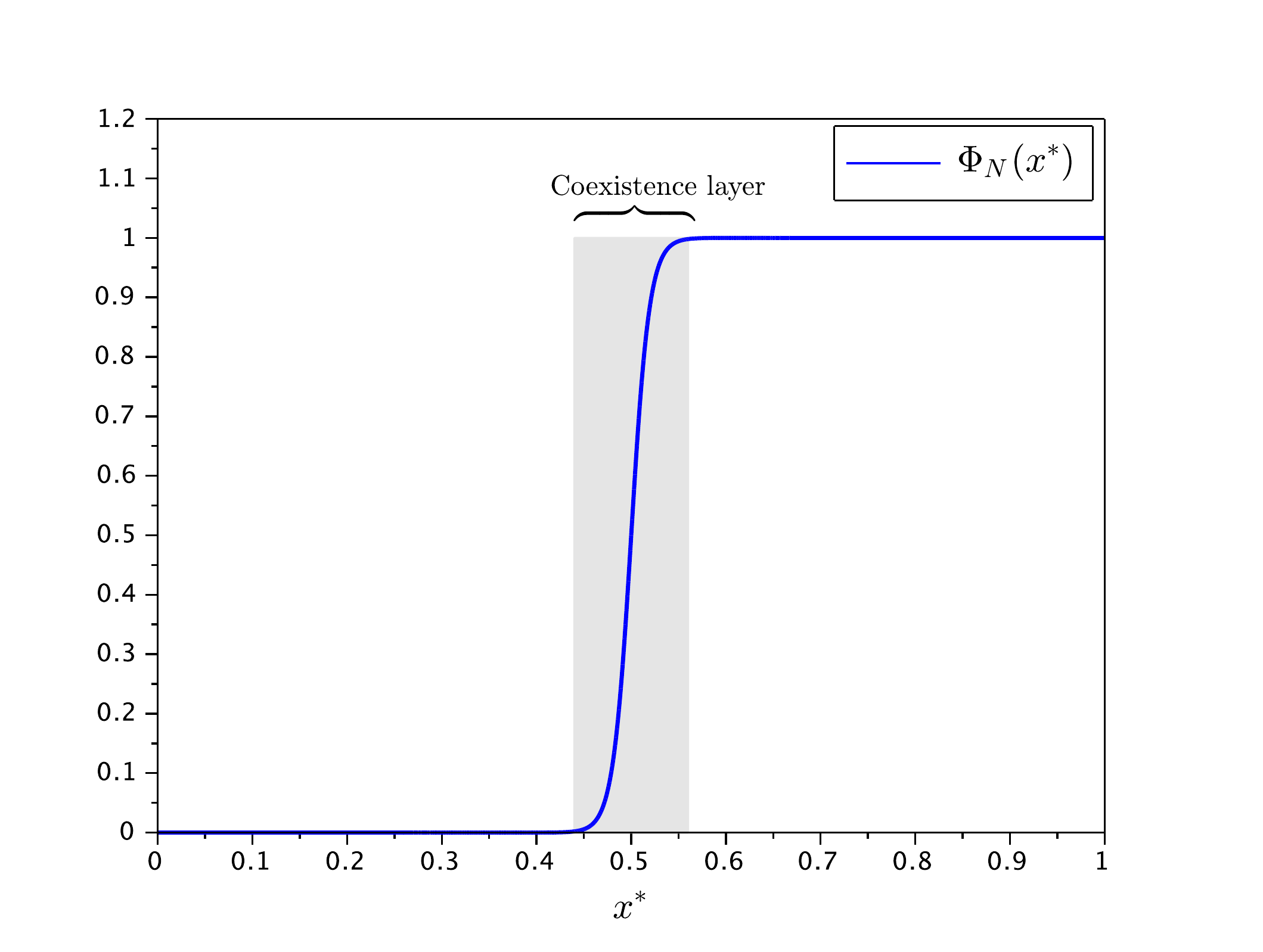}
\caption{Fixation probability at the the equilibrium point computed using \eqref{eqn:fix_disc}, with $N=1000$ and $\Theta_N(x)=N^{-1/3}(\xs-x)+\bo{N^{-2/3}}$. The layer effect is enhanced if either one considers a larger population, or  a slower  convergence towards zero in the fitness difference.}
\label{fig:coex_layer}
\end{center}
\end{figure}

\subsection{Fixation in the absence of weak-selection}

In the absence of weak-selection, the fixation behaviour for the coexistence case can be much more varied. 

Consider the payoff matrix, independent of $N$, given by

\begin{center}
\begin{tabular}{r|cc}
&\A&\B\\\hline
\A&$a$&$b$\\
\B&$c$&$d$
\end{tabular}
\end{center}
with $a,b,c,d>0$.

Assuming self-interaction, i.e., that each individual plays the game against him/herself, then the corresponding log-difference of the fitness is given by
\[
\Theta_{N}(x)=\log\left[\frac{(a-b)x+b}{(c-d)x+d}\right],
\]
which is independent of $N$. We assume, as before, $c>a$, $b>d$. Notice that, in this case, we have 
\[
\theta(x)=\gamma\Theta_{N}(x),\quad \gamma^{-1}=\max(|\log(\sfrac{a}{c})|,|\log(\sfrac{b}{d})|).
\]

From which we then obtain
\[
\F(1)=-\gamma\left\{\frac{a\log a - b\log b}{a-b} -\frac{c\log c -d\log d}{c-d}\right\}.
\]
Notice that the above expression is neither translation or multiplicative invariant. Hence, as far as fixation is concerned, the absolute values of the entries in the payoff matrix --- and not only their relative values --- are important. 

\begin{remark}
We point out that the form of $\Theta_N$ used above is a slight simplification, since the usual assumption is that an individual does not play against himself. In this more conventional case, i.e., without self-interaction, we find
\[
\Theta_{N}(x)=\log\left[\frac{(a-b)x+b-\sfrac{a}{N}}{(c-d)x+d(1-\sfrac{1}{N})}\right].
\]
Notice, however, that the difference between both fitness log-differences  is $\bo{\sfrac{1}{N}}$, and hence the fixation pattern is the same, if $N$ is sufficient large. In particular,  the continuous approximation does not change.
\end{remark}

As an example, consider the particular payoff matrix:

\begin{center}
\begin{tabular}{r|cc}
&\A&\B\\\hline
\A&11&110.075\\
\B&11.025&110
\end{tabular}
\end{center}

Then, $\Theta_{N}(0)=\log(1.00068)>0$ and $\Theta_{N}(1)=\log(0.99773)<0$.  Hence we are in the coexistence case, and  the unique equilibrium is at $\xs=3/4$.

As for the fixation pattern,  a  direct computation yields
$\F(1)=-0.0749870$.
Thus, the fixation of $\A$ is almost certain, if $N$ is sufficiently large.  On the other hand, if we subtract ten from the previous matrix, we obtain
\begin{center}
\begin{tabular}{r|cc}
&\A&\B\\\hline
\A&1&100.075\\
\B&1.025&100
\end{tabular}
\end{center}
and we now obtain a positive potential at $x=1$, namely
$\F(1)=0.0079851$.
In this case, extinction of $\A$ is now almost certain --- again, if $N$ is large enough. In Figure~\ref{fig:F_var}, we can see the variation effect of adding a constant $c$ to all entries of the payoff-matrix.
\begin{center}
\begin{tabular}{r|cc}
&\A&\B\\\hline
\A&1&50.075\\
\B&1.025&50
\end{tabular}
\end{center}
\begin{figure}[htbp]
\includegraphics[scale=0.55]{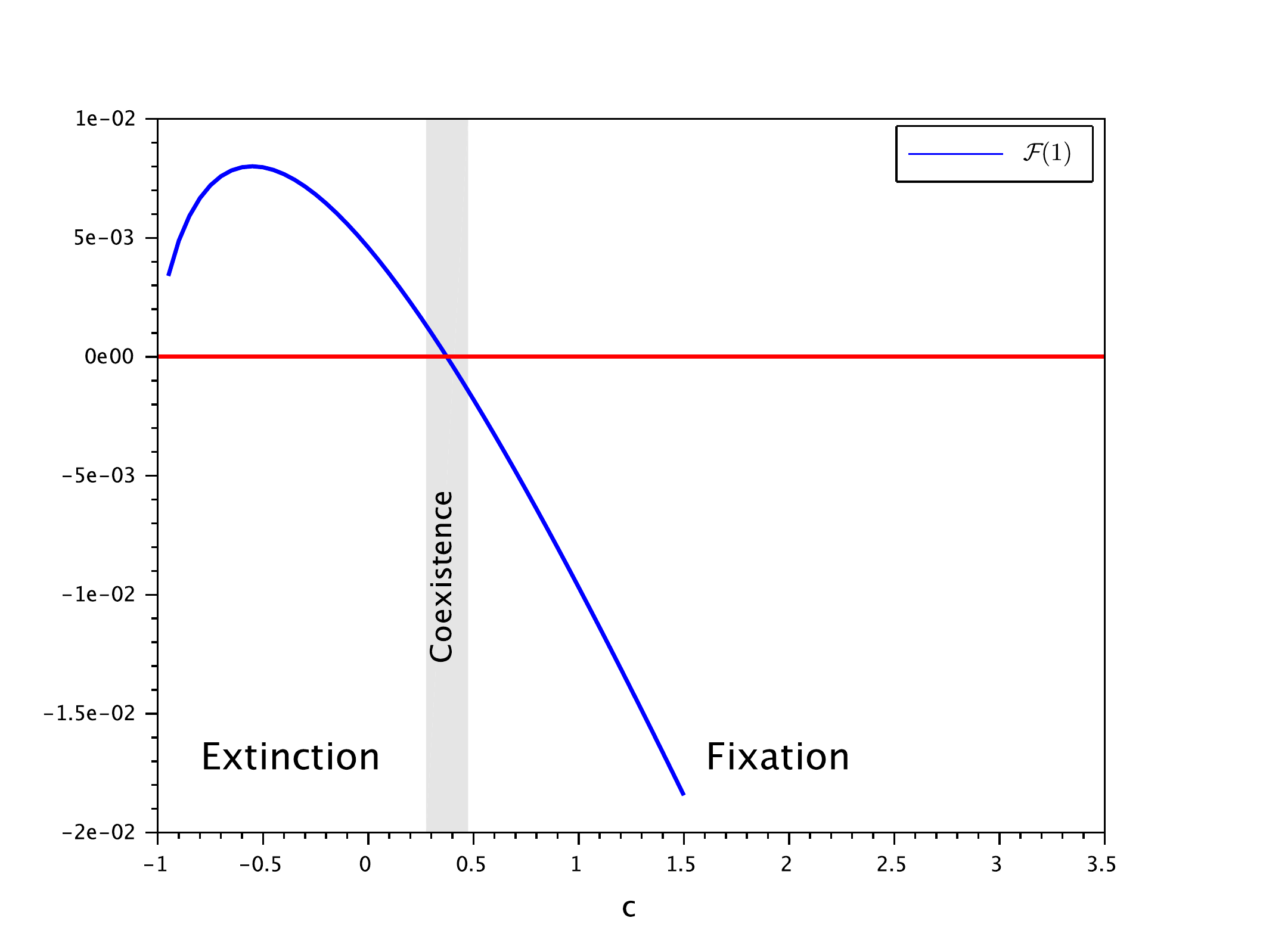}
\caption{Behaviour of $\F(1)$ as a  constant $c$ is added to the  pay-off matrix above. One can then see that, by adding a constant, one can obtain the full spectrum of possible evolutionary dynamics in the case of coexistence dynamics, and that the actual level of equilibrium can have little influence on the outcome. \label{fig:F_var}}
\end{figure}

\subsection{A general result for fixation plateaus}

The previous results also show that\, in the absence of weak-selection, the FPP is structurally unstable for large populations, since a perturbation of order $\kappa$ in in the model parameters might change $\F(1)$ by an order one value, and thus significantly alter the corresponding fixation pattern. We shall now see that this behaviour is by no means exceptional. Notice that this structural instability for large, but finite $N$, while being a finite population effect, is diverse from the difference in fixation patterns usually studies via, for instance, the concept of $\textsc{ESS}_N$; see section~\ref{sec:ess_lp}.

The FPP indicates the likelihood of fixation of either type. If the FPP is close to zero, we have essentially dominance of \B, while if the  FPP is close to unity will have essentially dominance of \A. For values of the FPP  close to $\sfrac{1}{2}$, the evolutionary dynamics is fair in the sense that both types have very similar fixation probabilities. We will now show that the phenomena in Theorem~\ref{thm:nohl} is more robust.

\begin{theorem}\label{thm:oneparameter}
Consider a family of models parametrized by $\lambda\in\Lambda$, an open interval, and let $I=(0,1)$. Let  $\theta:I\times\Lambda\to\R$ be smooth  and write, accordingly, $\theta=\theta(x,\lambda)$ and  $\F=\F(s,\lambda)$. Assume that  $\partial_x\theta(x,\lambda)<0$ in $I\times\Lambda$. Assume further that  there exist  $x_0^*$ and $\lambda_0$ such that  $\theta(x_0^*,\lambda_0)=0$,  $\F(1,\lambda_0)=0$ and with
\[
\I=\int_0^1\partial_\lambda\theta(r,\lambda_0)\,\rd r
\]
being non-zero and order one. Then, there exists an open interval $L=(\lambda_-,\lambda_+)$, and a unique smooth function $x:L\subset\Lambda\to I$ such that $x(\lambda_0)=x_0^*$  and $\theta(x(\lambda),\lambda)=0$. Moreover, in the selection-driven regime, provided that we have $\lambda_0-\lambda_-,\lambda_+-\lambda_0\gg\kappa$, there will be values around $\lambda_0$ such that the $FPP$ will change from near dominance of \A to near dominance of \B.
\end{theorem}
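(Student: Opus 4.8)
The plan is to split the argument into its two assertions: the existence and smoothness of the equilibrium branch $x(\lambda)$, and the sign change of $\F(1,\cdot)$ that drives the transition in the fixation pattern.

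First I would establish the branch $x(\lambda)$ by the Implicit Function Theorem. Since $\theta(x_0^*,\lambda_0)=0$ and $\partial_x\theta(x_0^*,\lambda_0)<0$ is in particular nonzero, the IFT produces a neighbourhood of $\lambda_0$ and a unique smooth $x(\lambda)$ with $x(\lambda_0)=x_0^*$ and $\theta(x(\lambda),\lambda)=0$. The global hypothesis $\partial_x\theta<0$ on $I\times\Lambda$ makes $\theta(\cdot,\lambda)$ strictly decreasing for every fixed $\lambda$, so the interior zero is unique wherever it exists; moreover, monotonicity at $\lambda_0$ forces $\theta(0,\lambda_0)>0>\theta(1,\lambda_0)$, and by continuity this strict coexistence configuration persists on a subinterval, which I take to be $L=(\lambda_-,\lambda_+)$. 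Hence for every $\lambda\in L$ we are genuinely in the coexistence case of Section~\ref{sec:sdr_asymp}, with interior equilibrium $x(\lambda)$ and $\partial_x\theta(x(\lambda),\lambda)<0$.

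Next I would analyse $\F(1,\lambda)=-\int_0^1\theta(r,\lambda)\,\rd r$ as a function of $\lambda$. Differentiating under the integral sign gives $\partial_\lambda\F(1,\lambda)=-\int_0^1\partial_\lambda\theta(r,\lambda)\,\rd r$, which at $\lambda_0$ equals $-\I$, nonzero and order one by hypothesis, while $\F(1,\lambda_0)=0$. A first-order Taylor expansion then yields $\F(1,\lambda)=-\I(\lambda-\lambda_0)+\bo{(\lambda-\lambda_0)^2}$, so that $\F(1,\lambda)=\ord(\lambda-\lambda_0)$ with sign opposite to that of $\I(\lambda-\lambda_0)$ for $\lambda$ close to $\lambda_0$. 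Consequently, once $\kappa\ll|\lambda-\lambda_0|\ll1$ --- which is possible inside $L$ precisely because $\lambda_0-\lambda_-,\lambda_+-\lambda_0\gg\kappa$ --- one has $|\F(1,\lambda)|\gg\kappa$, with the two sides of $\lambda_0$ giving the two opposite signs. Feeding these regimes into the coexistence asymptotics, and writing $C=\exp(\F(1,\lambda)/\kappa)$ as in \eqref{eqn:fp_asymp_cxs}, the case $\F(1,\lambda)\ll-\kappa$ sends $C\to0$ and recovers \eqref{eqn:dombya}, i.e.\ near dominance of \A (FPP near unity), whereas $\F(1,\lambda)\gg\kappa$ sends $C\to\infty$ and recovers \eqref{eqn:dombyb}, i.e.\ near dominance of \B (FPP near zero). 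Choosing $\lambda=\lambda_0\pm\delta$ with $\kappa\ll\delta\ll1$ and $\delta<\min\{\lambda_+-\lambda_0,\lambda_0-\lambda_-\}$ therefore exhibits two parameter values straddling $\lambda_0$ at which the FPP passes from near dominance of \A to near dominance of \B, which is the assertion.

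The routine parts are the IFT and the differentiation under the integral. The point requiring care is that the two limiting FPP values are genuinely the dominance plateaus: this needs the normalising factor $\gamma=|\theta(1,\lambda)|/\theta(0,\lambda)$ to remain order one along $L$ (guaranteed, since $\theta(0,\lambda)$ and $\theta(1,\lambda)$ stay bounded away from $0$ by continuity), and it needs the transition window $|\F(1)|\sim\kappa$ --- where \eqref{eqn:fp_asymp_cxs} merely interpolates and no clear dominance holds --- to be avoided, which is exactly what the separation $|\lambda-\lambda_0|\gg\kappa$ buys us. The subquadratic Taylor remainder is dominated because $\I$ is order one, so no further estimates are needed.
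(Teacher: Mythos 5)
Your proposal is correct and follows essentially the same route as the paper's proof: the implicit function theorem plus strict monotonicity of $\theta(\cdot,\lambda)$ for existence and uniqueness of the branch $x(\lambda)$, then the first-order expansion $\F(1,\lambda)=-\I(\lambda-\lambda_0)+\bo{(\lambda-\lambda_0)^2}$ to push $\F(1,\lambda)$ beyond $\pm\kappa$ on either side of $\lambda_0$ and invoke the coexistence asymptotics \eqref{eqn:fp_asymp_cxs}. Your additional checks (that the coexistence configuration $\theta(0,\lambda)>0>\theta(1,\lambda)$ persists on $L$ and that $\gamma$ stays order one) are refinements the paper leaves implicit, but they do not change the argument.
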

\begin{proof}
Existence of the function of $x(\lambda)$ follows from a standard argument using the implicit function theorem.  Suppose that such a function is not unique. Then there exists $\lambda\in L$ and $x_1(\lambda)<x_2(\lambda)$ such that $\theta(x_1(\lambda),\lambda)=\theta(x_2(\lambda),\lambda)=0$. Hence, there $\partial_x\theta(x,\lambda)$ must vanish in the interval $(x_1,x_2)$ contradicting its negative-definiteness. Writing
\[
\F(1,\lambda)=-(\lambda-\lambda_0)\I+\bo{(\lambda-\lambda_0)^2},
\]
we see, by taking $\lambda$ sufficiently close to $\lambda_-$ or to $\lambda_+$, that  
we can have both $\F(1,\lambda)\ll-\kappa$ and $\F(1,\lambda)\gg\kappa$, for different values of $\lambda$. Hence,  the FPP can have all values from very near zero to very near unity.
\end{proof}

\begin{remark}
Many different forms of Theorem~\ref{thm:oneparameter} can be formulated. The crucial points, for the result to hold, are that $\I=\bo{1}$, and that the interval $(\lambda_-,\lambda_+)$ is sufficiently large in both directions.
\end{remark}

\clearpage

\subsection{Blockage and tunnelling in evolution}

We now take a small detour. For given fitnesses functions, assume the associated deterministic model given by the replicator equations; consider additionally that there are multiple non-trivial equilibria in the replicator equation. For simplicity, let us assume that there are only two equilibria, i.e., $x_1$ and $x_2$ are the only solutions in $(0,1)$ such that $\Psi^{(\A)}(x)=\Psi^{(\B)}(x)$. As before in this section, $0<\kappa\ll 1$. We assume, in case I, that $x_1$ is unstable and $x_2$ is stable and in case II, the other way round.

\subsubsection{Case I} For case I, we have that $x_1$ is a local maximum for the fitness potential, and $x_2$ is a local minimum. Therefore, the global maximum of $\F$ can be either at $x=x_1$ or at $x=1$. Hence, we have the following subcases:
\begin{description}
\item $\F(1)-\F(x_1)\ll-\kappa$  Fixation is given  by \eqref{eqn:fp_asymp:cd}, with $\xs$ replaced by $x_1$.;
\item $\F(1)-\F(x_1)\gg\kappa$ Fixation is given by \eqref{eqn:dombyb};
\item $\F(1)-\F(x_1)\sim\kappa$ Let
\[
C=\exp\left(\frac{\F(1)-\F(x_1)}{\kappa}\right)
\quad\text{and}\quad
 \gamma =\frac{|\theta(1)|}{2\pi}\left[\N(\sigma^{-1}(1-x_1))-\N(-\sigma^{-1}x_1)\right],
\]
with $\sigma^{-1}=\sqrt{\sfrac{\theta'(x_1)}{\kappa}}$.
Then, we have:
\begin{equation}
\label{eqn:fix_two_ci}
\phi_\kappa(x)=\frac{\gamma}{\gamma + C}\frac{\N(\sigma^{-1}(x-x_1))-\N(-\sigma^{-1}x_1)}{\N(\sigma^{-1}(1-x_1))-\N(\sigma^{-1}x_1)}+\frac{C}{\gamma+C}\exp\left(\theta(1)\frac{1-x}{\kappa}\right).
\end{equation}
Notice that, if $C\to\infty$, then \eqref{eqn:fix_two_ci} reduces to \eqref{eqn:dombyb}, whereas if $C\to0$ it reduces to \eqref{eqn:fp_asymp:cd}.
\end{description}

The above result shows that, if we have $\F(1)$ larger than $\F(x_1)$ then the ESS acts as a blockade to the evolution, and hence the extinction of $\A$ is almost certain.

\begin{figure}[htbp]
\subfloat[$\F(1)-\F(x_1) \gg \kappa$]{
\parbox{0.33\linewidth}{\includegraphics[scale=1]{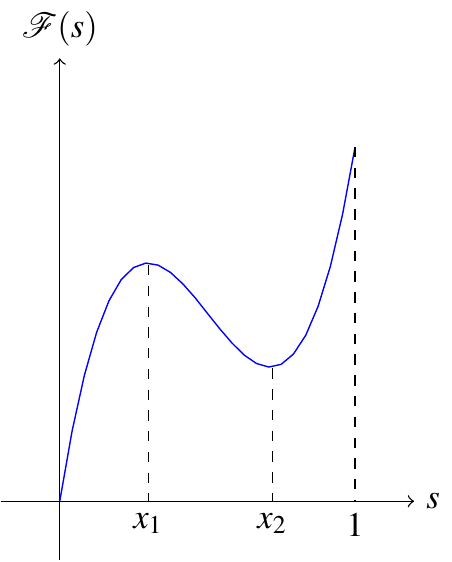}}}
\subfloat[$\F(1)-\F(x_1) \ll -\kappa$]{
\parbox{0.33\linewidth}{\includegraphics[scale=1]{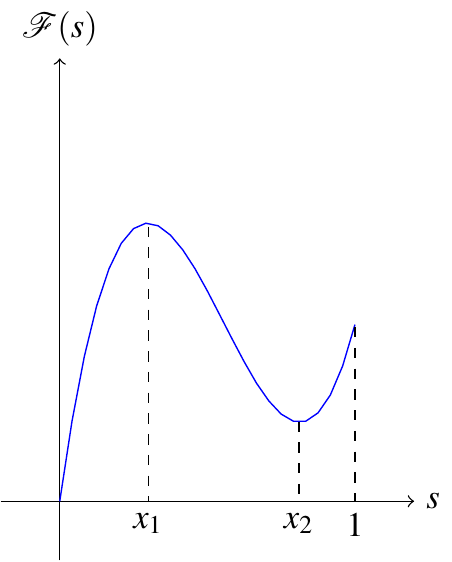}}}
\subfloat[$\F(1)-\F(x_1)\sim\kappa$]{
\parbox{0.33\linewidth}{\includegraphics[scale=1]{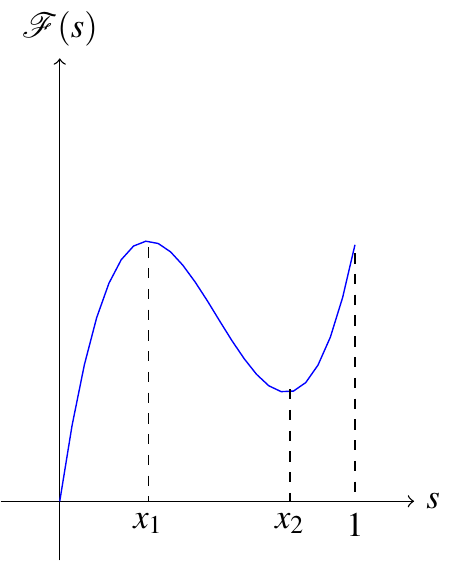}}}
\caption{Relative configuration of maxima and minima for case I. }
\end{figure}

\begin{figure}[htbp]
\begin{center}
\subfloat[Replicator Dynamics flow]{
\parbox{0.45\linewidth}{\includegraphics[scale=1]{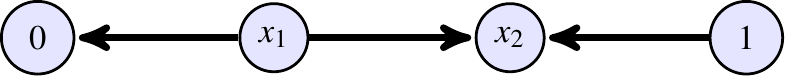}}}	

\vspace{10pt}
\subfloat[Coordination]{
\parbox{0.45\linewidth}{\includegraphics[scale=1]{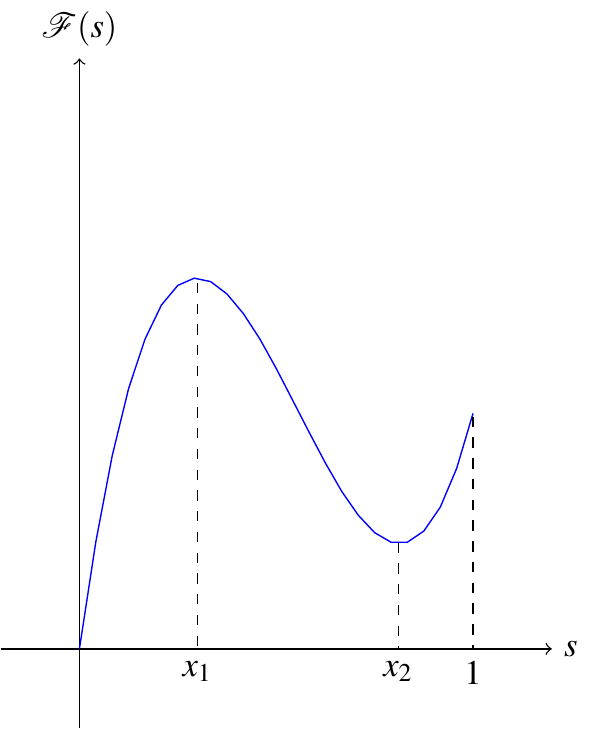}}}
\hfill
\subfloat[Blockage]{
\parbox{0.45\linewidth}{\includegraphics[scale=1]{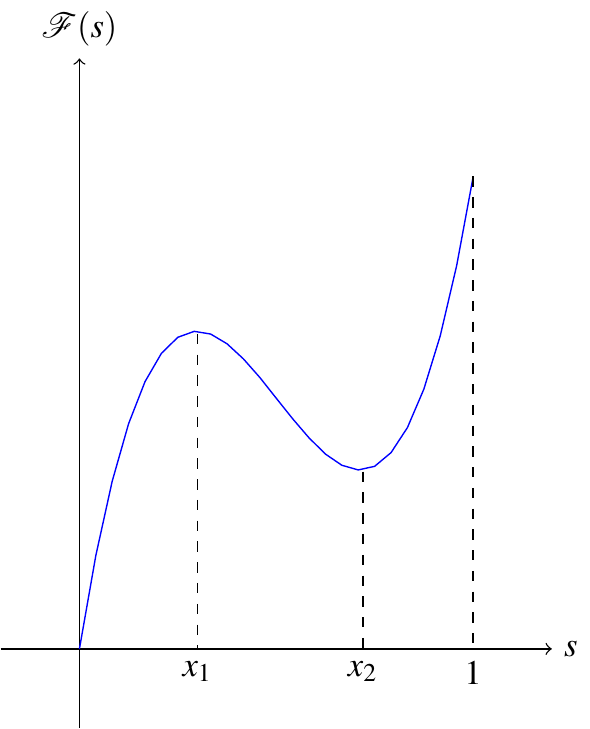}}}

\subfloat[Fixation probability for coordination]{
	\parbox{0.45\linewidth}{\includegraphics[scale=1]{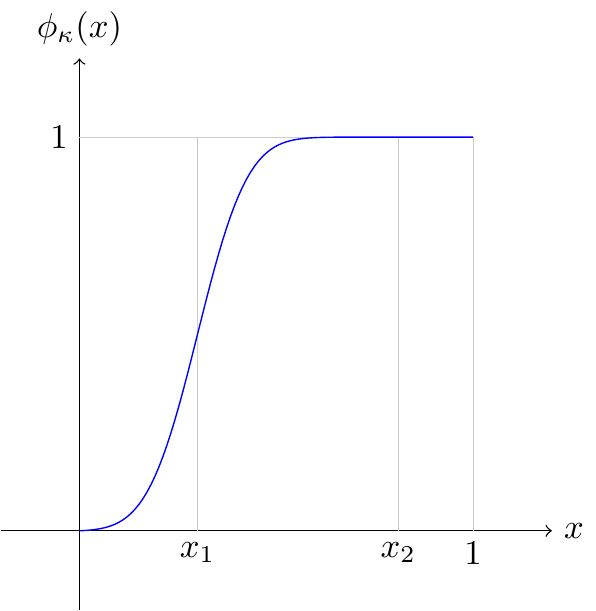}}}
\hfill
\subfloat[Fixation probability for coordination]{
	\parbox{0.45\linewidth}{\includegraphics[scale=1]{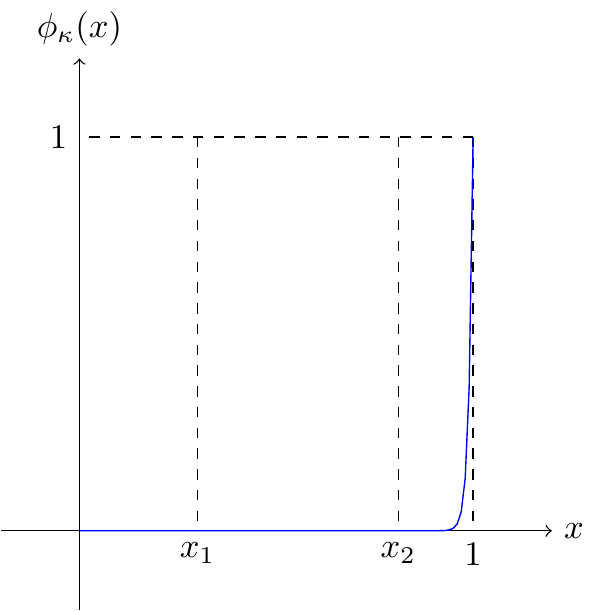}}}
\end{center}
\caption{Infinite versus finite population evolution features. Notice that depending on the landscape of the potential, we can have either a behaviour that is coordination like (figures (\textsc{B}) and (\textsc{D})) or dominance like figures (\textsc{C}) and (\textsc{E})). In the latter case, the ESS will prevent any chance of fixation of \A, and will term this an evolutionary blockage.} 
\end{figure}

\clearpage 

\subsubsection{Case II} For case II,  we have that $x_1$ is a local minimum for the fitness potential, while $x_2$ is a local maximum. Thus the global maximum of $\F$ can occur only at $x=0$ or $x=x_2$. This yields the following subcases:
\begin{description}
\item[$\F(x_2)\gg\kappa$] Fixation is given by \eqref{eqn:fp_asymp:cd}, with $\xs$ replaced by $x_2$.;
\item[$\F(x_2)\ll-\kappa$] Fixation is given by \eqref{eqn:dombya};
\item[$\F(x_2)\sim\kappa$] Let
\[
C=\exp\left(\F(x_2)/\kappa\right)
\quad\text{and}\quad
 \eta  =\frac{\theta(0)}{2\pi}\left[\N(\sigma^{-1}(1-x_2))-\N(-\sigma^{-1}x_2)\right].
\]
Then, we have:
\begin{equation}
\label{eqn:fix_two_cii}
\phi_\kappa(x)=\frac{\eta}{\eta + C}\left(1-\exp\left(-\theta(0)\frac{x}{\kappa}\right)\right)     +\frac{C}{\eta+C}\frac{\N(\sigma^{-1}(x-x_2))-\N(-\sigma^{-1}x_2)}{\N(\sigma^{-1}(1-x_2))-\N(\sigma^{-1}x_2)}.
\end{equation}
Notice that, if $C\to\infty$, then \eqref{eqn:fix_two_cii} reduces to \eqref{eqn:fp_asymp:cd}, whereas if $C\to0$ it reduces to \eqref{eqn:dombya}.
\end{description}

In this case, the ESS equilibria acts as a ``tunnel''  and the dynamics  almost certainly will  cross the evolution barrier imposed by $x_2$, and thus fixation of $\A$ is almost certain.
\begin{figure}[htbp]
\subfloat[$\F(x_2)\gg\kappa$]{
\parbox{0.33\linewidth}{\includegraphics[scale=1]{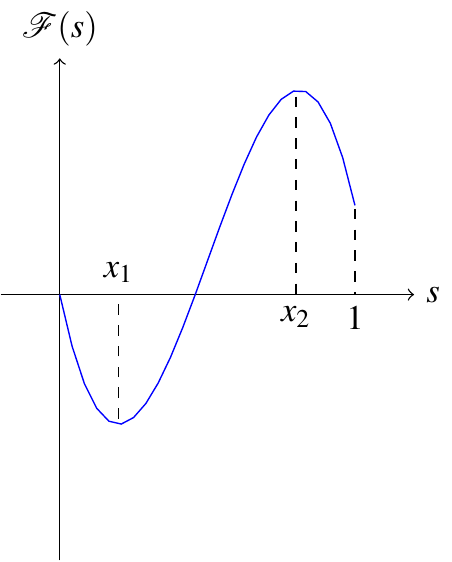}}}
\subfloat[$\F(x_2)\ll-\kappa$]{
\parbox{0.33\linewidth}{\includegraphics[scale=1]{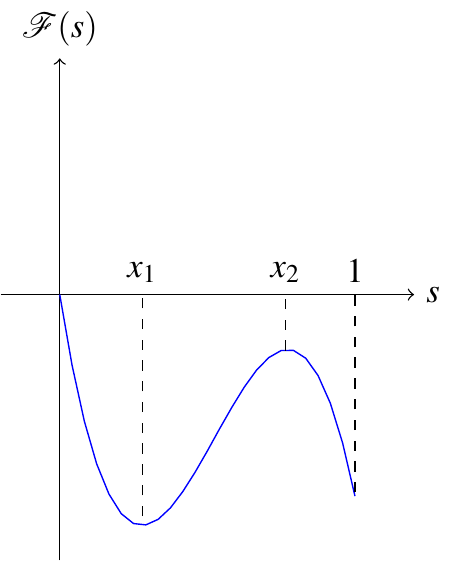}}}
\subfloat[$\F(x_2)\sim\kappa$]{
\parbox{0.33\linewidth}{\includegraphics[scale=1]{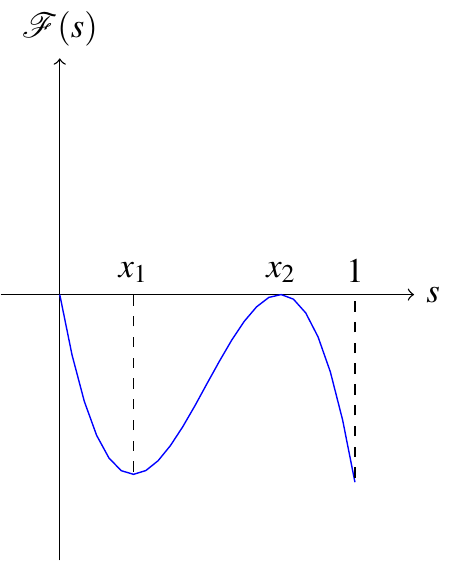}}}

\caption{Relative configuration of maxima and minima for case II. }
\end{figure}

\begin{figure}[htbp]
\begin{center}
\subfloat[Replicator Dynamics flow]{
\parbox{0.45\linewidth}{\includegraphics[scale=1]{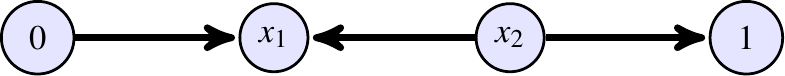}}}

\vspace{10pt}
\subfloat[Coordination]{
\parbox{0.45\linewidth}{\includegraphics[scale=1]{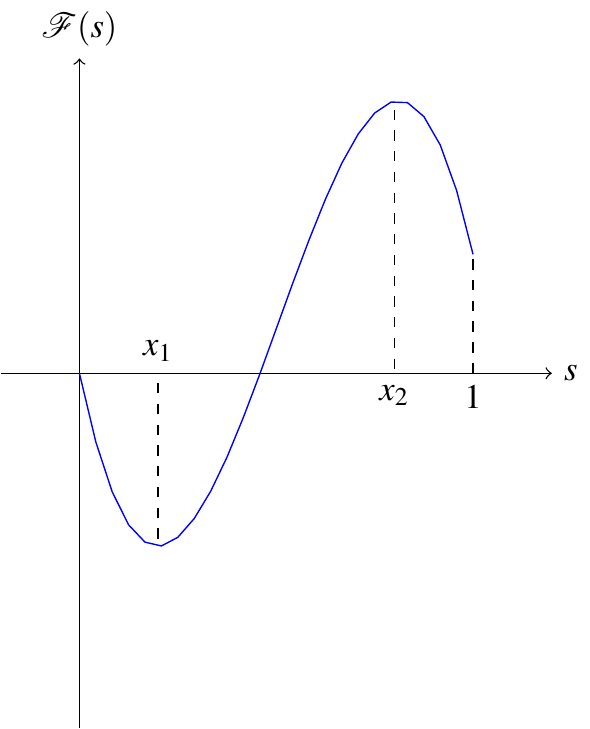}}}
\hfill
\subfloat[Tunnelling]{
\parbox{0.45\linewidth}{\includegraphics[scale=1]{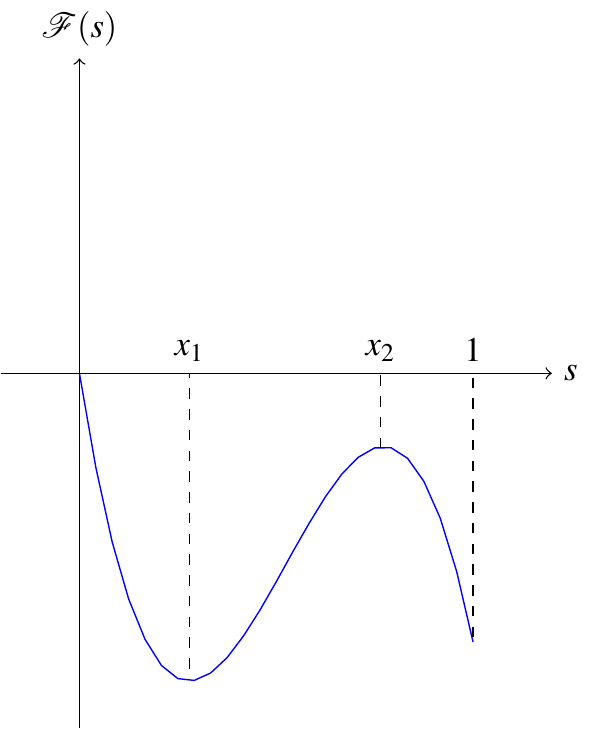}}}

\subfloat[Fixation probability for coordination]{
	\parbox{0.45\linewidth}{\includegraphics[scale=1]{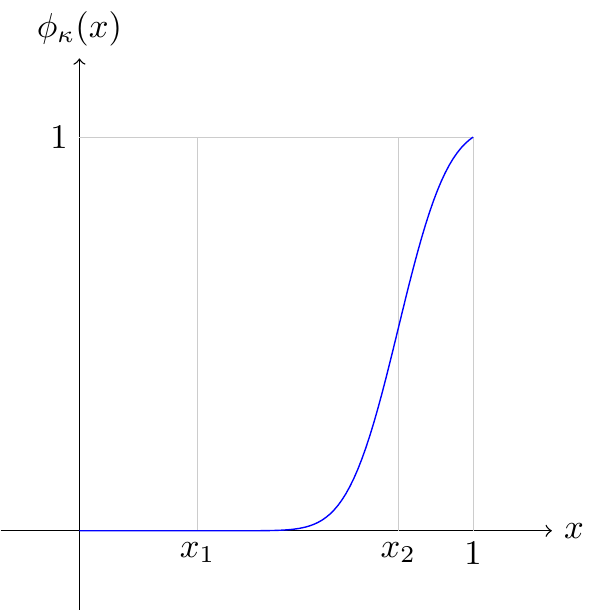}}}
\hfill
\subfloat[Fixation probability for coordination]{
	\parbox{0.45\linewidth}{\includegraphics[scale=1]{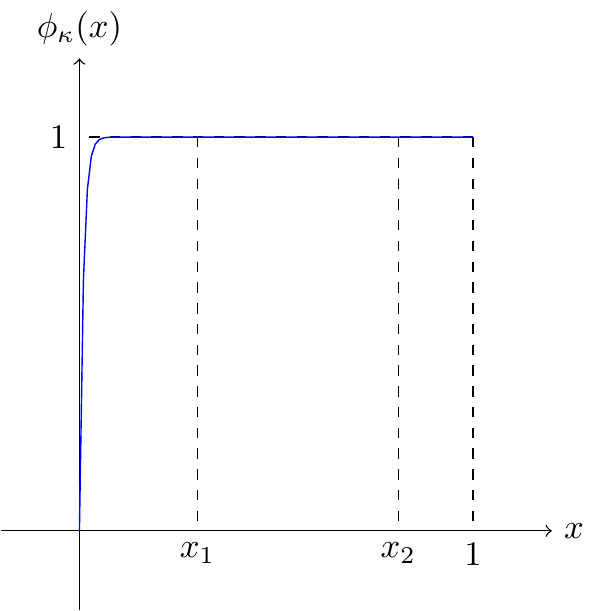}}}
\end{center}
\caption{Infinite versus finite population evolution features for case II. Again, we notice that  depending on the landscape of the potential, we can have either a behaviour that is coordination like (figures (\textsc{B}) and (\textsc{D})) or dominance like figures (\textsc{C}) and (\textsc{E})). In the latter case, the ESS will enhance the  of fixation of \A, which will be then almost certain. We will term this an evolutionary tunnelling.}
\end{figure}

\begin{remark}
Along the lines of remark~\ref{rmk:cc}, we point out that both equations \eqref{eqn:fix_two_ci} and  \eqref{eqn:fix_two_cii} can be seen as convex combinations of coordination with  either dominance by \B in the former or dominance by \A in the latter.
\end{remark}

\clearpage

\section{ESS in large  populations}

\label{sec:ess_lp}

\subsection{A continuous definition}
\label{ssec:ess_lpcont}

As already observed in the introduction, the definition of $\textsc{ESS}_N$ given by~\citet{NowakSasakiTaylorFudenberg,Nowak:06} has been widely accepted as a working definition for ESS for finite populations, although earlier definitions have also been given \citep{MaynardSmith1988,Schaffer1988,Neill2004}; see  also \citet{Ludwig:1975}. For the convenience of the reader, we recall this definition, but formulated in a compatible notation.

\begin{definition}[$\textsc{ESS}_N$]
\label{def:essn_nwk}
Consider a SBD process with a population size $N$, with $\Phi_N$ denoting the probability of fixation of $\A$. We say that   strategy  $\B$ is an $\textsc{ESS}_N$ if the following is satisfied:
\begin{enumerate}
\item $\Theta_N(1/N)<0$;
\item $\Phi_N(1/N)<1/N$;
\end{enumerate}
\end{definition}

For a family of regular SBD processes, definition~\ref{def:essn_nwk} can be recast in the continuous framework for fixation probabilities, provided $N$ is sufficient large as the next result shows:

\begin{theorem}
\label{thm:eqv_ess}
Consider a family of regular SBD processes with generalised log relative fitness $\Theta_N$  and let $\phi_\kappa$ be given by equation (\ref{eqn:asymp_frm}). Then, for sufficiently large $N$, \B is an $\textsc{ESS}_N$ if, and only if, we have that
\begin{enumerate}
\item $\phi_\kappa''(0)>0$;
\item  $\phi_\kappa(1/N)<1/N$.
\end{enumerate}
 \end{theorem}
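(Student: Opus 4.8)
The plan is to translate each of the two conditions in the discrete Definition~\ref{def:essn_nwk} into the continuous conditions on $\phi_\kappa$ stated in the theorem, using the continuous approximation of Theorem~\ref{thm:cont_approx} as the bridge. The second condition, $\phi_\kappa(1/N)<1/N$, is essentially a restatement of the discrete condition $\Phi_N(1/N)<1/N$, so the real content lies in showing that the sign condition $\Theta_N(1/N)<0$ is equivalent, for large $N$, to the convexity condition $\phi_\kappa''(0)>0$. I would start from the explicit form \eqref{eqn:asymp_frm}, namely $\phi_N(x)=d_N^{-1}\int_0^x\exp(\kappa_N^{-1}\F(s))\,\rd s$, which gives
\[
\phi_\kappa'(x)=d_N^{-1}\exp\left(\kappa_N^{-1}\F(x)\right),\qquad
\phi_\kappa''(x)=d_N^{-1}\kappa_N^{-1}\F'(x)\exp\left(\kappa_N^{-1}\F(x)\right).
\]
Since $\F(s)=-\int_0^s\theta(r)\,\rd r$, we have $\F'(x)=-\theta(x)$, and because $d_N>0$, $\kappa_N^{-1}>0$, and the exponential is positive, the sign of $\phi_\kappa''(0)$ is exactly the sign of $-\theta(0)$. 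Hence $\phi_\kappa''(0)>0$ if and only if $\theta(0)<0$.

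The remaining step is to connect $\theta(0)<0$ with the discrete sign condition $\Theta_N(1/N)<0$. By definition of the formal infinite population limit (Definition~\ref{def:taf}), $\Theta_N=\|\Theta_N\|_\infty(\theta+\bo{\epsilon_N})$ in the supremum norm, and $\|\Theta_N\|_\infty>0$. Evaluating at the boundary lattice point $x=1/N$ and using the hypothesis carried over from Theorem~\ref{thm:cont_approx} that $\theta$ does not vanish at the boundaries, together with the smoothness of $\theta$ (so that $\theta(1/N)=\theta(0)+\bo{1/N}$), I would argue that for sufficiently large $N$ the sign of $\Theta_N(1/N)$ agrees with the sign of $\theta(0)$. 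Combining the two equivalences yields $\Theta_N(1/N)<0\iff\theta(0)<0\iff\phi_\kappa''(0)>0$, which together with the direct identification of the two fixation-probability conditions completes the proof.

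The main obstacle I anticipate is the sign-transfer at the boundary point $1/N$: one must ensure that neither the $\bo{\epsilon_N}$ error in the $\Theta_N/\|\Theta_N\|_\infty\to\theta$ convergence nor the $\bo{1/N}$ error from moving off the boundary can overwhelm the order-one value $\theta(0)\neq 0$. This is precisely where the non-vanishing of $\theta$ at the boundary is used, and where the regularity assumptions of Definition~\ref{def:frm} guarantee $\epsilon_N\to 0$; the argument is that since $|\theta(0)|$ is bounded away from zero while both error terms vanish as $N\to\infty$, the sign is preserved for $N$ large. A secondary point requiring care is that condition (2) in each formulation refers to $\phi_\kappa$ (or $\Phi_N$) at the same lattice point $1/N$, so I would invoke the uniform relative accuracy in \eqref{eqn:fit_is_un} to justify that $\Phi_N(1/N)<1/N$ and $\phi_\kappa(1/N)<1/N$ are equivalent for large $N$, taking advantage of the fact that the error there is controlled \emph{relatively} rather than absolutely and hence does not corrupt the comparison with the small threshold $1/N$.
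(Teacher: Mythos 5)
Your proposal is correct and takes essentially the same route as the paper: the paper likewise transfers the sign of $\Theta_N(1/N)$ to $\theta(0)$ via uniform convergence and continuity, reduces $\phi_\kappa''(0)>0$ to $\theta(0)<0$ (using the ODE $\phi_\kappa''+\kappa^{-1}\theta\phi_\kappa'=0$, which is precisely your explicit differentiation of \eqref{eqn:asymp_frm}), and disposes of condition (2) by noting that the approximation error in \eqref{eqn:fit_is} is negligible for $N$ large. Your explicit attention to the non-vanishing of $\theta$ at the boundary and to the relative error bound \eqref{eqn:fit_is_un} makes the same points the paper treats more tersely.
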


\begin{proof}
Recall that for a family of regular SBD processes, we have that $\sfrac{\Theta_N}{\|\Theta_N\|_\infty}$ converges uniformly to $\theta$. Hence, for sufficiently large $N$ we have that $\Theta_N(1/N)<0$ if, and only if, $\theta(1/N)<0$. Moreover, continuity of $\theta$ and $N$ large enough implies that $\theta(0)<0$ if, and only if, $\theta(1/N)<0$.

On the other hand, notice that $\phi_\kappa$ satisfies
\[
\phi_\kappa''+\kappa^{-1}\theta\phi_\kappa'=0,\quad \phi_\kappa'>c'>0.
\]
Hence,
\[
\frac{\phi_\kappa''(0)}{\phi_\kappa'(0)}=-\kappa^{-1}\theta(0)>0,
\]
and then $\phi_\kappa''(0)>0$ as claimed.

Condition (2) is compatible with the error term given by \eqref{eqn:fit_is}, provided $N$ is sufficiently large.
\end{proof}

\begin{remark}
Conditions (1) and (2) in the theorem above   can be replaced by
\begin{enumerate}
\item $\phi''_\kappa(1/N)>0$;
\item $1/N-\phi_\kappa(1/N)>\dfrac{C}{N^\alpha}$ with $1<\alpha<2$. 
\end{enumerate}
In this case, $N$ has only to be large enough so that \eqref{eqn:fit_is} holds.  Notice that the conditions in Theorem~\ref{thm:eqv_ess} can be obtained from those above, by approximating $\phi_k''(1/N)$ by $\phi''_\kappa(0)$ in the first condition and by neglecting the term proportional to $N^{-\alpha}$ in the right hand side of  the second condition.
\end{remark}

Following the ideas in Theorem~\ref{thm:oneparameter}, we will allow for the possibility that $\theta$  may have also a dependence on a parameter vector $\lambda$, that we  will indicate by writing $\theta=\theta(x,\lambda)$. In this sense, the fixation is also dependent on $\lambda$, and will indicate this by writing $\phi_\kappa=\phi_{\kappa,\lambda}(x)$.

For evolutionary dynamics that satisfy the first condition for the existence of an ESS, the condition on the fixation probability of an invader required by (2) in Theorem~\ref{thm:eqv_ess} suggests the following definition:
\begin{definition}[Critical parameters]
The set of critical parameters for a population of size $N$ is defined as
\[
\mathfrak{C}_N=\left\{(\kappa,\lambda) \, \left| \phi_{\kappa,\lambda}\left(\frac{1}{N}\right)=\frac{1}{N}\right. \right\}.
\]
\end{definition}
Such a set can be thought of as defining a boundary, in parameter space, dividing $\textsc{ESS}_N$ strategies from non-$\textsc{ESS}_N$ strategies. For the particular case of linear fitness-differences --- i.e. in the weak-selection regime --- we have $\theta(x)=\gamma(x-\xs)$. Without loss of generality, we can use $\sigma^2=\kappa/\gamma$ and $\xs$ as parameters. For a fixed $\sigma$, one expects that for sufficient small $\xs$ the fixation probability of a invader will be equal to $\sfrac{1}{N}$, and hence that $(\sigma,\xs)$ will be in the critical parameter set. We shall term such an equilibrium $\xs$   a critical frequency at variance $\sigma$, or simply a critical frequency.

\subsection{$\textsc{ESS}_N$ in the quasi-neutral regime: generalisations of the one-third law}

Before tackling the general case,  we shall investigate the quasi-neutral case.   We begin by an asymptotic result for the continuous approximation given by \eqref{eqn:fit_is} that is valid in this regime:
\begin{theorem}
\label{thm:fix_qn}
Consider a regular family of SBD processes  in the quasi-neutral regime.  Then we have that
\begin{equation}
\label{eqn:fix_qn}
\phi_\kappa(x)=x+\kappa^{-1}\left[x\int_0^1(1-s)\theta(s)\,\rd s- \int_0^x(x-s)\theta(s)\,\rd s\right] + \kappa^{-2}x\RR(x;\kappa)+\bo{\kappa^{-3}},
\end{equation}
with $\RR=\bo{1}$ and smooth. Moreover,  its derivatives are also order one.
\end{theorem}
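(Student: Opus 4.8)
The plan is to start from the closed-form continuous approximation $\phi_\kappa(x)=d_N^{-1}\int_0^x\exp(\kappa^{-1}\F(s))\,\rd s$ given by \eqref{eqn:asymp_frm}, and to exploit that in the quasi-neutral regime $\kappa^{-1}=\kappa_N^{-1}\to0$, so that $\kappa^{-1}\F(s)$ is a small quantity uniformly in $s\in[0,1]$. The natural device is a Taylor expansion of the exponential:
\[
\exp\left(\kappa^{-1}\F(s)\right)=1+\kappa^{-1}\F(s)+\tfrac{1}{2}\kappa^{-2}\F(s)^2+\bo{\kappa^{-3}}.
\]
Since $\F\in C^2([0,1])$ with bounded derivatives (it is minus the antiderivative of $\theta$, and $\theta\in C^2$ by regularity), all the coefficient functions $\F,\F^2,\dots$ and their derivatives are order one, which is what will ultimately let me control $\RR$ and its derivatives. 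First I would substitute this expansion into both the numerator $\int_0^x\exp(\kappa^{-1}\F)\,\rd s$ and the denominator $d_N=\int_0^1\exp(\kappa^{-1}\F)\,\rd s$.

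Next I would perform the division of the two series. Writing $\F(s)=-\int_0^s\theta(r)\,\rd r$, the numerator becomes $x-\kappa^{-1}\int_0^x\!\int_0^s\theta(r)\,\rd r\,\rd s+\bo{\kappa^{-2}}$ and the denominator becomes $1-\kappa^{-1}\int_0^1\!\int_0^s\theta(r)\,\rd r\,\rd s+\bo{\kappa^{-2}}$. Expanding $d_N^{-1}=1+\kappa^{-1}\int_0^1\!\int_0^s\theta\,\rd r\,\rd s+\bo{\kappa^{-2}}$ and multiplying, the leading term is $x$ and the $\kappa^{-1}$ term is
\[
x\int_0^1\!\int_0^s\theta(r)\,\rd r\,\rd s-\int_0^x\!\int_0^s\theta(r)\,\rd r\,\rd s.
\]
The key calculation here is to recognise these iterated integrals as the single integrals appearing in \eqref{eqn:fix_qn}. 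Swapping the order of integration, $\int_0^x\!\int_0^s\theta(r)\,\rd r\,\rd s=\int_0^x(x-r)\theta(r)\,\rd r$, and similarly $\int_0^1\!\int_0^s\theta(r)\,\rd r\,\rd s=\int_0^1(1-r)\theta(r)\,\rd r$; this Fubini step is the bookkeeping heart of the argument and reproduces the stated $\kappa^{-1}$ coefficient exactly.

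For the remainder, I would collect every term of the quotient at order $\kappa^{-2}$ and higher into $\kappa^{-2}x\RR(x;\kappa)$, factoring out $x$ because each contribution to the numerator carries an integral $\int_0^x(\cdots)\,\rd s$ which is itself divisible by nothing in general — so the cleaner route is to \emph{define} $\RR(x;\kappa)\bydef\kappa^{2}x^{-1}\bigl[\phi_\kappa(x)-x-\kappa^{-1}(\cdots)\bigr]$ and then verify it is $\bo{1}$ with $\bo{1}$ derivatives rather than to expand it term by term. The main obstacle, and the point needing genuine care, is precisely this claim that $\RR$ and its $x$-derivatives are uniformly order one: one must check that dividing by $x$ does not blow up near $x=0$. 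This is handled by noting that the bracketed quantity vanishes to order $x$ at the origin — indeed $\phi_\kappa(0)=0$ and the $\kappa^{-1}$ correction also vanishes at $x=0$ — so the quotient by $x$ extends smoothly, with bounds inherited from the boundedness of $\F$ and its derivatives together with $d_N^{-1}=1+\bo{\kappa^{-1}}$. The derivative estimates follow by differentiating the explicit integral representation, using $\phi_\kappa'(x)=d_N^{-1}\exp(\kappa^{-1}\F(x))$, whose expansion in $\kappa^{-1}$ has order-one coefficients as well.
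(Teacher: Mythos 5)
Your proposal is correct and follows essentially the same route as the paper's own proof: expand $\exp(\kappa^{-1}\F)$ in powers of $\kappa^{-1}$, integrate term by term, divide the numerator series by the denominator series, and read off the $\kappa^{-1}$ coefficient. The only cosmetic differences are that the paper converts $\int_0^x\F(s)\,\rd s$ into $-\int_0^x(x-s)\theta(s)\,\rd s$ by integration by parts rather than by Fubini, and it factors $x$ out of the remainder by explicitly invoking Hadamard's lemma --- which is precisely the ``quotient by $x$ extends smoothly because the bracketed quantity vanishes at the origin'' argument you sketch.
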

The proof of this theorem is given in Appendix~\ref{ap:qnr_proof}.

With this asymptotic result,  we can now obtain a general condition so that strategy  $\B$ is an $\textsc{ESS}_N$, when $N$ is large and we are in a restricted quasi-neutral regime. 

\begin{theorem}
\label{thm:gen_onethird}
Assume that we are in the quasi-neutral regime with $\kappa^{-1}=\lo{\sfrac{1}{N}}$,  and that we are in the coordination case. Then strategy $\B$ is an $\textsc{ESS}_N$ if, and only if,
\begin{enumerate}
\item $\theta(0)\ll -N^{-1}$
\item 
\[
\int_0^1(1-s)\theta(s)\,\rd s<\frac{\theta(0)}{2N}+\lo{\frac{1}{N}}.
\]
\end{enumerate}
\end{theorem}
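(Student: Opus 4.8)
The plan is to use the asymptotic expansion from Theorem~\ref{thm:fix_qn} and translate the two conditions of Theorem~\ref{thm:eqv_ess} into the stated inequalities. First I would invoke the equivalence result: in the regular SBD framework, $\B$ is an $\textsc{ESS}_N$ if and only if $\phi_\kappa''(0)>0$ and $\phi_\kappa(1/N)<1/N$. The strategy is to evaluate both of these conditions using the explicit formula \eqref{eqn:fix_qn}, keeping careful track of the error terms under the assumption $\kappa^{-1}=\lo{\sfrac{1}{N}}$, which makes $\kappa^{-1}$ a small parameter comparable to or smaller than $N^{-1}$.

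For the first condition, I would differentiate \eqref{eqn:fix_qn} twice in $x$. The leading term $x$ contributes nothing to the second derivative, and differentiating the $\kappa^{-1}$-bracket twice in $x$ gives $-\kappa^{-1}\frac{\rd}{\rd x}\int_0^x\theta(s)\,\rd s=-\kappa^{-1}\theta(x)$ at the relevant order, so that $\phi_\kappa''(x)=-\kappa^{-1}\theta(x)+\bo{\kappa^{-2}}$, consistent with the ODE $\phi_\kappa''+\kappa^{-1}\theta\phi_\kappa'=0$ used in the proof of Theorem~\ref{thm:eqv_ess}. Evaluating at $x=0$, the sign of $\phi_\kappa''(0)$ is governed by $-\theta(0)$, so $\phi_\kappa''(0)>0$ is equivalent to $\theta(0)<0$; the sharpened requirement $\theta(0)\ll -N^{-1}$ arises from demanding that this term dominate the $\bo{\kappa^{-2}}$ correction given that $\kappa^{-1}=\lo{\sfrac{1}{N}}$, i.e. that the neutral point be genuinely on the correct side relative to the order-$1/N$ resolution. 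This recovers condition (1).

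For the second condition, I would substitute $x=1/N$ into \eqref{eqn:fix_qn}. The term $\int_0^{1/N}(1/N-s)\theta(s)\,\rd s$ is $\bo{N^{-2}}$ since the domain of integration has length $1/N$ and the weight $(1/N-s)$ is itself $\bo{N^{-1}}$ there; by Taylor expansion it equals $\frac{\theta(0)}{2N^2}+\lo{N^{-2}}$. Hence
\[
\phi_\kappa(1/N)=\frac{1}{N}+\kappa^{-1}\left[\frac{1}{N}\int_0^1(1-s)\theta(s)\,\rd s-\frac{\theta(0)}{2N^2}\right]+\lo{\kappa^{-1}N^{-2}}.
\]
The inequality $\phi_\kappa(1/N)<1/N$ then reduces, after cancelling $1/N$ and dividing by the positive factor $\kappa^{-1}N^{-1}$, to $\int_0^1(1-s)\theta(s)\,\rd s<\frac{\theta(0)}{2N}+\lo{\sfrac{1}{N}}$, which is exactly condition (2). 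Here the coordination hypothesis ($\theta'>0$, unique interior zero) is what guarantees $\phi_\kappa'>0$ and ensures the continuous approximation is uniformly valid down to $x=1/N$, so that substituting this boundary-scale value into \eqref{eqn:fix_qn} is legitimate.

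The main obstacle I anticipate is the error-term bookkeeping rather than any single computation. One must verify that the $\kappa^{-2}x\RR(x;\kappa)$ and $\bo{\kappa^{-3}}$ remainders in \eqref{eqn:fix_qn}, when evaluated at $x=1/N$ and compared against the retained $\bo{\kappa^{-1}N^{-1}}$ term, are genuinely subdominant; this is precisely where the restriction $\kappa^{-1}=\lo{\sfrac{1}{N}}$ is used to control the ratios $\kappa^{-2}N^{-1}$ and $\kappa^{-3}$ against $\kappa^{-1}N^{-1}$. Similarly, in passing from $\theta(0)<0$ to the sharpened $\theta(0)\ll -N^{-1}$, one has to confirm that the order-$\kappa^{-2}$ correction to $\phi_\kappa''(0)$ cannot flip the sign, which requires comparing $\kappa^{-2}$ against $\kappa^{-1}\theta(0)$ and hence forces $\theta(0)$ to exceed the resolution scale $N^{-1}$. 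Making these smallness comparisons uniform and precise, using that $\RR$ and its derivatives are $\bo{1}$ as asserted in Theorem~\ref{thm:fix_qn}, is the delicate part of the argument.
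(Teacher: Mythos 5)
Your proposal is correct and takes essentially the same route as the paper's own proof: both invoke Theorem~\ref{thm:eqv_ess}, differentiate the quasi-neutral expansion \eqref{eqn:fix_qn} twice to get condition (1), and Taylor-expand $\int_0^{\sfrac{1}{N}}(\sfrac{1}{N}-s)\theta(s)\,\rd s$ to $\sfrac{\theta(0)}{(2N^2)}$ before dividing by the positive factor $\kappa^{-1}N^{-1}$ to get condition (2). Your error bookkeeping (noting $\kappa^{-2}N^{-1}$ and $\kappa^{-3}$ are $\lo{\kappa^{-1}N^{-2}}$ since $\kappa^{-1}N\to0$) is equivalent to the paper's explicit $\kappa^{-1}\RR(\sfrac{1}{N},\kappa)+\bo{\sfrac{1}{N^2}}+\bo{N\kappa^{-2}}$ remainder.
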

\begin{proof}
Differentiating twice \eqref{eqn:fix_qn} yields
\[
\phi_\kappa''(x)=-\kappa^{-1}\theta(x) +\kappa^{-2}\left[x\RR''(x;\kappa)+2\RR'(x;\kappa)\right]+\bo{\kappa^{-3}},
\]
Hence $\phi_\kappa''(0)>0$ if, and only if, $\theta(0)<2\kappa^{-1}\RR'(0;\kappa) +\bo{\kappa^{-2}}$.

Now, we write
\[
\phi_\kappa(x)=x\left[1+\kappa^{-1}\int_0^1(1-s)\theta(s)\,\rd s\right]  - \kappa^{-1} G(x)+\kappa^{-2}x\RR(x;\kappa),\qquad G(x)=\int_0^x(x-s)\theta(s)\,\rd s.
\]
On noticing that $G(0)=G'(0)=0$, and that $G''(0)=\theta(0)$. We find that 
\begin{align*}
\phi_\kappa(1/N)-\frac{1}{N}&= \frac{1}{N}\left[1+\kappa^{-1}\int_0^1(1-s)\theta(s)\,\rd s +\kappa^{-2}\RR(\sfrac{1}{N},\kappa)\right]  -\kappa^{-1} G(\sfrac{1}{N})-\frac{1}{N}+\bo{\kappa^{-3}}\\
&=\frac{1}{\kappa N}\left[\int_0^1(1-s)\theta(s)\,\rd s - \frac{\theta(0)}{2N}+ \kappa^{-1}\RR(\sfrac{1}{N},\kappa)+\bo{\sfrac{1}{N^2}}+\bo{N\kappa^{-2}}\right].
\end{align*}
The last expression is negative if, and only if,
\[
\int_0^1(1-s)\theta(s)\,\rd s<\frac{\theta(0)}{2N}-\kappa^{-1}\RR(\sfrac{1}{N},\kappa) +  \bo{\frac{1}{N^2}}+\bo{N\kappa^{-2}}
=\frac{\theta(0)}{2N}+\lo{\frac{1}{N}}.
\]
\end{proof}

\begin{remark}
\label{rmk:sgol}
Naturally, for very large $N$, we will have  $\kappa^{-1}$ very small in the quasi-neutral regime. It is possible to show  that  $\RR(0,\kappa)=0$ and that $\RR$ is continuous. Hence, we can  rephrase  the conditions of the Theorem~\ref{thm:gen_onethird}  as that the left hand side of all inequalities should be negative. However, these conditions are sufficient, but not necessary in general. On the other hand, if we write $\Psi^{(\A,\B)}=1+\omega(N)\psi^{(\A,\B)}$, with $\omega(N)\ll1$, we can then identify  $\kappa^{-1}$ with  $N\omega(N)$ up to a  multiple that is independent of $N$. Thus, in this more conventional setting, we see that if $N$ is not large enough so that $\sfrac{1}{N}$ is not negligible, then we need the further assumption that $N^2\omega(N)\ll1$ instead of the usual assumption $N\omega(N)\ll1$. These observations go along  with the results in \citet{Wuetal2010}. 
\end{remark}
\begin{remark}
For large $N$, and for $\theta$ with a parameter dependence given by $\theta(x,\lambda)$, we see that Theorem~\ref{thm:gen_onethird} specifies that the critical parameter set can be approximated for small $\kappa$ by the following equation:
\[
\int_0^1(1-s)\theta(s,\lambda)\,\rd s=0.
\]
\end{remark}
In particular case where the  leading-order log-difference of fitness is linear, we recover a slightly generalised version of the calculation in \citet{Traulsenetal2006b} and we obtain the celebrated one-third law:
\begin{proposition}[one-third law]
Consider the case that $\theta$ is linear, i.e., $\theta(x)=\alpha(x-\xs)$, and assume that we are in the quasi-neutral regime. Then strategy $\B$ is an $\textsc{ESS}_N$ if, and only if,  $\xs>\sfrac{1}{3}+\bo{\sfrac{1}{N},\kappa^{-1}}$.
\end{proposition}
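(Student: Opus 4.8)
The plan is to specialise the general condition from Theorem~\ref{thm:gen_onethird} to the linear case $\theta(x)=\alpha(x-\xs)$ and read off the threshold on $\xs$. Since we are in the coordination case we must have $\theta'=\alpha>0$; the hypotheses of Theorem~\ref{thm:gen_onethird} then require $\theta(0)=-\alpha\xs\ll-N^{-1}$, so condition (1) reduces to $\alpha\xs\gg N^{-1}$, which holds automatically once $\xs$ is bounded away from zero. The real content is condition (2), and the main step is simply to evaluate the two quantities appearing there for the linear profile.

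First I would compute the integral $\int_0^1(1-s)\theta(s)\,\rd s$. Substituting $\theta(s)=\alpha(s-\xs)$ and expanding gives
\begin{equation}
\int_0^1(1-s)\alpha(s-\xs)\,\rd s=\alpha\int_0^1\left(s-\xs-s^2+\xs s\right)\rd s=\alpha\left(\tfrac{1}{2}-\xs-\tfrac{1}{3}+\tfrac{\xs}{2}\right)=\alpha\left(\tfrac{1}{6}-\tfrac{\xs}{2}\right).
\end{equation}
Next I would substitute this together with $\theta(0)=-\alpha\xs$ into the inequality of condition (2) of Theorem~\ref{thm:gen_onethird}, namely $\int_0^1(1-s)\theta(s)\,\rd s<\sfrac{\theta(0)}{2N}+\lo{\sfrac{1}{N}}$, obtaining
\begin{equation}
\alpha\left(\tfrac{1}{6}-\tfrac{\xs}{2}\right)<-\frac{\alpha\xs}{2N}+\lo{\frac{1}{N}}.
\end{equation}
Dividing through by the positive constant $\alpha$ and rearranging isolates $\xs$: the leading-order inequality $\tfrac{1}{6}-\tfrac{\xs}{2}<0$ is equivalent to $\xs>\sfrac{1}{3}$, while the $\sfrac{1}{N}$ correction from $-\xs/(2N)$ and the $\lo{\sfrac{1}{N}}$ remainder get absorbed into the error term $\bo{\sfrac{1}{N},\kappa^{-1}}$ claimed in the statement.

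The only genuinely delicate point is bookkeeping of the error terms, so I would treat that carefully rather than the integral. One must check that the correction $-\xs/(2N)$, after dividing by $\alpha/2$, contributes an $\bo{\sfrac{1}{N}}$ shift to the threshold, and that the $\lo{\sfrac{1}{N}}$ term in Theorem~\ref{thm:gen_onethird} (which, per Remark~\ref{rmk:sgol}, carries the $\kappa^{-1}\RR(\sfrac{1}{N},\kappa)$ contribution) is what produces the $\kappa^{-1}$ entry in $\bo{\sfrac{1}{N},\kappa^{-1}}$. Since $\alpha$ is an order-one constant independent of $N$, none of these divisions degrade the orders, and I expect this step to be routine once stated precisely; the substantive work has already been done in establishing Theorem~\ref{thm:gen_onethird}, so the proposition is essentially a one-line corollary modulo the integral evaluation above.
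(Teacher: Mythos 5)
Your proposal is correct and follows essentially the same route as the paper's proof: specialise Theorem~\ref{thm:gen_onethird} to $\theta(x)=\alpha(x-\xs)$, compute $\int_0^1(1-s)\theta(s)\,\rd s=\frac{\alpha}{2}\left(\frac{1}{3}-\xs\right)$, and read off the threshold $\xs>\sfrac{1}{3}$. Your extra bookkeeping of the $\bo{\sfrac{1}{N}}$ and $\kappa^{-1}$ corrections is a slightly more careful write-up of what the paper leaves implicit, but it is the same argument.
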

\begin{proof}
Since $\theta(0)=-\alpha\xs<0$, we compute 
\[
\int_0^1(1-s)\theta(s)\,\rd s=\alpha\left[\frac{1}{6}-\frac{\xs}{2}\right]=\frac{\alpha}{2}\left[\frac{1}{3}-\xs\right].
\]
and the later expression is negative if, and only if, $\xs>\sfrac{1}{3}$. The result then follows from Theorem~\ref{thm:gen_onethird}.
\end{proof}

The one-third law has been generalised to  $d$-player games for Moran processes \citep{Kurokawa:Ihara:2009,Gokhale:Traulsen:2010} and for a class of exchangeable processes \citep{Lessard:2011}. We shall now recover it as a special case of  Theorem~\ref{thm:gen_onethird}.

\begin{proposition}[Generalised one-third law for $d$-player games]\label{prop:dgames}
 Consider a $d$-player game, in a large population.  We have that  $\B$ is an $\textsc{ESS}_N$, if $a_0-b_0<0$, and if 
\[
\sum_{k=0}^{d-1}(d-k)a_k>\sum_{k=0}^{d-1}(d-k)b_k
\]
\end{proposition}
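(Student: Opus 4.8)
The plan is to reduce the statement to Theorem~\ref{thm:gen_onethird}, and to verify its two hypotheses for the specific formal limit $\theta$ attached to a $d$-player game by equation~\eqref{eq:dgames},
\[
\theta(x)=\gamma\sum_{k=0}^{d-1}\binom{d-1}{k}x^k(1-x)^{d-1-k}(a_k-b_k),\qquad \gamma>0 .
\]
Note that the sufficiency direction I am after rests only on the quasi-neutral expansion of Theorem~\ref{thm:fix_qn}, so the coordination hypothesis of Theorem~\ref{thm:gen_onethird} is not genuinely needed for this one-directional claim. First I would check the invadability condition: since every monomial $x^k(1-x)^{d-1-k}$ with $k\geq1$ vanishes at $x=0$, only the $k=0$ term survives and $\theta(0)=\gamma(a_0-b_0)$. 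The hypothesis $a_0-b_0<0$ then gives an order-one negative value $\theta(0)<0$, hence $\theta(0)\ll-N^{-1}$, which is condition (1); equivalently $\Theta_N(1/N)<0$, the first requirement of Definition~\ref{def:essn_nwk}.

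The core of the argument is the evaluation of the weighted integral in condition (2). The key observation is that the weight $(1-s)$ raises the exponent of $(1-s)$ by one, turning each term into a Beta integral:
\[
\int_0^1(1-s)\,s^k(1-s)^{d-1-k}\,\rd s=\int_0^1 s^k(1-s)^{d-k}\,\rd s=\frac{k!\,(d-k)!}{(d+1)!}.
\]
Combining this with $\binom{d-1}{k}$ and simplifying the factorials — the decisive collapse being $\binom{d-1}{k}\tfrac{k!(d-k)!}{(d+1)!}=\tfrac{d-k}{d(d+1)}$ — reduces the binomial weights to the linear weights $(d-k)$ and yields
\[
\int_0^1(1-s)\theta(s)\,\rd s=\frac{\gamma}{d(d+1)}\sum_{k=0}^{d-1}(d-k)(a_k-b_k).
\]
I expect this Beta-integral reduction, together with the attendant binomial bookkeeping, to be the main (though elementary) obstacle: it is precisely the step that must produce the weights $(d-k)$ rather than, say, $(k+1)$, and hence it is where the direction of the final inequality is settled.

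Since $\theta(0)/(2N)=\bo{1/N}$ and $\gamma/[d(d+1)]>0$, condition (2) of Theorem~\ref{thm:gen_onethird}, namely $\int_0^1(1-s)\theta(s)\,\rd s<\theta(0)/(2N)+\lo{1/N}$, holds for all sufficiently large $N$ exactly when the order-one quantity $\sum_{k=0}^{d-1}(d-k)(a_k-b_k)$ is negative, i.e. when $\sum_{k=0}^{d-1}(d-k)a_k<\sum_{k=0}^{d-1}(d-k)b_k$; this is the sign that forces $\phi_\kappa(1/N)<1/N$. I would therefore present the proof as establishing $\B$ as an $\textsc{ESS}_N$ precisely under this \emph{reversed-weight} inequality, and I would explicitly flag that the inequality printed in the statement must be read with ``$<$'' in place of ``$>$'': the condition $\sum(d-k)a_k>\sum(d-k)b_k$ is exactly the one under which an $\A$-invader fixes with probability \emph{above} neutral, so that $\B$ is then \emph{not} an $\textsc{ESS}_N$. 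I will not claim that the computed condition coincides with the stated one.

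Finally I would confirm that the subdominant terms carried in Theorem~\ref{thm:gen_onethird} — the $\kappa^{-1}\RR(1/N,\kappa)$ contribution and the $\bo{N\kappa^{-2}}$ error — are $\lo{1/N}$ in the quasi-neutral regime $\kappa^{-1}=\lo{1/N}$, so they cannot overturn the order-one sign of the weighted sum; this closes the sufficiency argument. As a consistency check I would specialise to $d=2$ with $a_1=a,\,a_0=b,\,b_1=c,\,b_0=d$, where $\sum_{k}(2-k)(a_k-b_k)=2(b-d)+(a-c)$ is negative iff the interior root satisfies $\xs>\sfrac13$, recovering the linear one-third law and so corroborating that the correct decisive inequality is the one with ``$<$''.
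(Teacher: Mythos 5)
Your proposal is correct and follows essentially the same route as the paper: check $\theta(0)=\gamma(a_0-b_0)<0$, then apply Theorem~\ref{thm:gen_onethird} after evaluating Beta integrals against the $\theta$ of equation~\eqref{eq:dgames}. The only computational difference is cosmetic: you integrate once against the weight $(1-s)$, obtaining $\int_0^1(1-s)\theta(s)\,\rd s=\frac{\gamma}{d(d+1)}\sum_{k=0}^{d-1}(d-k)(a_k-b_k)$ directly, whereas the paper computes the two moments $\int_0^1\theta(s)\,\rd s$ and $\int_0^1 s\,\theta(s)\,\rd s$ separately and compares them; the two computations are equivalent. More importantly, your flag about the direction of the inequality is justified: the paper's own proof ends with $\sum_{k}\frac{a_k-b_k}{d}<\sum_{k}\frac{k+1}{d(d+1)}(a_k-b_k)$, which simplifies precisely to $\sum_{k=0}^{d-1}(d-k)a_k<\sum_{k=0}^{d-1}(d-k)b_k$ --- the reverse of the printed statement --- so the closing phrase ``which yields the desired inequality'' papers over a sign typo in the proposition as stated. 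Your $d=2$ consistency check (recovering $\xs>1/3$) confirms that ``$<$'' is the correct direction for $\B$ to be an $\textsc{ESS}_N$; the printed ``$>$'' is the standard condition for selection to \emph{favour} the $\A$ invader, i.e.\ exactly the negation. Finally, your two side remarks --- that the coordination hypothesis of Theorem~\ref{thm:gen_onethird} is not genuinely needed for the one-directional sufficiency claim (relevant, since a $d$-player $\theta$ may have up to $d-1$ roots and the paper never verifies coordination here), and that the subdominant terms $\kappa^{-1}\RR(\sfrac{1}{N},\kappa)$ and $\bo{N\kappa^{-2}}$ are $\lo{\sfrac{1}{N}}$ in the regime $\kappa^{-1}=\lo{\sfrac{1}{N}}$ --- are both sound and close small gaps the paper leaves implicit.
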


\begin{proof}
We begin by recalling equation~(\ref{eq:dgames}) that
\[
\theta(s)=\gamma\sum_{k=0}^{d-1}\binom{d-1}{k}s^k(1-s)^{d-1-k}(a_k-b_k).
\]
Then we have $a_0-b_0=\theta(0)<0$. On the other hand, we can compute
\begin{align*}
\int_0^1s^k(1-s)^{d-1-k}\,\rd s& = \frac{(d-k-1)!k!}{d!}=\frac{1}{d}\binom{d-1}{k}^{-1}\\
\int_0^1s^{k+1}(1-s)^{d-1-k}\,\rd s &= \frac{(d-k-1)!(k+1)!}{(d+1)!}=\frac{k+1}{d(d+1)}\binom{d-1}{k}^{-1}.
\end{align*}
Hence
\[
\gamma\int_0^1\theta(s)\,\rd s=\sum_{k=0}^{d-1}\frac{a_k-b_k}{d}
\quad\text{and}\quad
\gamma\int_0^1s\theta(s)\,\rd s=\sum_{k=0}^{d-1}\frac{k+1}{d(d+1)}(a_k-b_k).
\]
Then Theorem~\ref{thm:gen_onethird} implies that $\B$ will be an $\textsc{ESS}_N$ if 
\[
\int_0^1\theta(s)\,\rd s<\int_0^1s\theta(s)\,\rd s,
\]
which using the integrals above yields
\[
\sum_{k=0}^{d-1}\frac{a_k-b_k}{d}<\sum_{k=0}^{d-1}\frac{k+1}{d(d+1)}(a_k-b_k), 
\]
which yields the desired inequality.
\end{proof}

\begin{remark}
\label{rmk:otl}
The one-third law can be rephrased as saying that the critical frequency is 1/3 for very large variances. Notice also that Theorem~\ref{thm:gen_onethird} shows that the one-third law is not universal---as already noticed in \cite{Wuetal2010} in the context of linear fitness, but considering  the effects of higher order perturbations. As a matter of fact, it implies that it depends strongly on the fitness difference, given the integral nature of the condition obtained.
\end{remark}

\subsection{Risk dominance}

Let $\rho_\A$ be the probability of fixation of an individual $\A$ in a population with $N-1$ $\B$ individuals, and $\rho_\B$ be the probability of fixation of a $\B$ individual in a population with $N-1$ $\A$ individuals. This usually referred as the invasion probabilities.  
Following~\citet{Gokhale:Traulsen:2014} and references therein,  we shall say that $\A$ is risk dominant\footnote{This concept was introduced in the static formulation of game theory by~\citet{Harsanyi_Selten} as a Nash equilibrium refinement; its extension to EGT was made in~\citet{Kandori_Mailath_Rob}; see also~\citet{NowakSasakiTaylorFudenberg}.} over \B if $\rho_A>\rho_B$.

We start with the following result
\begin{theorem}
\label{thm:rd}
Assume weak-selection. Then strategy $\A$ in a finite large population  is risk dominant if and only if the fitness potential is negative at $x=1$, i.e., if $\F(1)<0$.
\end{theorem}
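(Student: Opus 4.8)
The plan is to avoid the continuous approximation entirely at the decisive step and instead extract an \emph{exact} expression for the ratio $\sfrac{\rho_\B}{\rho_\A}$ from the discrete fixation formula \eqref{eqn:fix_disc}, reducing risk dominance to the sign of a single sum. First I would identify the two invasion probabilities with values of $\Phi_N$: one $\A$ invading a resident population of $\B$'s starts at frequency $\sfrac{1}{N}$, so $\rho_\A=\Phi_N(\sfrac{1}{N})$; one $\B$ invading a resident population of $\A$'s amounts to $\A$ going extinct from frequency $1-\sfrac{1}{N}$, so $\rho_\B=1-\Phi_N(1-\sfrac{1}{N})$.

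Next, setting $Q_m=\prod_{i=1}^{m-1}\Delta^-_N(\sfrac{i}{N})/\Delta^+_N(\sfrac{i}{N})$ (with the empty product $Q_1=1$), formula \eqref{eqn:fix_disc} reads $\Phi_N(\sfrac{m}{N})=c_N^{-1}\sum_{k=1}^{m}Q_k$ with $c_N=\sum_{k=1}^{N}Q_k$. A short telescoping computation then gives the exact identity
\[
\frac{\rho_\B}{\rho_\A}=\frac{c_N^{-1}Q_N}{c_N^{-1}Q_1}=\prod_{i=1}^{N-1}\frac{\Delta^-_N(\sfrac{i}{N})}{\Delta^+_N(\sfrac{i}{N})}=\exp\left(-\sum_{i=1}^{N-1}\Theta_N\left(\frac{i}{N}\right)\right),
\]
where the last step uses $\Theta_N=\log(\Delta^+_N/\Delta^-_N)$. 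Consequently $\A$ is risk dominant, i.e.\ $\rho_\A>\rho_\B$, if and only if $S_N:=\sum_{i=1}^{N-1}\Theta_N(\sfrac{i}{N})>0$; this also confirms the assertion $C=\sfrac{\rho_\B}{\rho_\A}$ made in Remark~\ref{rmk:cc}.

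It then remains to determine the sign of $S_N$ for large $N$. I would write $S_N=\|\Theta_N\|_\infty\sum_{i=1}^{N-1}\tilde{\theta}_N(\sfrac{i}{N})$ with $\tilde{\theta}_N:=\Theta_N/\|\Theta_N\|_\infty$, split $\tilde{\theta}_N=\theta+(\tilde{\theta}_N-\theta)$, and estimate the two pieces. The perturbation contributes at most $(N-1)\epsilon_N$ in absolute value, while the main term is a left Riemann sum of the $C^2$ function $\theta$, so $\sum_{i=1}^{N-1}\theta(\sfrac{i}{N})=N\int_0^1\theta(s)\,\rd s+\bo{1}=-N\F(1)+\bo{1}$, using $\F(1)=-\int_0^1\theta(r)\,\rd r$. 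Recalling $\kappa_N^{-1}=N\|\Theta_N\|_\infty$, these combine into
\[
S_N=\kappa_N^{-1}\left(-\F(1)+\bo{\sfrac{1}{N},\epsilon_N}\right).
\]
Since $\kappa_N^{-1}>0$, the sign of $S_N$ equals that of $-\F(1)+\bo{\sfrac{1}{N},\epsilon_N}$, and both error terms tend to $0$ as $N\to\infty$ (by regularity, $\epsilon_N\to0$). Hence, as long as $\F(1)\neq0$, for $N$ large enough $S_N>0$ holds precisely when $\F(1)<0$, which is the assertion.

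The main obstacle is precisely this last error control: one must guarantee that the Riemann-sum remainder $\bo{1}$ and the uniform-convergence defect $(N-1)\epsilon_N$ are negligible against the leading term $-N\F(1)$. Factoring out $\kappa_N^{-1}$ turns this into a comparison of the order-one quantity $\F(1)$ with $\bo{\sfrac{1}{N},\epsilon_N}\to0$, which is decisive for large $N$ exactly when $\F(1)$ is bounded away from zero; the threshold $\F(1)=0$ is the critical case separating the two risk-dominance regimes and is excluded by the strict inequality in the statement. The argument uses only the regularity of the family and the exact formula \eqref{eqn:fix_disc}, so it does not invoke the continuous approximation of Theorem~\ref{thm:cont_approx} at all.
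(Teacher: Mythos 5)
Your proof is correct, and it takes a genuinely different route from the paper's. The paper stays inside its continuous framework: it identifies $\rho_\A=\phi_N(\sfrac{1}{N})$ and $\rho_\B=1-\phi_N(1-\sfrac{1}{N})$, expands the integral defining $\phi_N$ near $s=0$ and $s=1$ via the mean value theorem, and arrives at the multiplicative asymptotics $\rho_\A/\rho_\B=\exp\left(-\kappa_N^{-1}\F(1)\right)\left(1+\bo{\sfrac{\kappa_N^{-1}}{N}}\right)$, where weak-selection ($\kappa_N^{-1}/N=\|\Theta_N\|_\infty\to0$) is precisely what makes the relative error vanish. You instead bypass Theorem~\ref{thm:cont_approx} entirely: the telescoping identity $\rho_\B/\rho_\A=\exp\left(-\sum_{i=1}^{N-1}\Theta_N(\sfrac{i}{N})\right)$ is \emph{exact} for any SBD process (it is the classical birth-death ratio identity), and the theorem reduces to the sign of a Riemann sum, which regularity ($\epsilon_N\to0$) controls. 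Your route is more elementary and, at the identification step, more rigorous --- the invasion probabilities are values of the discrete $\Phi_N$, not of $\phi_N$, so the paper is implicitly invoking the relative-error bound \eqref{eqn:fit_is_un} there --- and it shows that the sign statement needs only regularity, not weak-selection. What the paper's argument buys in exchange is the quantitative formula $\rho_\B/\rho_\A\sim\exp\left(\kappa_N^{-1}\F(1)\right)$ underlying the claim $C=\sfrac{\rho_\B}{\rho_\A}$ of Remark~\ref{rmk:cc}; to extract that from your exact identity one does need weak-selection together with $\kappa_N^{-1}\epsilon_N\to0$, so that $\exp\left(\kappa_N^{-1}\bo{\sfrac{1}{N},\epsilon_N}\right)\to1$, as your side remark implicitly requires. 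Both arguments share the same caveat at the threshold: when $\F(1)=0$ neither controls the sign of $\rho_\A-\rho_\B$, so the ``if and only if'' must be read asymptotically for $\F(1)$ bounded away from zero, a point you flag explicitly and the paper leaves tacit.
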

\begin{proof}

First, we observe that
\begin{align*}
\rho_\A&=\phi_N(\sfrac{1}{N})\\
&=\frac{1}{Nd_N}\left(1+\frac{\kappa_N^{-1}}{N}\exp(\kappa_N^{-1}\F(\chi^0_N))\right),&&0<\chi_N^0<\sfrac{1}{N}.
\end{align*}
Similarly, we have that
\begin{align*}
\rho_\B&=1-\phi_N(1-\sfrac{1}{N})\\
&=\frac{1}{Nd_N}\left(\exp(\kappa_N^{-1}\F(1))-\frac{\kappa_N^{-1}}{N}\exp(\kappa_N^{-1}\F(\chi_N^1))\right),&&1-\sfrac{1}{N}<\chi_N^1<1.
\end{align*}
Thus, we can write
\begin{align*}
\frac{\rho_\A}{\rho_\B}&=\exp\left(-\kappa_N^{-1}\F(1)\right)\frac{1+\sfrac{\kappa_N^{-1}}{N}\exp(\kappa_N^{-1}\F(\chi_N^0))}{1-\sfrac{\kappa_N^{-1}}{N}\exp\left(-\kappa_N^{-1}\left(\F(1)-\F(\chi_N^1)\right)\right)}\\
&=\exp\left(-\kappa_N^{-1}\F(1)\right)\frac{1+\sfrac{\kappa_N^{-1}}{N}\exp(\kappa_N^{-1}\F(\chi_N^0))}{1-\sfrac{\kappa_N^{-1}}{N}\exp(-\kappa_N^{-1}(1-\chi_N^1)\theta(\zeta_N))},&& \chi_N^1<\zeta_N<1.
\end{align*}
As $N\to\infty$,  we have that $\chi_N^0\to0$, $\chi_N^1\to1$ (and hence also that $\zeta_N\to1$). Further, since we are in the weak-selection regime,  we also have that $\sfrac{\kappa_N^{-1}}{N}\to0$ (and hence  also that $\kappa_N^{-1}(1-\chi_N^1)\to0$).  Therefore, we have that 
\[
\frac{\rho_\A}{\rho_\B}=\exp(-\kappa_N^{-1}\F(1))\left(1+\bo{\frac{\kappa_N^{-1}}{N}}\right),
\]
from which the result follows.

\end{proof}

We will now recover a result first proved in \citet{Kurokawa:Ihara:2009} (see also \cite{Gokhale:Traulsen:2010})

\begin{proposition}[Risk dominance for $d-$ person games]
In a large population, in the  weak-selection regime, we have that strategy $\A$ is risk-dominant if, and only if,
\[
\sum_{k=0}^{d-1}a_k>\sum_{k=0}^{d-1}b_k.
\]
\end{proposition}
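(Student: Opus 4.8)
The plan is to deduce this directly from Theorem~\ref{thm:rd}, which already reduces risk dominance of $\A$ in the weak-selection regime to the single sign condition $\F(1)<0$. First I would unwind the definition of the fitness potential, $\F(s)=-\int_0^s\theta(r)\,\rd r$, so that $\F(1)=-\int_0^1\theta(s)\,\rd s$. Consequently $\F(1)<0$ holds if, and only if, $\int_0^1\theta(s)\,\rd s>0$, and the whole problem collapses to determining the sign of this one integral.

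Next I would substitute the $d$-player expression for the formal limit $\theta$ recorded in \eqref{eq:dgames}, namely
\[
\theta(s)=\gamma\sum_{k=0}^{d-1}\binom{d-1}{k}s^k(1-s)^{d-1-k}(a_k-b_k),\qquad \gamma>0,
\]
and integrate term by term. The required Beta integrals were already computed in the proof of Proposition~\ref{prop:dgames}:
\[
\int_0^1 s^k(1-s)^{d-1-k}\,\rd s=\frac{(d-k-1)!\,k!}{d!}=\frac{1}{d}\binom{d-1}{k}^{-1}.
\]
Multiplying each term by $\binom{d-1}{k}$ cancels the binomial factor, and summing leaves
\[
\int_0^1\theta(s)\,\rd s=\frac{\gamma}{d}\sum_{k=0}^{d-1}(a_k-b_k).
\]

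Finally, since $\gamma>0$ and $d>0$, the integral is positive precisely when $\sum_{k=0}^{d-1}(a_k-b_k)>0$, i.e.\ when $\sum_{k=0}^{d-1}a_k>\sum_{k=0}^{d-1}b_k$, which by the first paragraph is exactly the condition $\F(1)<0$ for risk dominance of $\A$. I do not expect any genuine obstacle here: the argument is entirely mechanical once Theorem~\ref{thm:rd} is in hand, and the only points demanding a little care are keeping the sign bookkeeping straight through the relation $\F(1)=-\int_0^1\theta$ and noting that the positive normalising constant $\gamma$ does not affect the inequality. I would also make explicit that the weak-selection hypothesis is inherited from Theorem~\ref{thm:rd}, since that is where it is genuinely used.
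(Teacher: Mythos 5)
Your proposal is correct and follows essentially the same route as the paper's own proof: invoke Theorem~\ref{thm:rd} to reduce risk dominance to $\F(1)<0$, substitute the $d$-player form of $\theta$ from \eqref{eq:dgames}, and reuse the Beta integrals from the proof of Proposition~\ref{prop:dgames} to get $\F(1)=-\frac{\gamma}{d}\sum_{k=0}^{d-1}(a_k-b_k)$. If anything, your version is slightly more careful than the paper's, which silently drops the positive normalising constant $\gamma$ from the displayed formula for $\F(1)$; as you note, this constant is irrelevant to the sign and hence to the conclusion.
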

\begin{proof}
The fitness function is given by equation~(\ref{eq:dgames}), and repeating integrals from the proof of proposition~\ref{prop:dgames}, we find
\[
\F(1)=-\frac{1}{d}\sum_{k=0}^{d-1}(a_k-b_k)\ ,
\]
from which the result follows.
\end{proof}

\begin{remark}
 If $d=2$, then $\F(1)=-\frac{\gamma}{2}\left(a+b-c-d)\right)$, and $\F(1)<0$ if and only if  $a+b>c+d$. 
In the coordination case we write $\theta(x)=\gamma(x-\xs)$ (possibly for a different constant $\gamma$) and
$\F(1)=-\gamma\left(\frac{1}{2}-\xs\right)$.
We conclude that $\A$ is risk-dominant if, and only if, $\xs<\sfrac{1}{2}$~\citep{Kandori_Mailath_Rob,NowakSasakiTaylorFudenberg}.
In the coexistence case we can now write $\theta(x)=-\gamma(\xs- s)$ (possibly for a different choice of $\gamma$). Then
$\F(1)=-\gamma\left(\xs-\frac{1}{2}\right)$,
and the reverse conclusion holds. Both cases can be seen in an unified manner: \A is risk-dominant if and only if $\F(1)<0$.
\end{remark}

\subsection{Critical frequency}

We now want to extend the study of the critical frequency to regimes outside the quasi-neutral one. For convenience of presentation, we focus in the case of weak-selection, with leading-order linear log-difference fitness. In this case, the asymptotic solution given by \eqref{eqn:fp_asymp:cd} is an exact solution of \eqref{eqn:fit_is} for all $\kappa$. Thus,   we can use it to investigate the critical frequency numerically. The result of such an  investigation is shown in Figure~\ref{fig:cf }.

\begin{figure}[htbp]
\subfloat[Large variance end of the critical frequency curve.  \label{fig:cf_a}]{\includegraphics[scale=0.35]{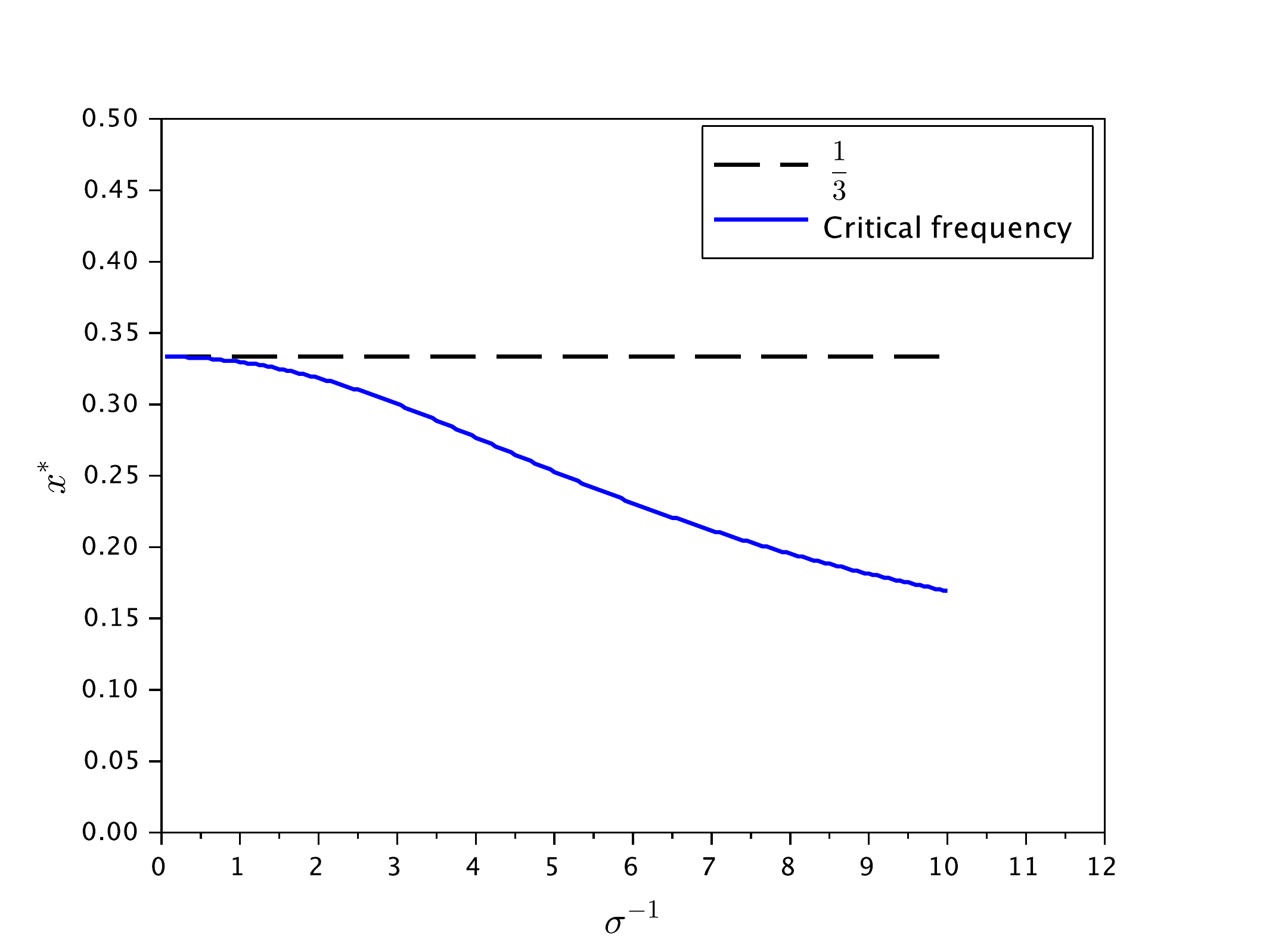}}
\subfloat[Small variance end of the critical frequency curve.  \label{fig:cf_b}]{\includegraphics[scale=0.35]{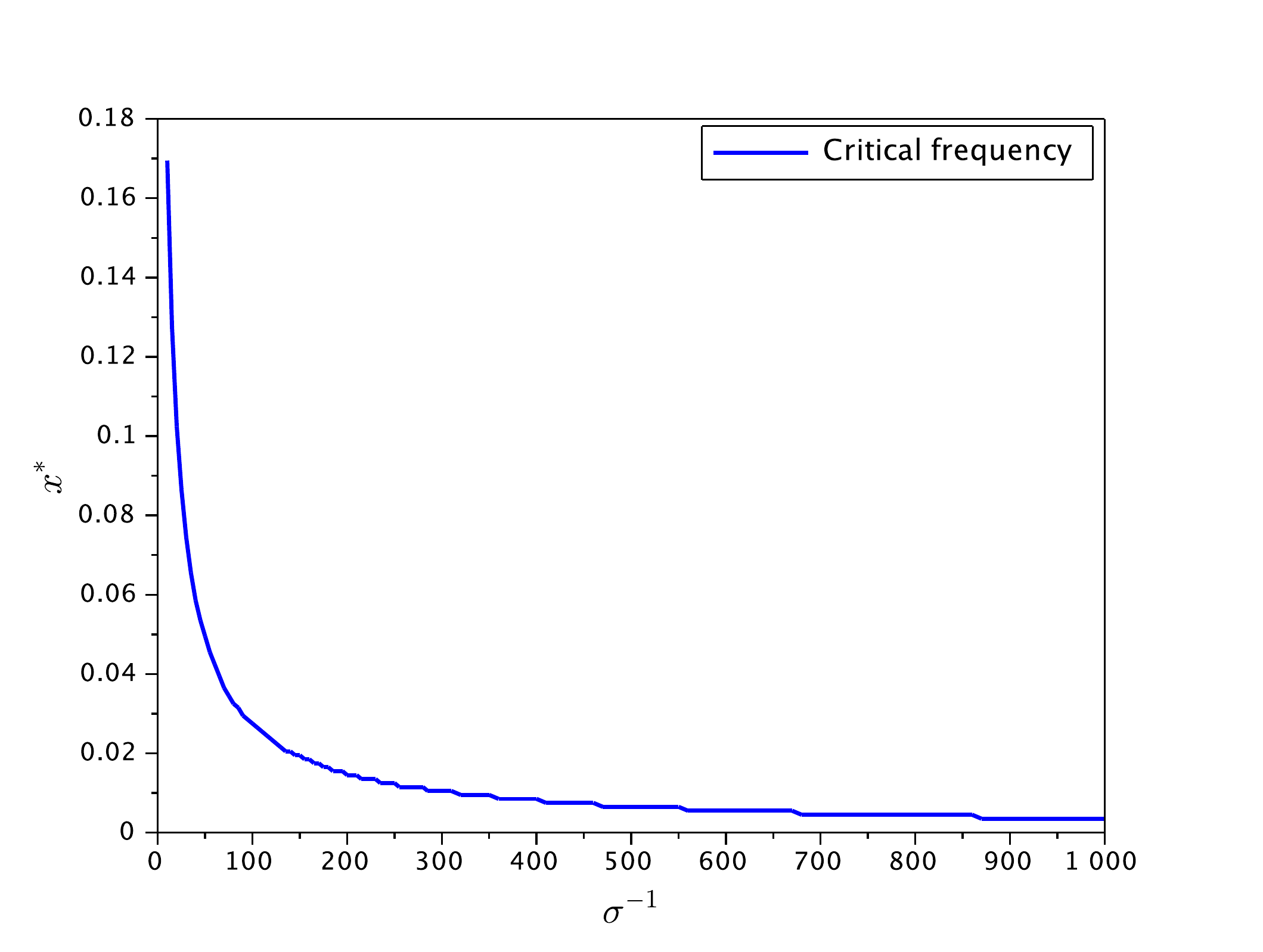}}
\caption{Critical curves obtained using the fixation probability  given by \eqref{eqn:fp_asymp:cd} and  $N=10000$. Notice that for very large variance the critical frequency approaches $\sfrac{1}{3}$ as predicted by the one-third law. Additionally, notice that for order one variance, the critical frequency is still significant, being above 0.15. As the variance decreases, the critical frequency diminishes, but even at variances small as $10^{-6}$, the critical frequency is about 0.0035. For a variance of order $10^{-4}$, which would be the typical variance in the absence of weak-selection,  the critical frequency is about 0.03. \label{fig:cf }}
\end{figure}

These results suggest a number of consequences as noted in the following:
\begin{enumerate}
\item As observed in Remark~\ref{rmk:otl}, the one-third law states that  the critical frequency for evolutionary dynamics in the quasi-neutral regime is $\sfrac{1}{3}$. The numerical results shown in Figure~\ref{fig:cf_a} indicate that in the case of an evolutionary dynamic  in the balanced regime,  the critical frequency is below one-third, but still at a significant level.
\item For evolutionary dynamics in the selection-driven regime, the picture in Figure~\ref{fig:cf_b} is universal---as a consequence of the asymptotic results in Section~\ref{sec:sdr_asymp}---and suggests that even when $\kappa_N^{-1}$ is moderately small , the critical frequency can still be an issue for the selection of a Nash equilibrium in the finite population case.
\end{enumerate}

\section{Discussion}
\label{sec:discuss}

We presented a discussion about fixation for large, but finite populations within a continuous perspective. This was possible by the derivation of a continuous approximation that is valid in a large range of evolutionary regimes, at the expense of requiring further  regularity of the logarithm of the relative fitness. With this derivation, we can identify a number of evolutionary regimes with and without weak-selection assumptions. In the latter case, we obtain a selection-driven regime. In the former case, the possibilities are larger: we can also have a selection-driven regime, but also a balanced evolution or a quasi-neutral regime.  In particular, it seems that there is some confusion in the literature regarding the concepts of weak-selection  and quasi-neutral regimes. As we have seen, the latter implies the former, but not the other way round. In particular, \citet{NowakSasakiTaylorFudenberg} do need a quasi-neutral regime to derive the one-third law, as seen in section~\ref{sec:ess_lp}, and most population-genetics literature defines weak-selection as  quasi-neutrality---cf. \citet{Ewens_2004,WildTraulsen2007}.

The derivation carried out in Section~\ref{sec:setup} suggests that in the absence of weak-selection, the infinite limit of the Moran process is not the usual replicator dynamics but what might be called the ``log-replicator'' equation
\begin{equation}
\label{eqn:rd_wws}
\dot{x}=x(1-x)\log\left[\frac{\psi^\A(x)}{\psi^\B(x)}\right]
\end{equation}
Notice that \eqref{eqn:rd_wws} is topologically conjugated to the replicator dynamics, i.e. \eqref{eqn:rd_wws} has the same equilibria that the replicator, and the equilibria have the same stability properties; in addition the sense of time is also preserved. In other words, the qualitative  picture does not change in the infinite population setting. Notice, nevertheless, that such invariance will generally only hold for two types. For three or more types, the dynamics of the logarithm-replicator equation can be qualitatively very different from the replicator dynamics. In addition, there is quite a number of differences in the finite population case as discussed below.

We also presented asymptotic formulae for the fixation of probability, in the selection driven regime --- with or  without weak-selection. We consider mainly the case of at most a single interior equilibrium. For the dominance case, we recover the results of~\citet{Kimura,AntalScheuring_2006} in a unified way. For the coexistence case, however, the presented formulas seem to be novel, and show that typically one might expect coexistence dynamics from infinite population emerging from some kind of dominance at finite populations. For the case of coordination, we derive a formula that slightly extends the ones obtained by \citet{MobiliaAssaf2010,AssafMobilia2010} and also allows the equilibria to be close of the endpoints. 

Using these derived approximations, we showed what we called the near one-half law: with weak-selection, and when the leading order of the logarithm of the relative fitness is linear---and this is the case that leads to the replicator dynamics, cf.~\citet{ChalubSouza09b,ChalubSouza_JMB}---we show that, unless the equilibria is within a layer of $\bo{\kappa}$ of $\sfrac{1}{2}$, we have either dominance by $\A$ or by $\B$, depending on the position of the equilibrium relative to $\sfrac{1}{2}$, i.e. if its slightly larger or smaller, respectively. In this sense, while the equilibrium gives some indication about the fixation pattern, there is a phase transition (in the limit $N\to\infty$)
effect leading to dominance of one of the types. In the  small layer where the dominance switching takes place, and that we term the coexistence layer, both types have significant nonzero fixation probabilities. In the absence of weak-selection, the fixation pattern is even more varied. We present an example of two pay-off matrices, with Nash equilibrium located at $x=\sfrac{3}{4}$, but which display opposite fixation patterns in the finite population case. In particular, we show that adding a constant to the pay-off matrix can dramatically change the fixation behaviour. 

We also briefly study the case of multiple equilibria. Here the results are also new, and show that if we have a coordination equilibrium followed by a coexistence one, then the latter can be a blockage for the former. More precisely, depending on the landscape of the corresponding potential, one can have either a coordination like  or a dominance by \B like fixation pattern.  On the other way round, we see that the ESS can now act as tunnel, and allow for the evolution to bypass the evolutionary barrier imposed by the coordination equilibrium. Namely, depending once again on the potential landscape, one can have either a coordination like or dominance by \A fixation pattern.

Finally, we study the existence of  ESS in finite populations. Using the diffusive approximation, we formulate an ESS definition for large populations, and show that it is equivalent to the so-called $\textsc{ESS}_N$ condition. We also introduce the concept of critical parameters, for studying the possible cases of ESS in parameter space. For linear fitness differences, this amounts to describe what we call the critical frequency for a given variance. Following that, we obtain an asymptotic approximation for the fixation in the quasi-neutral regime, and use this to obtain a rather general condition for a strategy that opposes invasions of a mutant to be an ESS. The condition is an integral one, and shows that the whole behaviour of the logarithm of the relative fitness is important, rather than just local information as in the selection-driven regime. In the special case of leading order linear fitness differences, we recover the celebrated one-third law~\citep{NowakSasakiTaylorFudenberg}, while for $d$-player games we recover the so-called generalised one-third law \citep{Gokhale:Traulsen:2010,Lessard:2011}.  We then proceed further to study what happens outside the quasi-neutral regime. We then focus on the case of linear fitness for convenience of presentation, and provides a numerical study of what we call the critical curve. This shows that even outside the quasi-neutral regime, the critical frequency can be significantly non zero, and this might have significant implications in  understanding some aspects of evolutionary dynamics. As an additional result, we give a continuous formulation of a risk-dominant strategy and show that it recovers the corresponding definition for $d$-player games \citep{Kurokawa:Ihara:2009}.

%
%


\appendix

\section{Derivation of \eqref{eqn:fit_is}}
\label{ap:deriv:fit_is}

First, observe that
\[
 \prod_{r\in[\sfrac{1}{N},s]_{N}}\frac{\Delta^-_N(r)}{\Delta^+_N(r)}=\exp\left(\sum_{r\in[\sfrac{1}{N},s]_{N}}\log\left(\frac{\Delta^-_N(r)}{\Delta^+_N(r)}\right)\right)=\exp\left(-\sum_{r\in[\sfrac{1}{N},s]_{N}}\Theta_N(r)\right).
\]

Now we observe that
\begin{align*}
\sum_{r\in[\sfrac{1}{N},s]_{N}}\Theta_N(r)&=\|\Theta_N\|_\infty\sum_{r\in[\sfrac{1}{N},s]_{N}}\frac{\Theta_N(r)}{\|\Theta_N\|_\infty}\\
&=\|\Theta_N\|_\infty\sum_{r\in[\sfrac{1}{N},s]_{N}}\left(\frac{\Theta_N(r)}{\|\Theta_N\|_\infty}-\theta(r)\right)+\kappa^{-1}_N\sum_{r\in[\sfrac{1}{N},s]_{N}}\frac{\theta(r)}{N}.
\end{align*}
The last sum can be interpreted as a Riemann sum in two different ways: either as a right sum, or as a midpoint sum.  The classical error bounds for the Riemann sums \citep{Atkinson1989,SB2002}, are as follows:
\[
\left\|\frac{\theta(x_0+\sfrac{1}{N})}{N}-\int_{x_0}^{x_0+\sfrac{1}{N}}\theta(r)\,\rd r\right\|\leq \frac{\theta'(c)}{2N^2},\quad c\in (x_0,x_0+\sfrac{1}{N}),
\]
for the  right rule, and 
\[
\left\|\frac{\theta(x_0+\sfrac{1}{N})}{N}-\int_{x_0-\delta_N}^{x_0+\delta_N}\theta(r)\,\rd r\right\|\leq 
\frac{\theta''(c)}{24N^3},\quad c\in (x_0-\delta_N,x_0+\delta_N),
\]
for the midpoint rule, where
\[
\delta_N=\frac{1}{2N}.
\]
 These bounds yield the following  simple bounds:

\[
\left\|\sum_{r\in[\sfrac{1}{N},s]_{N}}\frac{\theta(r)}{N}-\int_0^s\theta(r)\,\rd r\right\|_\infty\leq \frac{\|\theta'\|_\infty}{2N};
\]
for the former, whereas, in the latter, we have
\[
\left\|\sum_{r\in[\sfrac{1}{N},s]_{N}}\frac{\theta(r)}{N}-\int_{\delta_N}^{s+\delta_N}\theta(r)\,\rd r\right\|_\infty\leq 
\frac{\|\theta''\|_\infty}{24N^2}.
\]
In addition, we also have that
\[
\left\|\sum_{r\in[\sfrac{1}{N},s]_{N}}\left(\frac{\Theta_N(r)}{\|\Theta_N\|_\infty}-\theta(r)\right)\right\|_\infty\leq N\epsilon_N.
\]
Combining these two results, we find that 
\[
\sum_{r\in\left[\sfrac{1}{N},s-\sfrac{1}{N}\right]_N}\Theta_N(r)=\kappa^{-1}_N\int_{\delta_N}^{s-\delta_N}\theta(r)\,\rd r + \kappa^{-1}_N\epsilon_NR_1^0(s)+\kappa^{-1}_NN^{-2}R_2^0(s), 
\]
where
\[
\kappa_N^{-1}\epsilon_NR_1^0(s)=\|\Theta_N\|_\infty\sum_{r\in[\sfrac{1}{N},s]_{N}}\left(\frac{\Theta_N(r)}{\|\Theta_N\|_\infty}-\theta(r)\right)
\quad
\text{and}
\quad
\kappa_N^{-1}N^{-2}R_2^0(s)=\sum_{r\in[\sfrac{1}{N},s]_{N}}\frac{\theta(r)}{N}-\int_{\delta_N}^{s+\delta_N}\theta(r)\,\rd r,
\]
and hence we have that $\|R_1^0\|_\infty$ and $\|R_2^0\|_\infty$ are bounded uniformly in $N$.
Recalling that
\[
\F(s)=-\int_0^s\theta(r)\,\rd r,
\]
we then  have that
\begin{align*}
\sum_{r\in\left[\sfrac{1}{N},s-\sfrac{1}{N}\right]_N}\Theta_N(r)&= -\kappa^{-1}_N\F(s-\delta_N) +\underbrace{\kappa^{-1}_N\F(\delta_N)}_{\C_N} + \underbrace{\kappa^{-1}_N\epsilon_NR_1(s)+\kappa^{-1}_NN^{-2}R_2(s)}_{\G_N(s)} \\ 
&=-\kappa^{-1}_N\F(s-\delta_N)+\C_N+\G_N(s).
\end{align*}
Thus
\begin{align*}
\sum_{s\in[\sfrac{1}{N},x]_N}\prod_{r\in[\sfrac{1}{N},s-\sfrac{1}{N}]_N}\frac{\Delta^-_N(r)}{\Delta^+_N(r)}&=\exp(-\C_N)\sum_{s\in[\sfrac{1}{N},x]_N}\exp\left(\kappa_N^{-1}\F(s-\delta_N)\right)\exp(-\G_N(s))\\
&=\exp(-\C_N)\exp(\kappa_N^{-1}\F(\bar{s}))\sum_{s\in[\sfrac{1}{N},x]_N}\exp\left(\kappa_N^{-1}\HH(s-\delta_N)\right)\exp(-\G_N(s)).
\end{align*}
where $\bar{s}$ is any point where the global maximum of $\F$ is attained, and $\HH(s)=\F(s)-\F(\bar{s})$.

Let
\[
\Upsilon_N=\kappa_N^{-1}\max\left\{\|R_1^0\|_\infty\epsilon_N,\|R_2^0\|_\infty N^{-2}\right\}.
\]
Since $\|\G_N\|_\infty\leq\Upsilon_N$, we can find $E_N(x)$, with $\|E_N(x)\|_\infty\leq\Upsilon_N$, such that
\[
\sum_{s\in[\sfrac{1}{N},x]_N}\prod_{r\in[\sfrac{1}{N},s-\sfrac{1}{N}]_N}\frac{\Delta^-_N(r)}{\Delta^+_N(r)}=\exp(-\C_N)\exp(\kappa_N^{-1}\F(\bar{s}))\sum_{s=[\sfrac{1}{N},x]_N}\exp\left(\kappa_N^{-1}\HH(s-\delta_N)\right)(1+E_N(x)).
\]
Therefore, we have
\begin{align*}
&\sum_{s\in[\sfrac{1}{N},x]_N}\prod_{r\in[\sfrac{1}{N},s-2\delta_N]_N}\frac{\Delta^-_N(r)}{\Delta^+_N(r)}=\\
&\qquad=N\exp(-C_N)\exp\left(\kappa_N^{-1}\F(\bar{s})\right)\sum_{s\in[\sfrac{1}{N},x]_N}\frac{1}{N}\exp\left(\kappa_N^{-1}\HH(s-\delta_N)\right)(1+E_N(x))\\
&\qquad=N\exp(-\C_N)\exp\left(\kappa_N^{-1}\F(\bar{s})\right)\left[\int_{\delta_N}^{x+\delta_N}\exp\left(\kappa_N^{-1}\HH(s-\delta_N)\right)\,\rd s + R_N(x)\right](1+E_N(x))\\
&\qquad=N\exp(-\C_N)\exp\left(\kappa_N^{-1}\F(\bar{s})\right)\left[\int_{0}^{x}\exp\left(\kappa_N^{-1}\HH(s)\right)\,\rd s + R_N(x)\right](1+E_N(x))
\end{align*}
with
\[
R_N(x)=\frac{\kappa_N^{-1}}{24N^3}\sum_{j=1}^m\exp\left(\kappa_N^{-1}\HH(\bs_j)\right)\left(\kappa_N^{-1}\theta^2(\bs_j)-\theta'(\bs_j)\right), 
\]
where $\bs_j\in (\sfrac{1}{2N}+\sfrac{(j-1)}{N},\sfrac{1}{2N}+\sfrac{j}{N})$, and with $m=\lfloor xN\rfloor$.

If we write
\[
I_N(x)=\int_{0}^{x}\exp(\kappa_N^{-1}\HH(s))\,\rd s,
\]
then, by combining all the previous calculations,  we obtain the following approximation:
\begin{align*}
\Phi_N(x)&=\frac{I_N(x)+R_N(x)}{I_N(1)+R_N(1)}\times\frac{1+E_N(x)}{1+E_N(1)}
=\frac{\sfrac{I_N(x)}{I_N(1)}+\sfrac{R_N(x)}{I_N(1)}}{1+\sfrac{R_N(1)}{I_N(1)}}\times\frac{1+E_N(x)}{1+E_N(1)}\\
&=\frac{\phi_N(x)+\QQ_N(x)}{1+\D_N}\times\frac{1+E_N(x)}{1+E_N(1)}
\end{align*}
where
\[
\D_N=\frac{R_N(1)}{I_N(1)}, \quad \QQ_N(x)=\frac{R_N(x)}{I_N(1)},
\]
In addition, we have also used that
\[
\phi_{N}(x)=\frac{I_N(x)}{I_N(1)}=d_N^{-1}\int_0^x\exp\left(\kappa_N^{-1}\F(s)\right)\,\rd s,\quad d_N=\int_0^1\exp\left(\kappa_N^{-1}\F(s)\right)\,\rd s.  
\]

If $\kappa_N^{-1}$ has a finite limit as $N\to\infty$, then 
\[
\left| \kappa_N^{-1}\theta^2(x)-\theta'(x)  \right|\leq C.
\]
Hence
\begin{equation*}
|R_N(x)|\leq \frac{Cm}{N^3}\leq \frac{C}{N^2}.
\end{equation*}
For $x>\sfrac{1}{N}$, since $I_N(x)>I_N(\sfrac{1}{N})$, we have
\[
|\QQ_N(x)|\leq C\frac{\sfrac{1}{N^2}}{\exp(\kappa_N^{-1}\HH(0))\sfrac{1}{N}+\bo{\sfrac{1}{N^2}}}\leq \frac{C}{N}.
\]
This proves Equation~\eqref{eqn:fit_is_un}.

For the remaining results, we first observe that an asymptotic argument using Watson's lemma along the lines discussed in  Section~\ref{sec:sdr_asymp} and Appendix~\ref{ap:asymp_proof} yields
\[
\left|I_N(1)\right|=
\left\{
\begin{array}{lr}
C\kappa_N+\bo{\kappa_N^2},&\text{if }\F\text{ is a boundary potential};\\
C\kappa_N^{1/2}+\bo{\kappa_N},&\text{if }\F\text{ is an interior  potential}.
\end{array}
\right.
\]
To bound $R_N(x)$ we will need the following Lemma:
\begin{lemma}
If $\kappa_N^{-1}$ is not bounded as $N\to\infty$, then we have
\[
\left|R_N(x)\right|\leq C
\left\{
\begin{array}{lr}
\frac{\kappa_N^{-1}}{N^2},&\F\text{ is a boundary potential;}\\
\frac{\kappa_N^{-1/2}}{N^2},&\F \text{ is an interior potential}.
\end{array}
\right.
\]
\end{lemma}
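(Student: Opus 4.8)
The plan is to first recognise that the awkward summand defining $R_N$ is, up to the factor $\kappa_N$, simply the second derivative of the exponential weight. Writing $g(s)\bydef\exp(\kappa_N^{-1}\HH(s))$ and recalling that $\HH'=-\theta$ and $\HH''=-\theta'$, a direct computation gives
\[
g''(s)=\kappa_N^{-1}g(s)\bigl(\kappa_N^{-1}\theta^2(s)-\theta'(s)\bigr),
\]
so that $g(\bs_j)\bigl(\kappa_N^{-1}\theta^2(\bs_j)-\theta'(\bs_j)\bigr)=\kappa_N\,g''(\bs_j)$ and therefore
\[
R_N(x)=\frac{1}{24N^3}\sum_{j=1}^m g''(\bs_j),\qquad m=\lfloor xN\rfloor.
\]
This identity is the crux: it turns the estimate into a statement about a midpoint Riemann sum of $g''$, and it treats both regimes uniformly through the exponential concentration of $g$ near the maximiser $\bs$ of $\F$.

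Next I would control the sum by its integral counterpart. Since the $\bs_j$ are cell midpoints, I would bound $\bigl|\tfrac1N\sum_{j=1}^m g''(\bs_j)\bigr|\le\tfrac1N\sum_{j=1}^m|g''(\bs_j)|$ and compare the latter with $\int_0^1|g''(s)|\,\rd s$. The two quantities I then need are $\int_0^1|g''|$ and $\|g''\|_\infty$, both obtained by the same Laplace/Watson asymptotics already invoked for $|I_N(1)|$ above. For a \emph{boundary} potential the peak of $g$ sits at an endpoint where $\theta\neq0$ (using the hypothesis that $\theta$ does not vanish at the boundaries), so $g''\sim\kappa_N^{-2}\theta^2 g$ there and one finds $\int_0^1|g''|=\bo{\kappa_N^{-1}}$ and $\|g''\|_\infty=\bo{\kappa_N^{-2}}$. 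For an \emph{interior} potential the peak sits at the zero $\xs$ of $\theta$, so the $\kappa_N^{-1}\theta^2$ term is quadratically suppressed near the maximum; this is the decisive point, yielding the improved orders $\int_0^1|g''|=\bo{\kappa_N^{-1/2}}$ and $\|g''\|_\infty=\bo{\kappa_N^{-1}}$.

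Finally I would assemble the pieces. Using $|R_N(x)|\le\frac{1}{24N^2}\bigl(\int_0^1|g''|+C\,N^{-1}\|g''\|_\infty\bigr)$ together with the orders above, the claimed bounds follow provided the Riemann-sum error term $N^{-1}\|g''\|_\infty$ is dominated by $\int_0^1|g''|$. This is where the standing hypotheses enter: since $\kappa_N^{-1}=N\|\Theta_N\|_\infty$ and $\|\Theta_N\|_\infty<M$, we have $\kappa_N^{-1}\le MN$, whence $N^{-1}\kappa_N^{-2}\le M\kappa_N^{-1}$ in the boundary case and $N^{-1}\kappa_N^{-1}=\bo{\kappa_N^{-1/2}}$ in the interior case (as $\kappa_N^{-1}\to\infty$). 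I expect the main obstacle to be the Riemann-sum comparison itself: when $\kappa_N^{-1}\sim N$ the width of the peak of $g$ is comparable to the grid spacing, so $g''$ is barely resolved and one cannot simply invoke a midpoint error bound involving $\|g^{(4)}\|_\infty$, which carries uncontrollable powers of $\kappa_N^{-1}$. I would instead exploit that $g''$ has a number of monotone pieces bounded uniformly in $N$ --- a consequence of $\theta\in C^2$ with finitely many zeros --- so that the cellwise discrepancy telescopes and is controlled by a single factor $N^{-1}\|g''\|_\infty$, precisely the term absorbed in the last step.
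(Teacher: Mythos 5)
Your opening identity is exactly the one the paper's proof is built on: with $g(s)=\exp(\kappa_N^{-1}\HH(s))$ one has $\kappa_N g''=\exp(\kappa_N^{-1}\HH)\bigl(\kappa_N^{-1}\theta^2-\theta'\bigr)=-\bigl(\exp(\kappa_N^{-1}\HH)\,\theta\bigr)'$, and your Laplace orders for $\int_0^1|g''|$ and $\|g''\|_\infty$ in the two regimes are correct, as is the final absorption step via $\kappa_N^{-1}=N\|\Theta_N\|_\infty\leq MN$. The difference lies in what is done with the identity: the paper keeps the signs and integrates $g''$ exactly, so the leading contribution becomes the boundary term $\exp(\kappa_N^{-1}\HH(0))\theta(0)-\exp(\kappa_N^{-1}\HH(x))\theta(x)$, and only the sum-versus-integral discrepancy --- whose summand is $\kappa_N^2 g'''$ --- remains to be estimated; for that, crude bounds suffice (a geometric, exponential-decay summation at a boundary maximum, and a critical-point computation showing the relevant envelope is $\bo{\kappa_N^{1/2}}$ at an interior maximum). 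You instead take absolute values immediately, which forces you to prove $\tfrac1N\sum_j|g''(\bs_j)|\leq\int_0^1|g''|+\text{small}$, and this is where your argument has a genuine gap.

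The gap is the claim that $g''$ has a number of monotone pieces bounded uniformly in $N$ ``as a consequence of $\theta\in C^2$ with finitely many zeros.'' This does not follow. Monotone pieces of $g''$ are delimited by sign changes of $g'''=\kappa_N^{-1}g\bigl(-\kappa_N^{-2}\theta^3+3\kappa_N^{-1}\theta\theta'-\theta''\bigr)$, and near an interior zero $\xs$ of $\theta$ the two terms carrying negative powers of $\kappa_N$ degenerate: in a region of size $\ord(\kappa_N^2)$ around $\xs$, the term $\theta''$ dominates the other two wherever it is not small, so sign changes of $\theta''$ force sign changes of $g'''$. Since $\theta''$ is only assumed continuous, it may change sign infinitely often in every neighbourhood of $\xs$ --- for instance $\theta''(s)=|s-\xs|^{1/2}\sin\bigl(1/(s-\xs)\bigr)$ is compatible with $\theta\in C^2$, a simple zero at $\xs$, and finitely many zeros of $\theta$ --- so no uniform bound on the number of monotone pieces exists under the stated hypotheses. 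The repair is cheap and stays within your scheme: your instinct to avoid $\|g^{(4)}\|_\infty$ is sound (it need not even exist, as $\theta\in C^2$ only gives $g\in C^3$), but the right tool is total variation rather than monotone pieces. Bound the cellwise discrepancy by $N^{-1}\mathrm{TV}_{[0,1]}(g'')=N^{-1}\int_0^1|g'''|$; the same Laplace computation you used for $\int_0^1|g''|$ gives $\int_0^1|g'''|=\bo{\kappa_N^{-2}}$ for a boundary potential and $\bo{\kappa_N^{-1}}$ for an interior one, and these are absorbed exactly as in your last step, using $\kappa_N^{-1}\leq MN$. With that single replacement your proof closes and yields the stated bounds.
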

Thus, we immediately obtain combining the asymptotic estimations together with the Lemma that:
\[
\left|\D_N\right|,\left|\QQ_N(x)\right|\leq 
\left\{
\begin{array}{lr}
C\frac{\kappa_N^{-2}}{N^2},&\text{if }\F\text{ is a boundary potential};\\
C\frac{\kappa_N^{-1}}{N^2},&\text{if }\F\text{ is an interior  potential}.
\end{array}
\right.
\]
Notice also that if $\Phi_N(x)$ is exponentially small then we must have $\F(s)<0$, $s\in (0,x)$. Hence we also have that $\phi_N(x)$ is exponentially small and also $R_N(x)$ is exponentially small.  This proves Equation~\eqref{eqn:asymp_frm}.

To prove Equation~\eqref{eqn:fit_is_nb}, notice that if $\kappa_N^{-1}$ is not bounded, and if $\phi_N(x)=\sfrac{1}{N}$, then 
\[
I_N(x)=\frac{1}{N}\left\{
\begin{array}{lr}
\kappa_N,& \text{if }\F\text{ is a boundary potential};\\
\kappa_N^{1/2},&\text{if }\F\text{ is an interior  potential}.
\end{array}
\right.
\]
Hence
\[
|\tilde{\QQ}_N(x)|\leq \left\{
\begin{array}{lr}
\frac{\kappa_N^{-2}}{N},& \text{if }\F\text{ is a boundary potential};\\
\frac{\kappa_N^{-1}}{N},&\text{if }\F\text{ is an interior  potential}.
\end{array}
\right.
\]
Thus the continuous approximation can correctly identify the neutral boundary, provided $\kappa_N=\bo{N^{\alpha}}$, with $\alpha<\sfrac{1}{2}$, if $\F$ is a boundary potential, or that evolution is in the moderate selection regime, if $\F$ is an interior potential. 

\begin{proof}[Proof of the Lemma]
We now observe that 
\begin{align*}
R_N(x)&=\frac{\kappa_N^{-1}}{24N^3}\sum_{j=1}^m\exp\left(\kappa_N^{-1}\HH(\bs_j)\right)\left(\kappa_N^{-1}\theta^2(\bs_j)-\theta'(\bs_j)\right)\\
&=\frac{\kappa_N^{-1}}{24N^2}\left[\int_0^x\exp\left(\kappa_N^{-1}\HH(s)\right)\left(\kappa_N^{-1}\theta^2(s)-\theta'(s)\right)\,\rd s + \RR_N(x)\right]\\
&=\frac{\kappa_N^{-1}}{24N^2}\left[\exp\left(\kappa_N^{-1}\HH(0)\right)\theta(0)-\exp\left(\kappa_N^{-1}\HH(x)\right)\theta(x)+  \RR_N(x)\right],
\end{align*}
where
\[
\RR_N(x)=\frac{\kappa_N^{-1}}{2N^2}\sum_{j=1}^m\exp\left(\kappa_N^{-1}\HH(\hs_j)\right)\left(-\kappa_N^{-1}\theta^3(\hs_j)+3\theta(\hs_j)\theta'(\hs_j)-\kappa_N\theta''(\hs_j)\right),
\]
with $\hs_j\in (\sfrac{1}{2N}+\sfrac{(j-1)}{N},\sfrac{1}{2N}+\sfrac{j}{N})$.

If $\kappa_N^{-1}$ is bounded then, we can bound 
\[
\left\|\exp\left(\kappa_N^{-1}\HH(x)\right)\left(-\kappa_N^{-1}\theta^3(x)+3\theta(x)\theta'(x)-\kappa_N\theta''(x)\right)\right\|_\infty <C,\quad \text{independent of }N.
\]
Hence, we have that
\[
|\RR_N(x)|\leq \frac{C}{N}.
\]
Otherwise, if $\kappa_N^{-1}$ is not bounded, we have the following bounds:
\[
\left|\RR_N\right(x)|\leq 
\left\{
\begin{array}{lr}
C\frac{\kappa_N^{-1}}{N^2},&\text{if }\F\text{ is a boundary potential};\\
C\frac{\kappa_N^{-1/2}}{N},&\text{if }\F\text{ is an interior  potential}.
\end{array}
\right.
\]
Indeed, if $\F$ is a boundary potential, then we have either that $\HH(0)=0$ or that $\HH(1)=0$. We we will treat the former, the latter being similar. In this case, let $\st$ be the smallest interior global minimum, if it exists, or $\st=1$ if there is no interior global minimum. Then there exists  $\tK$ and 
$0<\tx=\sfrac{m}{N}<\st$, such that
\[
\tK s\leq-\HH(s),\quad s\in [0,\tx].
\]
Then
\begin{align*}
\left|\sum_{j=0}^m\exp\left(\kappa_N^{-1}\HH(\hs_j)\right)\left(-\kappa_N^{-1}\theta^3(\hs_j)+3\theta(\hs_j)\theta'(\hs_j)-\kappa_N\theta''(\hs_j)\right)\right|
&\leq \kappa_N^{-1}M'\sum_{j=0}^m\exp(-\kappa_N^{-1}\tK\bs_j)\\
&\leq \kappa_N^{-1}M'\int_0^\infty \exp(-\kappa_N^{-1}\tK s)\,\rd s\\
&= \tC_{-1},\quad \tC_{-1}=\ord(1).
\end{align*}

If $\F$ is an interior potential, let us write $\xs$ for any of its interior maxima. Recall that, in this case, we have $\theta(\xs)=0$ and $\theta'(\xs)>0$. Let
\[
J(x)=\exp\left(\kappa_N^{-1}\HH(x)\right)\left[-\kappa_N^{-1}\theta^3(x)+3\theta(x)\theta'(x)-\kappa_N\theta''(x)\right],\quad x\in [0,1].
\]
We claim that $\|J\|_\infty=\bo{\kappa_N^{1/2}}$.  To see this, let 
\[
\tilde{J}(x)=\exp\left(\kappa_N^{-1}\HH(x)\right)\left[-\kappa_N^{-1}\theta^3(x)+3\theta(x)\theta'(x)\right]
\]
and compute
\[
\tilde{J}'(x)=\exp\left(\kappa_N^{-1}\HH(x)\right)\left[\kappa_N^{-2}\theta^4(x)-6\kappa_N^{-1}\theta^2(x)\theta'(x)+3\left({\theta'}^2(x)+\theta(x)\theta''(x)\right)\right].
\]
Then $\tilde{J}'(x)=0$ is equivalent to
\[
\theta^4(x)-6\kappa_N\theta^2(x)\theta'(x)+3\kappa_N^{2}\left({\theta'}^2(x)+\theta(x)\theta''(x)\right)=0.
\]
Firstly,  we observe that we  are only interested in solutions close to $\xs$, since $\tilde{J}$ is exponentially small otherwise. An analysis of the magnitude of the terms in the previous equation, suggests that if $\bx$ is a solution, then $|\theta(\bx)|=\bo{\kappa_N^{1/2}}$. Since this problem is a regular perturbation --- but where we can not apply the implicit function theorem --- we  write
 \[
 \bx=\xs+\kappa_N^{1/2}x_1+\bo{\kappa_N}
 \]
 which yields the following equation for $x_1$:
 \[
\left( \theta'(\xs)\right)^4x_1^4-6\left(\theta'(\xs)\right)^3x_1^2+3\left(\theta'(\xs)\right)^2=0.
 \]
The solutions are
 \[
 x_1=\pm\left(\theta'(\xs)\right)^{-1/2}\sqrt{3\pm\sqrt{6}}.
 \]
 and it can be easily verified that two of these solutions correspond to local minima of $\tilde{J}$ that are close to $\xs$, while the two other correspond  to local maxima. In any case, a direct computation yields 

 \begin{align*}
 \HH(\bx)&=\HH(\xs)+\HH'(\xs)\kappa_N^{1/2}x_1+\frac{\kappa_N}{2}\HH''(\xs)x_1^2+\bo{\kappa_N^{3/2}}\\
 &=0-\theta(\xs)\kappa_N^{1/2}x_1-\frac{\kappa_N}{2}\theta'(\xs)x_1^2+\bo{\kappa_N^{3/2}}\\
 &=-\frac{\kappa_N}{2}\left(3\pm\sqrt{6}\right) +\bo{\kappa_N^{3/2}}.
 \end{align*}
 Hence
 \[
 \exp\left(\kappa_N^{-1}\HH(\bx)\right)=\exp\left(-\frac{\left(3\pm\sqrt{6}\right)}{2}+\bo{\kappa_N^{1/2}}\right).
 \]
 Also
 \begin{align*}
 -\kappa_N^{-1}\theta^3(\bx)+3\theta(\bx)\theta'(\bx)&=-\kappa_N^{-1}\left(\left(\theta'(\xs)\right)^3\kappa_N^{3/2}x_1^3+\bo{\kappa_N^2}\right)+3\kappa_N^{1/2}x_1\theta'(\xs)^2+\bo{\kappa_N}\\
 &=\kappa_N^{1/2}\left[-\left(\theta'(\xs)\right)^3x_1^3+3x_1\theta'(\xs)^2\right]+\bo{\kappa_N}\\
 &=\kappa_N^{1/2}\left[3\left(\theta'(\xs)\right)^{3/2}\left(3\pm\sqrt{6}\right)^{1/2}-\left(\theta'(\xs)\right)^{3/2} \left(3\pm\sqrt{6}\right)^{3/2}    \right]+\bo{\kappa_N}\\
 &=\kappa_N^{1/2}\left(\theta'(\xs)\right)^{3/2}\left[3\left(3\pm\sqrt{6}\right)^{1/2}- \left(3\pm\sqrt{6}\right)^{3/2}    \right]+\bo{\kappa_N},\\
 &=\mp\sqrt{6}\left(3\pm\sqrt{6}\right)^{1/2}\theta'(\xs)^{3/2}\kappa_N^{1/2}+\bo{\kappa_N}.
 \end{align*}
 Therefore, we have 
 \color{black}
 \[
 |\tilde{J}(\bx)|=\bo{\kappa_N^{1/2}}
 \]
 and hence we conclude that 
\[
\|\tilde{J}\|_\infty\leq C\kappa_N^{1/2}.
\]
Since we can easily bound
\[
\|\exp\left(\kappa_N^{-1}\HH\right)\kappa_N\theta''\|_\infty<C\kappa_N,
\]
we can conclude that
\[
\|J\|_\infty\leq \|\tilde{J}\|_\infty + \|\exp\left(\kappa_N^{-1}\HH\right)\kappa_N\theta''\|_\infty\leq C_1\kappa_N^{1/2}+C_2\kappa_N\leq C\kappa_N^{1/2}.  
\]
This yields the bounds on $\RR_N$. We now proceed to  estimate
\[
R_N(x)=\frac{\kappa_N^{-1}}{24N^2}\left[\exp\left(\kappa_N^{-1}\HH(0)\right)\theta(0)-\exp\left(\kappa_N^{-1}\HH(x)\right)\theta(x)+  \RR_N(x)\right].
\]
If  $\F$ is a boundary potential, we have that either $\HH(0)=0$ or $\HH(1)=0$, and hence
\[
|R_N(x)|\leq C\frac{\kappa_N^{-1}}{24N^2}.
\]
If $\F$ is an interior potential, let us write
\[
L(x)=\exp\left(\kappa_N^{-1}\HH(0)\right)\theta(0)-\exp\left(\kappa_N^{-1}\HH(x)\right)\theta(x).
\]
Then
\[
L'(x)=\exp\left(\kappa_N^{-1}\HH(x)\right)\left[\kappa_N^{-1}\theta^2(x)-\theta'(x)\right]
\]
Now notice that a solution $L'(x)=0$ is given by $\bx=\xs+\kappa_N^{1/2} x_1$. Direct substitution in $L$ then yields
\[
\|L\|_\infty\leq C\kappa_N^{1/2}.
\]
Hence, we obtain that 
\[
|R_N(x)|\leq C\frac{\kappa_N^{-1/2}}{N^2}.
\]
\end{proof}

\section{Proof of the asymptotic results}
\label{ap:asymp_proof}
Write equation~\eqref{eqn:asymp_frm} as
\[
\phi_\kappa(x)=\frac{I(x)}{I(1)},\qquad I(x)=\int_{0}^{x}\exp\left(\kappa^{-1}\F(s)\right)\,\rd s.
\]

\subsection{Dominance}
For dominance of $\A$, we have $\theta(x)>0$ in unit interval, and hence the argument of the exponential has a maximum at $s=0$. Let $s=\kappa z $. Then, using Laplace's method~\citep{Hinch1991,BenderOrszag1999}, we find 
\begin{align*}
I(x)&=\kappa\int_0^{x/\kappa}\exp\left(\kappa^{-1}\F(\kappa z )\right)\,\rd z\\
&=\frac{\kappa}{\theta(0)}\left(1-\exp\left(\frac{-\theta(0)x}{\kappa}\right)\right)+\bo{\kappa^2}.
\end{align*}
Hence, we have
\begin{equation}
\label{eqn:dombya:ap}
\phi_\kappa(x)=1-\exp\left(-\frac{\theta(0)x}{\kappa}\right)+\bo{\kappa}.
\end{equation}
For dominance of $\B$, recall that we have $\theta(x)<0$ throughout  $[0,1]$. Hence. the argument in exponential will then have a maximum at $s=1$. Thus, we write $s=1-\kappa z$ and, analogously as before, we find
\[
I(x)=-\frac{\kappa\exp(\kappa^{-1}\F(1))}{\theta(1)}\left[\exp\left(\frac{\theta(1)(1-x)}{\kappa}\right)-\exp\left(\frac{\theta(1)}{\kappa}\right)+\bo{\kappa}\right].
\]
Hence, we find
\begin{equation}
\label{eqn:dombyb:ap}
\phi_\kappa(x)=\exp\left(\frac{\theta(1)(1-x)}{\kappa}\right)+\bo{\kappa}.
\end{equation}
\subsection{Coexistence}

In the case of coexistence, the fitness potential has a minimum at $x=\xs$; hence we have no contribution from the interior. On the other hand,  $\theta$ is positive near $s=0$ and negative near $s=1$. Hence, the argument in the exponential has a maximum at both $s=0$ and $s=1$. Hence combining the previous calculations we find
\begin{equation}
\label{eqn:i:coex}
I(x)=\frac{\kappa}{\theta(0)}\left(1-\exp\left(\frac{-\theta(0)x}{\kappa}\right)\right)-\frac{\kappa\exp(\kappa^{-1}\F(1)}{\theta(1)}\left[\exp\left(\frac{\theta(1)(1-x)}{\kappa}\right)-\exp\left(\frac{\theta(1)}{\kappa}\right)+\bo{\kappa}\right].
\end{equation}
If $\F(1)\ll-\kappa$, then the second term of \eqref{eqn:i:coex} is exponentially small, and hence we obtain once again \eqref{eqn:dombya:ap}. On the other hand, if $\F(1)\gg\kappa$, the second term is then exponentially large, and in this case we obtain \eqref{eqn:dombyb:ap}.

Otherwise, if $\F(1)\sim\kappa$, let
\[
C=\exp(\F(1)/\kappa)\quad\text{and}\quad \gamma=\frac{|\theta(1)|}{\theta(0)}.
\]
We now have that \eqref{eqn:i:coex} becomes 
\[
I(x)=\frac{\kappa}{\theta(0)}\left(1-\exp\left(\frac{-\theta(0)x}{\kappa}\right)\right)+\frac{\kappa C}{|\theta(1)|}\exp\left(\frac{\theta(1)(1-x)}{\kappa}\right)+\bo{\kappa^2}.
\]
 Also, we have
\[
I(1)=\kappa\frac{|\theta(1)|+\theta(0)C}{\theta(0)|\theta(1)|} +\bo{\kappa^2}.
\]
Therefore,
\begin{equation}
\label{eqn:coex:ap}
\phi_\kappa(x)=\frac{\gamma}{\gamma+C}\left(1-\exp\left(\frac{-\theta(0)x}{\kappa}\right)\right) + \frac{C}{\gamma+C}\exp\left(\frac{\theta(1)(1-x)}{\kappa}\right)+ \bo{\kappa}.
\end{equation}
\subsection{Coordination}

For coordination, we have that the fitness potential has a maximum at $x=\xs$. Hence, we write
\[
\F(s)=\F(\xs)-\theta'(\xs)\frac{(s-\xs)^2}{2} +\bo{(s-\xs)^3}.
\]
Then, if we write
\[
z=\sqrt{\frac{\theta'(\xs)}{\kappa}}(s-\xs),
\]
Hence we have that
\begin{align*}
I(x)&=\sqrt{\frac{\kappa}{\theta'(\xs)}}\exp\left(\kappa^{-1}\F(\xs)\right)\int_{-\sqrt{\frac{\theta'(\xs)}{\kappa}}\xs}^{\sqrt{\frac{\theta'(\xs)}{\kappa}}(x-\xs)}\exp\left(-\frac{z^2}{2}\right)\,\rd z +\bo{\kappa}\\
&=\sqrt{\frac{2\pi\kappa}{\theta'(\xs)}}\exp\left(\kappa^{-1}\F(\xs)\right)\left[\frac{1}{\sqrt{2\pi}}\int_{-\infty}^{\sqrt{\frac{\theta'(\xs)}{\kappa}}(x-\xs)}\exp\left(-\frac{z^2}{2}\right)\,\rd z
+\int_{-\infty}^{-\sqrt{\frac{\theta'(\xs)}{\kappa}}\xs}\exp\left(-\frac{z^2}{2}\right)\,\rd z\right]+\bo{\kappa}\\
&=\sqrt{\frac{2\pi\kappa}{\theta'(\xs)}}\exp\left(\kappa^{-1}\F(\xs)\right)\left[\N\left(\sqrt{\frac{\theta'(\xs)}{\kappa}}(x-\xs)\right) - \N\left(-\sqrt{\frac{\theta'(\xs)}{\kappa}}\xs\right)\right]+\bo{\kappa}.
\end{align*}
Therefore, we find that
\[
\phi_\kappa(x)=\frac{\N\left(\sqrt{\frac{\theta'(\xs)}{\kappa}}(x-\xs)\right) - \N\left(-\sqrt{\frac{\theta'(\xs)}{\kappa}}xs\right)}{\N\left(\sqrt{\frac{\theta'(\xs)}{\kappa}}(1-\xs)\right) - \N\left(-\sqrt{\frac{\theta'(\xs)}{\kappa}}\xs\right)} +\bo{\sqrt{\kappa}}.
\]
\section{Proof of Theorem~\ref{thm:fix_qn}}
\label{ap:qnr_proof}
As before, we write
\[
\phi_\kappa(x)=\frac{I_\kappa(x)}{I_\kappa(1)},\qquad I_\kappa(x)=\int_0^x\exp(\F(s)/\kappa)\,\rd s.
\]
Write 
\[
\exp\left(\F(s)/\kappa\right)=1+\kappa^{-1}\F(s)+\kappa^{-2}\F(s)^2\sum_{l=0}^\infty\frac{1}{\kappa^{l}(l+2)!}\F(s)^{l}
\]
and integrate to obtain:
\[
I_\kappa(x)=x+\kappa^{-1}\int_0^x\F(s)\,\rd s+\kappa^{-2}H(x;\kappa),
\]
where
\[
H(x;\kappa)=\int_0^x\F(s)^2\sum_{l=0}^\infty\frac{1}{\kappa^l(l+2)!}\F(s)^l\,\rd s,
\]
which is order one. Notice that this will be also true for its derivatives.

Since $H$ is $C^3$ and $H(0,\kappa)=\partial_xH(0,\kappa)=\partial_x^2H(0,\kappa)=0$, we can invoke Hadamard Lemma, cf. \citet{BruceGiblin92}, and write
\[
H(x;\kappa)=x^3\HHH(x;\kappa),
\]
with $\HH$ being $C^2$.

Hence, we have
\[
\phi_\kappa(x)=x-\kappa^{-1}\left[x\int_0^1\F(s)\,\rd s-\int_0^x\F(s)\,\rd s\right] + \kappa^{-2}R(x;\kappa)+ \bo{\kappa^{-3}},
\]
where, 
\[
R(x;\kappa)=x\left[\int_0^1\F(s)\,\rd s\right]^2-x\HHH(1;\kappa)+x^3\HHH(x;\kappa)-\int_0^x\F(s)\,\rd s\int_0^1\F(s)\,\rd s.
\]
Since $R(0;\kappa)=0$, a further application of Hadamard Lemma yields
\[
R(x;\kappa)=x\RRR(x;\kappa),
\]
with $\RRR$ being $C^2$.

Finally, observe that integration by parts imply that
\[
\int_0^x\F(s)\,\rd s= -\int_0^x\F(s)\frac{\rd}{\rd s}\left[(x-s)\right]\,\rd s=\int_0^x(x-s)\F'(s)\,\rd s=-\int_0^x(x-s)\theta(s)\,\rd s.
\]

\bibliographystyle{jmb}
\bibliography{note}

\end{document}